%
\newif\iffull

\fullfalse


\documentclass[11pt]{article}

\usepackage{hyperref}

\usepackage{setspace}
\usepackage[\iffull notes=true, \else notes=false, \fi later=false]{dtrt}
\usepackage{fullpage}
\usepackage{microtype}
\usepackage{amssymb,amsfonts,amsmath,amsthm}
\usepackage{fullpage}
\usepackage{graphicx}
\usepackage{verbatim}
\usepackage{array}
\usepackage{latexsym}
\usepackage{paralist}
\usepackage{enumitem}
  \setlist[itemize]{leftmargin=*}
  \setlist[enumerate]{leftmargin=*}
\usepackage[capitalise,nameinlink]{cleveref}
\usepackage{dsfont}
\usepackage{url}
\usepackage{braket}
\usepackage{mathrsfs}
\usepackage{color}
\usepackage{complexity}
\usepackage{multirow}
\usepackage{makecell}
\usepackage{mdframed}
\usepackage[margin=1cm,small]{caption}
\usepackage[normalem]{ulem}
\usepackage[vlined,linesnumbered,ruled,resetcount]{algorithm2e}
\usepackage{algpseudocode}
\crefname{algocf}{Algorithm}{Algorithms}
\Crefname{algocf}{Algorithm}{Algorithms}
\SetKwInOut{InputExplicity}{Exlpicit input}
\SetKwInOut{InputOracle}{Oracle access}
\SetKwInOut{Output}{output}


\newtheorem{stheorem}{Theorem}
\newtheorem{hypothesis}{Hypothesis}
\newtheorem{corollary}{Corollary}[section]
\newtheorem{theorem}[corollary]{Theorem}

\newtheorem{definition}[corollary]{Definition}
\newtheorem{lemma}[corollary]{Lemma}
\newtheorem{claim}[corollary]{Claim}
\newtheorem{observation}[corollary]{Observation}

\newtheorem{question}[corollary]{Question}

\newtheorem*{lemma*}{Lemma}
\theoremstyle{definition}

\newtheorem{remark}[corollary]{Remark}

\definecolor{forestgreen}{rgb}{0.13, 0.55, 0.13}



\allowdisplaybreaks
\newcommand{\FormatAuthor}[3]{
\begin{tabular}{c}
#1 \\ {\small\texttt{#2}} \\ {\small #3}
\end{tabular}
}
\newcommand{\doclearpage}{
\iffull
\clearpage
\else
\fi
}
\newcommand{\defemph}[1]{\textbf{\emph{\textit{#1}}}}
\newcommand{\keywords}[1]{\bigskip\par\noindent{\footnotesize\textbf{Keywords\/}: #1}}


\newcommand{\colored}[1]{\ifthenelse{\boolean{color}}{\textcolor{red}{#1}}{#1}}
\newcommand{\green}[1]{\ifthenelse{\boolean{green}}{\textcolor{green}{#1}}{#1}}

\newcommand{\DefineEqual}{\colored{:=}}

\newcommand{\Alphabet}{\Sigma}
\newcommand{\Bits}{\colored{\{0,1\}}}
\newcommand{\Cardinality}[1]{\colored{\left|#1\right|}}
\newcommand{\Absolute}[1]{\colored{\left|#1\right|}}
\newcommand\abs[1]{\Absolute{#1}}

\newcommand{\SoundnessError}{\gamma}

\newcommand{\Event}{E}

\newcommand{\NSStrategy}{{\mathcal{F}}}
\newcommand{\NSStrategyRep}[1][\NumRep]{\NSStrategy^{(#1)}}
\newcommand{\CorrectNSStrategyRep}[1][\NumRep]{\Correct{{{\NSStrategy}^{(#1)}}}}

\newcommand{\MaxQueries}{{k}}

\newcommand{\QuasiLin}{\mathcal{L}}
\newcommand{\QuasiLinRep}[1][\NumRep]{\colored{\mathcal{L}^{(#1)}}}
\newcommand{\QuasiConsistent}[1][\NumRep]{\colored{\mathcal{C}^{(#1)}}}

\newcommand{\NumRep}{t}

\newcommand{\hyp}{{\mathsf{hyp}}}
\newcommand{\expHypothesis}{{\mathsf{e_{\hyp}}}}

\newcommand{\Distance}[1]{\colored{\Delta_{#1}}}

\newcommand{\floor}[1]{\colored{{\lfloor#1\rfloor}}}

\newcommand{\N}{{\mathbb N}}

\newcommand{\eps}{\varepsilon}
\newcommand{\seq}{\subseteq}

\newcommand{\ip}[1]{\langle #1 \rangle}

\newcommand{\AddFunction}{{\mathsf{add}}}

\newcommand{\nsPCP}{\mathsf{ns}\PCP}

\newcommand{\nsLPCP}{\mathsf{ns}\LPCP}

\newclass{\LPCP}{LPCP}

\newcommand{\Correct}[1]{{\widehat{#1}}}
\newcommand{\Fold}[1]{{\overline{#1}}}

\newcommand{\Flat}[1]{{\widetilde{#1}}}

\newcommand{\ConstraintPoly}{P}
\newcommand{\ConstraintConstant}{c}
\newcommand{\NumConstraints}{M}

\newcommand{\Verifier}{\mathsf{V}}

\newcommand{\DecisionPredicate}{\mathsf{D}}
\newcommand{\DecisionPredicateLIN}{\DecisionPredicate_{\LIN}}

\newcommand{\Language}{L}


\newcommand{\InputSize}{n}

\newcommand{\Domain}{D}
\newcommand{\Answers}{\Sigma}

\newcommand{\Diagonal}[1]{D_{#1}}

\newcommand{\Circuit}{C}
\newcommand{\Wire}{w}

\newcommand{\NumWires}{N}
\newcommand{\Gate}{g}
\newcommand{\PCPParams}[8]{{#8 \left[
{\small
\begin{array}{r  l}
\textsf{soundness error:}&\enspace {#1} \\
\textsf{randomness:}&\enspace {#2} \\
\textsf{proof length:}&\enspace {#3} \\
\textsf{query complexity:}&\enspace {#4} \\
\textsf{locality:}&\enspace {#5} \\
\textsf{query sampler time:}&\enspace {#6}\\
\textsf{decision predicate time:}&\enspace {#7}\\
\end{array}
}
\right]}}

\newcommand{\Soundness}{\eps}

\newcommand{\linlocalityhat}{\green{\bar{\MaxQueries}}}
\newcommand{\linlength}{{\green{n}}}

\newcommand{\WiresVars}{\mathbf{\Wire}}
\newcommand{\nsLPCPParams}[7]{\PCPParams{#1}{#2}{#3}{#4}{#5}{#6}{#7}{\nsLPCP}}
\newcommand{\nsPCPParams}[7]{\PCPParams{#1}{#2}{#3}{#4}{#5}{#6}{#7}{\nsPCP}}




\newcommand{\flatten}[1]{{\textsf{Flat}}[ #1 ]}
\newboolean{color}
\setboolean{color}{false}
\newboolean{green}
\setboolean{green}{false}

\begin{document}

\title{Toward Probabilistic Checking against Non-Signaling Strategies with Constant Locality}

\author{
\begin{tabular}[h!]{ccc}
\FormatAuthor{Mohammad Mahdi Jahanara}{mjahanar@sfu.ca}{Simon Fraser University} &
  \FormatAuthor{Sajin Koroth}{sajin\_koroth@sfu.ca}{Simon Fraser University}
     & \FormatAuthor{Igor Shinkar}{ishinkar@sfu.ca}{Simon Fraser University}
\end{tabular}
} %

\date{\today}

\maketitle

\begin{abstract}

Non-signaling strategies are a generalization of quantum strategies that have been studied in physics over the past three decades. Recently, they have found applications in theoretical computer science, including to proving inapproximability results for linear programming and to constructing protocols for delegating computation. A central tool for these applications is probabilistically checkable proof (PCPs) systems that are \emph{sound against non-signaling strategies}.

In this paper we show, assuming a certain geometrical hypothesis about noise robustness of non-signaling proofs (or, equivalently, about robustness to noise of solutions to the Sherali-Adams linear program), that a slight variant of the parallel repetition of the exponential-length constant-query PCP construction due to Arora et al.\ (JACM 1998) is sound against non-signaling strategies with \emph{constant locality}.

Our proof relies on the analysis of the \emph{linearity test} and \emph{agreement test} (also known as the \emph{direct product test}) in the non-signaling setting.

\keywords{direct product testing; linearity testing; non-signaling strategies; parallel repetition; probabilistically checkable proofs}
\end{abstract}
\newpage
\tableofcontents
\newpage

\section{Introduction}
\label{sec:introduction}
Probabilistically Checkable Proofs (PCPs)~\cite{BabaiFLS91,FeigeGLSS96,AroraS98,AroraLMSS98} are proofs that can be verified by a probabilistic verifier that queries only a few locations of the proof. PCPs have been a powerful tool in the theory of computing, with applications in diverse areas such as hardness of approximation~\cite{FeigeGLSS96} and delegation of computation~\cite{Kilian92,Micali00}.
A seminal result of~\cite{AroraS98,AroraLMSS98}, known as the PCP theorem, says that every language
decidable by a non-deterministic Turing machine in time $T(n)$
has a PCP system which allows to check if a given input of length $n$ is in the language
by using $O(\log(T(n)))$ random bits and making only $O(1)$ queries to the given proof.

Recall that in the classical setting of PCPs the two standard requirements are completeness and soundness.
\emph{Completeness} requires that if a given input is in the language, then there is some proof that convinces the prover with probability $1$.
\emph{Soundness} requirement states that if the input is not in the language, the prover rejects \emph{any} proof with some significant probability.
In this paper we study PCP systems that are sound against \emph{non-signalling proofs} or \emph{non-signalling strategies},
i.e., we require the prover to reject \emph{any} non-signalling proof with some significant probability.

Non-signalling strategies are a certain restricted class of probabilistic oracles.
When such oracle is given a set of queries, the response to the queries is sampled from a distribution such that the answer to each query may depend on all queries.
More precisely, a non signalling strategy with locality $\MaxQueries$ is a collection $\NSStrategy = \set{\NSStrategy_S}_{S\subseteq \Domain,|S|\leq \MaxQueries}$,
where each $\NSStrategy_S$ is a distribution over $\Alphabet^S$ (i.e., over functions $f:S \to \Alphabet$),
and for any two subsets $S,T \subseteq \Domain$ of size at most $\MaxQueries$,
the restrictions of $\NSStrategy_S$ and $\NSStrategy_T$ to $S\cap T$ are equal as distributions. This setting stands in contrast to the standard notion of a classical proof, where the answer to each query is deterministic.
Note that if the locality is the maximum possible, i.e., $\MaxQueries = |\Domain|$, then $\NSStrategy$ is a distribution over functions,
which is (essentially) equivalent to the classical notion of a proof.

We note that one may think about $\MaxQueries$-non-signalling functions, equivalently,
as the class of all feasible solutions to the linear program arising from the
$\MaxQueries$'th level relaxation of the Sherali-Adams hierarchy~\cite{SheraliA90}.
This implies that computing the maximum acceptance probability of an nsPCP verifier
that uses $r$ random bits,
where the maximum is taken over all $\MaxQueries$-non-signaling proofs,
reduces to a linear program with $2^{O(r \cdot \MaxQueries^2)} \cdot \Alphabet^{O(\MaxQueries)}$ variables and constraints.
In particular, if a language $\Language$ has a PCP verifier
that on an input of length $n$ uses $r = O(\log(n))$ random bits,
and is sound against $O(1)$-non-signaling proofs over an alphabet of constant size,
then $\Language$ is decidable in time $\poly(n)$.

Non-signaling strategies have been studied in physics since 1980's~\cite{Rastall85, KhalfinT85, PopescuR94}
in order to better understand quantum entanglement.
Indeed, these strategies strictly generalize quantum strategies
and capture minimal requirements on ``non-local'' correlations
that rule out instantaneous communication.

PCP systems that are sound against \emph{non-signalling proofs} have recently found
numerous applications in theoretical computer science, including
schemes for 1-round delegation of computation from cryptographic assumptions \cite{KalaiRR13, KalaiRR14},
and hardness of approximation for linear programming \cite{KalaiRR16}.
However, as opposed to the well studied setting of the classical PCP theorem,
where there are many constructions achieving best parameters possible,
in the non-signalling setting many parameters of the known PCP constructions appear to be far from optimal.

One of the most important parameters associated with a non-signalling proof is the locality parameter, denoted by $\MaxQueries$.
Indeed, \cite{KalaiRR13,KalaiRR14} have studied the related notion of multi-prover interactive proofs that are sound against non-signaling
strategies (nsMIPs). They have shown that nsPCPs are essentially equivalent to nsMIPs
where $\MaxQueries$, the locality of the proof in the nsPCP setting,
exactly corresponds to the number of provers in the nsMIP setting.

Despite the importance of the locality parameter, the exact complexity of languages admitting nsPCPs
that are sound against $\MaxQueries$-non-signaling proofs is still open for most $\MaxQueries$'s.
Note that as the locality of the proof decreases, there are fewer constraints imposed on the proof,
and hence the task of the verifier becomes more challenging.
The seminal result of Kalai, Raz, and Rothblum \cite{KalaiRR13,KalaiRR14}
showed that every language in $\DTIME(T)$ has an nsPCP verifier that uses $\polylog(T)$
random bits, makes $\polylog(T)$ queries to a proof of length $poly(T)$,
and is sound against $\polylog(T)$ -non-signaling proofs.
In particular, every language in $\EXP$ is captured by a nsMIP
with a polynomial time randomized verifier who communicates with $\poly(n)$ non-signaling provers.
For the limitations of nsPCPs, Ito~\cite{Ito10} proved that for $\MaxQueries=2$ the corresponding linear program is solvable in $\PSPACE$,
which is tight by the result of~\cite{ItoKM09}, and hence the class $\PSPACE$ is captured by PCPs that are sounds against $2$-non-signaling proofs.
Much less is known about the power of PCP systems that are sound against $\MaxQueries$-non-signaling proofs for $\MaxQueries > 2$.
Recently, Holden and Kalai~\cite{HoldenKalali20} proved that $o(\sqrt{\log(n)})$-prover non-signalling proofs with \emph{negligible soundness} is contained in $\PSPACE$.

All these results give rise to the following question, raised in \cite{ChiesaMS19},
asking for the non-signaling analogue of the PCP theorem.

\begin{question}
\label{qst:ns-pcp-thm}
Is it true that every language in $\DTIME(T)$
has an nsPCP verifier that uses $O(\log(T))$ random bits, makes $O(1)$ queries to the proof, and is sound against $O(1)$-non-signalling functions?
\end{question}

Motivated by this problem, Chiesa et.~al~\cite{ChiesaMS19,ChiesaMS20} started a systematic study of non-signalling PCPs.
They proposed studying the classical (algebraic) PCP constructions and their building blocks
(which are very well understood in the classical setting), and adapting each of the building blocks to the non-signaling setting.
In particular, focusing on the PCP construction of \cite{AroraLMSS98}
they made an appropriate definition of \emph{linear} non-signalling functions and analyzed the linearity test of \cite{BlumLR93}
against non-signalling strategies~\cite{ChiesaMS20}.
Then, building on the linearity test, they proved in~\cite{ChiesaMS19} that the classical
exponential length $O(1)$-query PCP of~\cite{AroraLMSS98} is sound against $O(\log^2(N))$-non-signalling proofs.
We emphasize, that even for exponential length nsPCPs (corresponding to nsPCPs with $r = \poly(n)$ randomness),
there are no known constructions that are sound against $O(1)$-non-signaling proofs.
Given this state of affairs, it is natural to ask the following question, that is simpler than \cref{qst:ns-pcp-thm}

\begin{question}
\label{qst:ns-pcp-thm-exp}
Is it true that every language in $\DTIME(T(n))$
has an nsPCP verifier that uses $O(\poly(T(n))$ random bits, makes $O(1)$ queries to the proof, and is sound against $O(1)$-non-signalling functions?
\end{question}

One must be careful with the precise formulation of \cref{qst:ns-pcp-thm-exp}.
Note that if the verifier uses more than $T(n)$ random bits, the runtime spent on reading the randomness
is more than $T(n)$, which is the time complexity of the problem.
To recover a nontrivial question, we require the verifier to be \emph{input oblivious}.
That is, in order to decide whether an instance $x$ belongs to the given language $\Language \in \DTIME(T(n))$,
the verifier generates the queries based only on the length of the input $x$ and its randomness (but not the input itself),
and then rules according to an $o(T)$-time decision predicate (where the predicate does depend on $x$).
Indeed, the \cite{AroraLMSS98} verifier studied in~\cite{ChiesaMS19} is input oblivious.

In this work we build on the work of~\cite{ChiesaMS19} and provide a positive answer to \cref{qst:ns-pcp-thm-exp}
\emph{assuming a certain geometric hypothesis}.
Specifically, we construct an input oblivious nsPCP verifier for any language $\Language \in \DTIME(T(n))$
that uses $\poly(T(n))$ random bits, makes $O(1)$ queries to a given proof, and is sound against $O(1)$-non-signalling functions,
with \emph{two caveats}.
\begin{enumerate}
  \item The first is that the alphabet of the nsPCP system is $\Alphabet = \Bits^{\polylog(T(n))}$,
  instead of the binary alphabet employed by \cite{ChiesaMS19,KalaiRR14,AroraLMSS98}.
  Still, this means that the verifier reads a total of $\polylog(T(n))$ bits from the proof, which makes our result non-trivial.
  Also, recall that in the classical setting, we have the alphabet reduction technique using proof composition,
  and it is plausible that we can apply similar ideas also in the non-signaling setting.
  Indeed, proof composition is an important building block in the classical PCP literature,
  and we believe it will also be an important step toward resolving \cref{qst:ns-pcp-thm}.

  \item The second caveat is that our result depends on a certain quantitative geometric hypothesis
  about proximity between almost non-signaling proofs and exactly non-signaling proofs.
  Equivalently, the hypothesis says that every feasible solution for the noisy Sherali-Adams LP
  is close (in some precise, rather weak, sense) to a feasible solution for the (exact) Sherali-Adams LP.
  See \cref{hyp:rounding} for details, and the discussion in \cref{sec:hypothesis-discussion}.
\end{enumerate}

Our work follows the general philosophy of \cite{ChiesaMS19, ChiesaMS20}, who proposed building modular analogues
of tools and techniques from the classical PCP literature.
A classical tool used in the construction of PCPs is parallel repetition~\cite{Raz98, Holenstein09}.
In the classical setting of 2-query PCP, parallel repetition is used to reduce the soundness error.
In this work we use parallel repetition for non-signalling proofs to reduce the locality to $O(1)$,
while the soundness stays in the ``high-probability acceptance regime''.
In addition to parallel repetition, we study additional tools from the PCP literature.
Specifically, we use the modular approach that is typical for the classical setting.
Specifically, we show first that  the parallel repetition of the \cite{AroraLMSS98} verifier is sound against
``nicely structured'' proofs.
Then, we use \emph{linearity test} and \emph{direct product test}, and claim that
proofs that satisfy both tests with high probability must be nicely structured,
and hence we essentially reduce the analysis to the structured case.

Another interesting feature of our proof is the reduction from the parallel repetition
of the~\cite{AroraLMSS98} verifier to the non-repeated \cite{AroraLMSS98} verifier.
Specifically, we show that if for some input $x$,
the parallel repetition of the~\cite{AroraLMSS98} verifier
accepts a proof with high probability, and the proof is ``nicely structured'',
then it is possible to ``flatten'' the repeated proof into a proof over the binary alphabet,
that satisfies the (non-repeated) \cite{AroraLMSS98} verifier with high probability.
Therefore, by applying the result of~\cite{ChiesaMS19} about the soundness of the \cite{AroraLMSS98} verifier,
we conclude that the input $x$ is in the language.

\subsection{Informal statement of the result}
\label{sec:main-result-informal}

Below we discuss the main result of the paper.
Our result depends on an hypothesis about approximating \emph{almost non-signaling} functions
using \emph{exactly non-signaling} functions.

\begin{hypothesis}[Informal]
\label{hyp:rounding-informal}
    Any almost linear, almost non-signaling function $\NSStrategy \colon \Bits^n \to \Bits$
    can be well approximated by some non-signaling function $\NSStrategy' \colon \Bits^n \to \Bits$ of slightly lower locality.

    Equivalently, any solution to the noisy Sherali-Adams LP
    can be well approximated by a solution to the (exact) Sherali-Adams LP of slightly lower level in the hierarchy.
\end{hypothesis}

The exact formulation of the hypothesis relies on the precise definitions
of \emph{non-signaling} and \emph{almost non-signaling} functions (or, equivalently, the related notions of \emph{noisy Sherali-Adams LP}),
as well as the appropriate definitions of distance.
For the formal statement of the hypothesis see \cref{hyp:rounding} following the required definitions in \cref{sec:prelims}.

We are now ready to state our main theorem.

\begin{stheorem}[Main theorem - informal]
\label{thm:main-informal}
Assuming \cref{hyp:rounding-informal}
every language $\Language \in \DTIME(T)$ has an input oblivious nsPCP verifier that
on an input of length $n$ uses $\widetilde{O}(T^2)$ random bits,
makes $O(1)$ queries to proofs over the alphabet $\Sigma=\Bits^{\polylog(T)}$, and is sound against $O(1)$-non-signaling proofs.
The query sampler runs in time $\widetilde{O}(T^2)$, and the decision predicate runs in time $O(\InputSize \cdot \polylog(T))$.
\end{stheorem}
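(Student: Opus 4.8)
The plan is to follow the modular "building block" strategy outlined in the introduction, adapting each ingredient of the classical exponential-length PCP of Arora et al.\ to the non-signaling setting, and then gluing them together via parallel repetition. First, I would take the \cite{AroraLMSS98} verifier, which for a language $\Language \in \DTIME(T)$ produces (after a Cook–Levin style reduction to an algebraic constraint-satisfaction instance of size $N = \poly(T)$) a proof consisting of the Hadamard-type encoding of a satisfying assignment: a purported linear function $f \colon \Bits^{N} \to \Bits$ together with its quadratic extension, and the verifier checks linearity via the \cite{BlumLR93} test, consistency of the linear and quadratic parts, and satisfaction of the constraints. By \cite{ChiesaMS19} this verifier is sound against $O(\log^2 N)$-non-signaling proofs. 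I would then apply $\NumRep$-fold parallel repetition to this verifier, choosing $\NumRep = \polylog(T)$, so that the alphabet becomes $\Sigma = \Bits^{\polylog(T)}$, the randomness becomes $\widetilde{O}(T^2)$, the query complexity drops to $O(1)$, but — crucially — we now only need soundness against $O(1)$-non-signaling proofs, since locality in the repeated system corresponds (up to constant factors) to number of provers, and the per-coordinate locality budget that \cite{ChiesaMS19} needs is absorbed into a single repeated query.

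The heart of the argument is the soundness analysis of the repeated verifier, which I would organize in three steps. Step one: show that if the repeated verifier accepts a \emph{nicely structured} proof with high probability, then the proof can be "flattened" into a single proof over $\Bits$ that satisfies the original (non-repeated) \cite{AroraLMSS98} verifier with high probability; invoking \cite{ChiesaMS19} then certifies $x \in \Language$. Here "nicely structured" should mean: the repeated non-signaling strategy is (close to) a product/tensor of $\NumRep$ copies of a single linear non-signaling strategy — this is exactly what the flattening operation needs in order to read off a consistent low-locality witness. Step two: reduce the general case to the structured case using a \emph{non-signaling direct product test} (agreement test) together with the \emph{non-signaling linearity test}. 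The claim is that any repeated proof passing both tests with high probability is, after self-correction, close to a structured proof. Step three: combine — a proof passing the full repeated verifier passes both tests, hence is close to structured, hence (after a small perturbation) accepted-with-high-probability structured, hence flattens, hence $x \in \Language$.

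The main obstacle — and the place where \cref{hyp:rounding-informal} enters — is Step two, specifically the transition "passes the direct product and linearity tests $\Rightarrow$ close to a genuinely structured \emph{exactly non-signaling} proof." In the non-signaling world, the output of a test analysis is naturally an \emph{almost} non-signaling object: self-correcting or taking a direct-product-consistent "plurality" across coordinates produces a strategy whose marginals agree only approximately across overlapping query sets, i.e.\ a feasible solution to the \emph{noisy} Sherali–Adams LP rather than the exact one. Classical direct-product and linearity test analyses give combinatorial structure but do not, by themselves, restore the exact consistency constraints; and one cannot simply "round" a noisy non-signaling distribution to an exact one without control on how locality degrades. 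This is precisely the content of \cref{hyp:rounding-informal}: it asserts that such an almost-linear, almost-non-signaling strategy is close to an exactly non-signaling strategy of only slightly smaller locality. Granting the hypothesis, I would apply it at the end of Step two to pass from the almost-non-signaling plurality strategy to a nearby exact one, losing only $O(1)$ in locality, which the parameter budget can afford. The remaining work is bookkeeping: tracking how test soundness errors, the repetition count $\NumRep$, the locality losses from self-correction and from the hypothesis, and the approximation errors compose, and verifying that the query sampler runs in time $\widetilde{O}(T^2)$ (dominated by sampling the $\widetilde{O}(T^2)$ random bits of the repeated \cite{AroraLMSS98} queries) and that the decision predicate, which only needs to evaluate $O(n)$ constraints each of size $\polylog(T)$, runs in time $O(n \cdot \polylog(T))$.
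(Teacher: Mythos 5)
Your high-level architecture is the same as the paper's: parallel repetition of the linear ALMSS verifier with $\NumRep = \polylog(T)$, a flattening argument showing that structured (linear and consistent) repeated proofs reduce to the non-repeated verifier and hence to \cref{thm:nsLPCP-CMS19}, and linearity plus consistency (direct product) testing with self-correction to reduce general proofs to the structured case. The genuine gap is in how you deploy \cref{hyp:rounding-informal} and how you account for locality. In the paper, the almost-non-signaling object does not arise at the repeated level: self-correction yields an \emph{exactly} non-signaling repeated strategy that is only almost linear and almost consistent, and the loss of exact non-signaling happens only when this strategy is \emph{flattened} into a non-repeated function $\Flat\NSStrategy \colon \Bits^{\NumWires^2} \to \Bits$, which is $(O(\eps),\NumRep)$-non-signaling, i.e.\ has locality $\NumRep$ (\cref{claim:flatten-almost-consistent} and \cref{claim:flatten-linear-proof}). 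The hypothesis is stated for Boolean-output, non-repeated functions and is applied to $\Flat\NSStrategy$. Your plan to apply it ``at the end of Step two'' to the repeated strategy and only then flatten does not typecheck with the hypothesis as stated (it would need a repeated-alphabet version that also restores exact consistency), and if the rounding were applied to an $O(1)$-local object, the output would again be $O(1)$-local, far below the $\Omega(\log \NumWires)$ locality that \cref{thm:nsLPCP-CMS19} requires; the entire point of flattening first is that the rounded object inherits locality on the order of $\NumRep$, not of $\MaxQueries$.

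Second, the hypothesis only restores exact non-signaling; it says nothing about the approximant being linear or otherwise structured. After this rounding the paper still has only an almost-linear, exactly non-signaling function, and it must invoke a further rounding step, \cref{thm:almost-lin-function}, to obtain an exactly \emph{linear} non-signaling proof (at a quadratic loss in locality) before the soundness theorem \cref{thm:nsLPCP-CMS19} applies; your proposal omits this step entirely by assuming the hypothesis already delivers a structured proof. Relatedly, ``losing only $O(1)$ in locality'' misstates the hypothesis: the formal version loses polynomially ($\MaxQueries \mapsto \MaxQueries^{\expHypothesis}$), and it is precisely the composition of this polynomial loss with the quadratic loss of \cref{thm:almost-lin-function} that dictates the choice $\NumRep \geq K \log^{2/\expHypothesis}(\NumWires)$ in \cref{thm:main-soundness}. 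With your bookkeeping this constraint, and hence the reason the final invocation of \cref{thm:nsLPCP-CMS19} is legitimate, would not surface. So the skeleton is right, but the closing composition must run in the order flatten, then round to exactly non-signaling via the hypothesis, then round to exactly linear via \cref{thm:almost-lin-function}, then apply \cref{thm:nsLPCP-CMS19}, with the locality losses tracked as above.
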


To the best our knowledge, this is the first result that constructs a PCP system that is sound against non-signaling proofs with constant locality.

\subsection{Roadmap}
\label{sec:roadmap}

The rest of the paper is organized as follows.
In \cref{sec:prelims} we formally define the notions that we utilize throughout this work, and use them to formally state our hypothesis and the main theorem in \cref{sec:main-result}. In \cref{sec:PCP-construction} we recall the ALMSS verifier, and define our variant of its parallel repetition.
In \cref{sec:overview} we provide an overview of the soundness proof.
In \cref{sec:linear-proof} we prove soundness of our verifier against structured proofs. In \cref{sec:test-self-correction} we discuss our local testing and self-correction, which enables us to reduce soundness against general proofs to soundness against structured proofs. Finally, in \cref{sec:proof-of-soundness} we prove the main result.

\doclearpage
\section{Preliminaries}
\label{sec:prelims}

\subsection{Probabilistically Checkable Proofs}
We start with the definition of Probabilistically Checkable Proofs (PCPs).
Recall that a classical PCP verifier for a language $\Language$ is given an input $x$, and an oracle access to a proof.
The verifier reads the input, uses randomness, queries the proof in a small number of coordinates,
and based on the answers to the queries decides whether to accept or reject.
Completeness requires that if $x \in \Language$, then there exists a proof that makes the verifier always accept.
Soundness requires that if $x \not\in \Language$, then for any proof the verifier will reject with high probability.

In the non-signaling setting, a \emph{non-signaling PCP verifier} is a verifier, whose soundness is further required to hold against any non-signaling proof of prescribed locality.
More precisely, an nsPCP verifier $\Verifier$ for a language $\Language$ gets an input $x$ and an oracle access to a non-signaling function $\NSStrategy \colon \Domain \to \Answers$.
The verifier reads the input $x$, uses random bits to decide on a subset $S \subseteq \Domain$ on which $\NSStrategy$ is queried.
Then, based on the answer $\NSStrategy(S) \in \Answers^S$ it decides to accept or reject.

\begin{definition}
\label{def:ns-pcp}
A \defemph{nsPCP verifier} for a language $\Language \seq \Bits^*$ is a randomized algorithm $\Verifier$ that
gets an input $x \in \Bits^n$ and oracle access to a $\MaxQueries$-non-signaling proof $\NSStrategy \colon \Domain \to \Alphabet$.
The verifier uses randomness to decide on a subset $S \subseteq \Domain$ of size $\abs{S} \leq \MaxQueries$,
and queries $\NSStrategy$ on $S$.
Then, based on the answer $\NSStrategy(S) \in \Answers^S$ it decides to accept or reject.
We say that $\Verifier$ has perfect completeness and soundness error $\SoundnessError$ against $\MaxQueries$-non-signaling proofs
if the following holds.
\begin{description}
  \item[Completeness:]\label{item:pcp-completeness}
  For all $x \in \Language$ there exists a (classical) proof $\pi$ such that $\Pr[\Verifier_{\pi}(x) = 1] = 1$.
  \item[Soundness:]\label{item:pcp-soundness}
  If $x \notin \Language$, then for all $\MaxQueries$-non-signaling proofs $\NSStrategy$ it holds that $\Pr[\Verifier_{\NSStrategy}(x) = 1] \leq \SoundnessError$.
\end{description}
We say that verifier $\Verifier$ is \defemph{input oblivious} if the choice of the query set $S$ depends only on
the input length $n$, the randomness of the verifier, but is independent of $x$.
\end{definition}

\begin{remark}
Note that in the non-signaling setting the locality parameter $\MaxQueries$ upper bounds the number of queries made by the verifier,
and it is possible that the actual predicate used by the verifier depends on significantly less than $\MaxQueries$ coordinates of the proof.
For example, \cite{ChiesaMS19} proved that the 11-queries verifier of \cite{AroraLMSS98} is sound against $O(\log^2(n))$-non-signaling proofs,
and it is not known whether the verifier is sound against $O(1)$-non-signaling proofs, or even $o(\log^2(n))$-non-signaling proofs.
\end{remark}

\subsection{Parallel repetition}
In the classical setting a proof is assumed to be a string, or equivalently, a static function $\pi: \Domain \to \Alphabet$ committed by the prover. A $\NumRep$-parallel repetition of a proof $\pi$ is a mapping $\pi^\NumRep \colon \Domain^\NumRep \to \Alphabet^\NumRep$
that allows accessing $\NumRep$ locations of the (supposed) proof by making only 1 query to a (longer) proof over a larger alphabet.
That is, the intended proof $\pi^{(\NumRep)}$ corresponds to some ``base'' proof
$\pi \colon \Domain \to \Sigma$ defined as $\pi^{(\NumRep)}((x_1,\dots,x_\NumRep)) = (\pi(x_1),\dots,\pi(x_\NumRep))$.
Analogously, given a verifier $\Verifier$, a $t$-repeated verifier which is denoted by $\Verifier^{(\NumRep)}$,
runs $\NumRep$ parallel \textit{independent} instances of $\Verifier$ and accepts if and only if all instances accept.

The original motivation for using parallel repetition was to reduce the soundness error of a proof system,
while keeping the number of queries fixed. In the classical setting, if the repeated proof is indeed
a parallel repetition of some base proof $\pi$, then it is not hard to see that the soundness error of $\Verifier^{(t)}_{\pi^{t}}$
is exponentially smaller than the soundness error of $\Verifier_{\pi}$.
The soundness analysis of the repeated proof need not be based on this comparison to the soundness error of the \textit{base} proof,
and analyzing such proofs in both classical and non-signalling settings has been a subject of a long line of research~\cite{Verbitsky96, Raz98, Holenstein09, DinurS14, BravermanGarg2015, LancienW16, HolmgrenY19}.

In this work, we use parallel repetition to improve the \emph{minimum locality parameter} of non-signaling proofs required for the soundness of the verifier, rather than its soundness error. Next, we formally define non-signaling proofs, and some properties of such proofs that we will need in the paper.

\subsection{Non-signaling functions}

In this work we consider PCP verifiers that are sound against non-signaling proofs.
Below, we formally define the notion of non-signaling functions, and introduce some notation we will use in the paper.
Throughout the paper we will use terms \emph{non-signaling function}, \emph{non-signaling proof}, and \emph{non-signaling strategy} interchangeably.

\begin{definition}
\label{def:ns-function}
Fix a domain $\Domain$, an alphabet $\Sigma$, and a parameter $\MaxQueries \in \N$.
A \defemph{$\MaxQueries$-non-signaling function} $\NSStrategy \colon \Domain \to \Sigma$ is a collection $\NSStrategy = \{\NSStrategy_{S}\}_{S \subseteq \Domain, \Cardinality{S} \leq \MaxQueries}$, where
each $\NSStrategy_S$ is a distribution over assignments $f_S \colon S \to \Sigma$, such that
for every two subsets $S, T \seq \Domain$ each of size at most $\MaxQueries$, the marginal distributions of $\NSStrategy_{S}$ and $\NSStrategy_{T}$ restricted to $S \cap T$ are equal.
\end{definition}

Unlike a classic function, we can use a $\MaxQueries$-non-signaling function only once
in the sense that one has to present the set of at most $\MaxQueries$ queries all at once.
In other words, it is not possible to use the non-signaling function adaptively.

\begin{remark}
  Throughout the paper we will consider non-signaling functions of two types:
  \begin{itemize}
  \itemsep0em
    \item functions over the domain $\Domain = \Bits^N$ for some $N \in \N$ and alphabet $\Alphabet=\Bits$;
    \item functions over the domain $\Domain = (\Bits^N)^\NumRep$ and alphabet $\Alphabet=\Bits^{\NumRep}$ for some parameters $N,\NumRep \in \N$.
  \end{itemize}
\end{remark}

Next, we define a relaxed notion of non-signaling functions, that allows the marginal distributions induced by different query sets to be only \textit{statistically close} rather equal on the intersection. This relaxation arises in our analysis. It has also appeared naturally in other works in this area, especially in cryptographic applications~\cite{AielloBOR00, DworkLNNR04, KalaiRR13, KalaiRR14}.
\begin{definition}
\label{def:almost-ns-function}
Fix a domain $\Domain$, an alphabet $\Sigma$, and parameters $\MaxQueries \in \N$ and $\eps \in [0,1]$.
A \defemph{$(\eps, \MaxQueries)$-non-signaling function} over a domain $\Domain$ and an alphabet $\Sigma$, is a collection $\NSStrategy = \{\NSStrategy_{S}\}_{S \subseteq \Domain, \Cardinality{S} \leq \MaxQueries}$, where
each $\NSStrategy_S$ is a distribution over assignments $f_S \colon S \to \Sigma$, such that
for every two subsets $S, T \seq \Domain$ each of size at most $\MaxQueries$, the marginal distributions of $\NSStrategy_{S}$ and $\NSStrategy_{T}$ restricted to $S \cap T$ are $\eps$-close with respect to total variation distance,
i.e.,
\begin{equation*}
    \max_{E \seq \Alphabet^{S \cap T}}
    \left|\Pr_{\NSStrategy_S}[\NSStrategy_S {\mid_{S \cap T}} \in E] - \Pr_{\NSStrategy_T}[\NSStrategy_{T} {\mid_{S \cap T}} \in E] \right|
    \leq \eps
    \enspace.
\end{equation*}
In particular, a $(\eps=0, \MaxQueries)$-non-signaling-function coincides with the
definition of $\MaxQueries$-non-signaling function from \cref{def:ns-function}.
\end{definition}

\noindent Next we define non-signaling and almost non-signaling counterpart of parallel repeated functions.

\begin{definition}
\label{def:repeated-ns-function}
Fix a domain $\Domain$, an alphabet $\Sigma$, and parameters $\MaxQueries, \NumRep \in \N$.
A \defemph{$\NumRep$-repeated $(\delta,\MaxQueries)$-non-signaling function} is an $(\delta,\MaxQueries)$-non-signaling function $\NSStrategyRep \colon\Domain^\NumRep \to \Sigma^\NumRep$.
Namely, a $\NumRep$-repeated $(\delta,\MaxQueries)$-non-signaling function $\NSStrategyRep \colon \Domain^\NumRep \to \Alphabet^\NumRep$ is a collection $\NSStrategyRep = \{\NSStrategyRep_{S}\}_{S \subseteq \Domain^\NumRep, \Cardinality{S} \leq \MaxQueries}$, where
each $\NSStrategyRep_S$ is a distribution over assignments $f^{(\NumRep)}_S \colon S \to \Sigma$, such that
for every two subsets $S, T \seq \Domain^\NumRep$ each of size at most $\MaxQueries$, the marginal distributions of $\NSStrategyRep_{S}$ and $\NSStrategyRep_{T}$ restricted to $S \cap T$ are $\delta$-close with respect to total variation distance.
\end{definition}

We will also need the definition of distance between non-signaling or almost non-signaling functions.

\begin{definition}[Statistical distance]
\label{def:distance}
    Let $\NSStrategy, \NSStrategy' \colon \Domain \to \Alphabet$ be two non-signaling or almost non-signaling functions
    with locality $\MaxQueries$.
    For $\ell \leq \MaxQueries$ the $\Distance{\ell}$-distance between $\NSStrategy$ and $\NSStrategy'$ is defined as
    \begin{equation*}
        \Distance{\ell}(\NSStrategy, \NSStrategy') = \max_{S \seq \Domain, \abs{S} \leq \ell} \Delta(\NSStrategy_S, \NSStrategy'_S)
        \enspace,
    \end{equation*}
    where $\Delta(\NSStrategy_S, \NSStrategy'_S) = \max_{E \seq \Alphabet^S} \abs{\Pr[\NSStrategy_S \in E] - \Pr[\NSStrategy'_S \in E]}$
    is the total variation distance between $\NSStrategy_S$ and $\NSStrategy'_S$.

    We say that $\NSStrategy$ and $\NSStrategy'$ are $\eps$-close in the $\Distance{\ell}$-distance if
    $\Distance{\ell}(\NSStrategy, \NSStrategy') \leq \eps$, and say that they are $\eps$-far otherwise.
\end{definition}

\subsection{Permutation folded repeated non-signaling functions}

Folding is a technique used to impose some structure on the given proof without really making extra queries.
The idea of using folded proofs was first introduced by \cite{BellareGS98}.
We formally define the \emph{permutation folding} property,
and then explain why we can impose this property without making extra queries.

\begin{definition}
\label{def:permutation-folding}
Let $Q = (q_1, \dots, q_\NumRep) \in \Domain^\NumRep$ be a $\Domain$-values vector, and let $\pi \in S_\NumRep$ be a permutation of the indices $[\NumRep]$. Define $\pi(Q) = (q_{\pi(1)}, \dots, q_{\pi(\NumRep)})$ to be the vector obtained from $Q$ by permuting the coordinates according to $\pi$.

Let $\NSStrategyRep \colon (\Domain^n)^\NumRep \to \Alphabet^\NumRep$ be a $\NumRep$-repeated $\MaxQueries$-non-signaling function.
$\NSStrategyRep$ is said to be \defemph{permutation folded} or \defemph{permutation invariant} if
for any $S = \{Q_1, \dots, Q_\ell\} \seq (\Domain^n)^\NumRep$ with $1 \leq \ell \leq \MaxQueries$,
for any $T = \{\pi_1(Q_1), \dots, \pi_\ell(Q_\ell) \}$
for some permutations $\pi_1,\dots\pi_\ell \in S_\NumRep$,
and for any $b_1, \dots, b_\ell \in \Alphabet^\NumRep$ it holds that
\begin{equation*}
    \Pr \left[ \forall i \in [\ell] \quad \NSStrategyRep_S(Q_i) = b_i \right]
         =
    \Pr \left[ \forall i \in [\ell] \quad \NSStrategyRep_T(\pi_i(Q_i)) = \pi_i(b_i)
         \right]
         \enspace.
\end{equation*}
\end{definition}

\begin{observation}\label{obs:permutation-folding}
It is important to note that we can \emph{fold} any given $\NumRep$-repeated $\MaxQueries$-non-signaling function $\NSStrategyRep \colon \Domain^\NumRep \to \Alphabet^\NumRep$ by partitioning $\Domain^\NumRep$ into equivalence classes, where $Q$ and $Q'$ belong to the same class if $Q' = \pi(Q)$ for some permutation $\pi$.

We defined the folding of $\NSStrategyRep$, denoted by $\Fold\NSStrategyRep$ as follows. For any query $Q$ to $\Fold\NSStrategyRep$ ,
let $\pi \in S_t$ be a uniformly random permutation, and define the distribution of
$\Fold\NSStrategyRep(Q)$ as the distribution of $\pi^{-1}(\NSStrategyRep(\pi(Q)))$.

It is easy to see that $\Fold\NSStrategyRep$ is indeed $\MaxQueries$-non-signaling and permutation folded.
Furthermore, note that if $\NSStrategyRep$ is permutation folded, then $\Fold\NSStrategyRep = \NSStrategyRep$.
\end{observation}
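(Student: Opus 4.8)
The plan is to first make precise the set-level meaning of the folding operation $\Fold\NSStrategyRep$ — the statement only describes it on a single query — and then verify the three asserted properties, each by unwinding the definition and pushing the non-signaling (or permutation-folding) property of $\NSStrategyRep$ through the random relabelling. Throughout, for a vector $v=(v_1,\dots,v_\NumRep)$ and $\alpha\in S_\NumRep$ write $v^\alpha:=(v_{\alpha(1)},\dots,v_{\alpha(\NumRep)})$, so that $\alpha(Q)=Q^\alpha$, $\alpha(b)=b^\alpha$, $(v^\alpha)^\beta=v^{\alpha\circ\beta}$, and $(\cdot)^\alpha$ is a bijection with inverse $(\cdot)^{\alpha^{-1}}$. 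The set-level definition I would use is: to answer a query set $S=\{Q_1,\dots,Q_\ell\}$, draw \emph{independent} uniform permutations $\alpha_1,\dots,\alpha_\ell\in S_\NumRep$, draw $f\sim\NSStrategyRep_{\{Q_1^{\alpha_1},\dots,Q_\ell^{\alpha_\ell}\}}$, and output the assignment $g$ with $g(Q_i)=(f(Q_i^{\alpha_i}))^{\alpha_i^{-1}}$; its single-query marginals coincide with the ones in the statement, and each $\Fold\NSStrategyRep_S$ is a distribution over $S\to\Alphabet^\NumRep$ simply because it is a pushforward. The one algebraic fact I will use repeatedly is that, since $(\cdot)^{\alpha_i}$ is a bijection, the event ``$g(Q_i)=b_i$'' is the same as the event ``$f(Q_i^{\alpha_i})=b_i^{\alpha_i}$''.

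For the non-signaling property, fix query sets $S,T$ of size at most $\MaxQueries$ and put $R=S\cap T$. I would condition on the permutations $(\alpha_Q)_{Q\in S}$ used to answer $S$. For any fixed choice of them, the set $B=\{Q^{\alpha_Q}:Q\in R\}$ is contained in $A=\{Q^{\alpha_Q}:Q\in S\}$ (both of size at most $\MaxQueries$), so non-signaling of $\NSStrategyRep$ gives that the marginal of $\NSStrategyRep_{A}$ on $B$ equals $\NSStrategyRep_{B}$. Hence the conditional law of the restricted answer $(g(Q))_{Q\in R}$ is a function of $(\alpha_Q)_{Q\in R}$ alone — independent of the permutations chosen for $S\setminus R$ and of which superset of $R$ we started from. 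Averaging over $(\alpha_Q)_{Q\in R}$, which is product-uniform in either case, shows that the restriction of $\Fold\NSStrategyRep_S$ to $R$ depends only on $R$ and on $\NSStrategyRep$; the identical computation for $T$ produces the same distribution, so the two restrictions agree. (Collisions among the relabelled points cause no difficulty: $\{Q^{\alpha_Q}:Q\in S\}$ is always treated as a set, and $g$ remains well defined pointwise.)

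For permutation invariance, take $S=\{Q_1,\dots,Q_\ell\}$, $T=\{Q_1^{\pi_1},\dots,Q_\ell^{\pi_\ell}\}$ and values $b_1,\dots,b_\ell$. Using the definition and the algebraic fact, the left-hand probability equals $\E_{\alpha_1,\dots,\alpha_\ell}\,\Pr_{f\sim\NSStrategyRep_{\{Q_i^{\alpha_i}\}_i}}\big[\forall i:\ f(Q_i^{\alpha_i})=b_i^{\alpha_i}\big]$, while the right-hand probability (folding $T$, evaluated at $Q_i^{\pi_i}$ with target $b_i^{\pi_i}$, using fresh permutations $\beta_i$) equals $\E_{\beta_1,\dots,\beta_\ell}\,\Pr_{f'\sim\NSStrategyRep_{\{Q_i^{\pi_i\beta_i}\}_i}}\big[\forall i:\ f'(Q_i^{\pi_i\beta_i})=b_i^{\pi_i\beta_i}\big]$, after using $(Q_i^{\pi_i})^{\beta_i}=Q_i^{\pi_i\beta_i}$ and $(b_i^{\pi_i})^{\beta_i}=b_i^{\pi_i\beta_i}$. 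Substituting $\gamma_i:=\pi_i\beta_i$ — a bijection of $S_\NumRep$ for each fixed $\pi_i$, so $(\gamma_i)_i$ is again product-uniform — turns the right-hand side into the left-hand side. For idempotence, assume $\NSStrategyRep$ is already permutation folded. Then for every fixed $(\alpha_1,\dots,\alpha_\ell)$, the permutation-folding identity of $\NSStrategyRep$ applied with $T=\{Q_i^{\alpha_i}\}_i$ gives $\Pr_{f\sim\NSStrategyRep_{\{Q_i^{\alpha_i}\}_i}}[\forall i:\ f(Q_i^{\alpha_i})=b_i^{\alpha_i}]=\Pr[\forall i:\ \NSStrategyRep_S(Q_i)=b_i]$, which does not depend on the $\alpha_i$; averaging over them yields $\Pr[\forall i:\ \Fold\NSStrategyRep_S(Q_i)=b_i]=\Pr[\forall i:\ \NSStrategyRep_S(Q_i)=b_i]$, and since the $b_i$ are arbitrary, $\Fold\NSStrategyRep_S=\NSStrategyRep_S$.

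The conceptual content here is small, but I expect the proof to be easiest to get wrong in two places, which is where I would be most careful. First, one must use an \emph{independent} relabelling permutation per query in the set-level definition: a single global permutation applied to the whole query set still yields a non-signaling function, but it fails permutation invariance as soon as the $\pi_i$ differ, because then the families of relabelled sets appearing on the two sides of \cref{def:permutation-folding} no longer coincide and the non-signaling property of $\NSStrategyRep$ does not relate them. Second, the group-action bookkeeping — $v^{\alpha}$ versus $v^{\alpha^{-1}}$, the direction of composition, and the reparametrisation $\gamma_i=\pi_i\beta_i$ — must be kept consistent with the conventions fixed in \cref{def:permutation-folding}; everything else is routine, once the degenerate case of two queries sharing an $S_\NumRep$-orbit (so that the relabelled points collide) is noted and seen to be harmless.
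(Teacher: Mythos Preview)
Your proof is correct. The paper gives no argument at all for this observation --- it only states ``It is easy to see that $\Fold\NSStrategyRep$ is indeed $\MaxQueries$-non-signaling and permutation folded'' and leaves the set-level definition of $\Fold\NSStrategyRep$ implicit --- so you have supplied exactly the details the paper omits, including the (correct and necessary) choice of \emph{independent} permutations per query in the set-level extension, and your remarks on why a single global permutation would fail and on the harmlessness of orbit collisions show that you have understood where the content lies.
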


\subsection{Linear non-signaling functions}

In this part, we define \emph{linear} $\NumRep$-repeated non-signaling functions.
Linear non-signaling \emph{boolean} functions have been studied in \cite{ChiesaMS20, ChiesaMS19},
and played a key role in the proving that the PCP verifier of \cite{AroraLMSS98} is sound against non-signaling proofs.
We also use such structured non-signaling proofs in this paper. See \cref{sec:PCP-construction} for details.

\begin{definition}[Linear $\NumRep$-repeated functions]
\label{def:linear-rep-ns-fun}
Let $\QuasiLinRep \colon (\Bits^n)^\NumRep \to \Bits^\NumRep$ be a $\NumRep$-repeated $(\eps,\MaxQueries)$-non-signaling function.
We say that $\QuasiLinRep$ is \emph{linear} if for all $X,Y \in (\Bits^n)^\NumRep$,
and $X+Y \in (\Bits^n)^\NumRep$ defined by the coordinate-wise addition modulo 2,
and for all $S \seq (\Bits^n)^\NumRep$ containing $X,Y,X+Y$
of size at most $\abs{S} \leq \MaxQueries$,
it holds that
\begin{equation*}
    \Pr_{\QuasiLinRep_{S}}\left[\QuasiLinRep(X) + \QuasiLinRep(Y) = \QuasiLinRep(X+Y) \right] = 1
    \enspace.
\end{equation*}
\end{definition}

\begin{remark}
    Note that in the degenerate case of $\NumRep=1$
    if a (non-repeated) $\MaxQueries$-non-signaling function $\NSStrategy$ satisfies the linearity condition in \cref{def:linear-rep-ns-fun}
    then $\Pr\left[\NSStrategy(x) + \NSStrategy(y) = \NSStrategy(x+y) \right] = 1$ for all $x,y \in \Bits^n$,
    i.e., $\NSStrategy$ satisfies the linearity test of \cite{BlumLR93} with probability 1.
    Non-signaling functions satisfying this property have been the subject of work on linearity testing in the non-signaling setting~\cite{ChiesaMS20}.
\end{remark}

Next we extend \cref{def:linear-rep-ns-fun} by introducing the notion of an \emph{almost} linear $\NumRep$-repeated non-signalling function.

\begin{definition}[Almost linear $\NumRep$-repeated functions]
\label{def:almost-linear-rep-ns-fun}
Let $\QuasiLinRep \colon (\Bits^n)^\NumRep \to \Bits^\NumRep$ be a $\NumRep$-repeated $(\delta,\MaxQueries)$-non-signaling function.
We say that $\QuasiLinRep$ is $(1-\eps)$-\emph{linear} if for all $X,Y \in (\Bits^n)^\NumRep$,
and $X+Y \in (\Bits^n)^\NumRep$ defined by the coordinate-wise addition modulo 2,
and for all $S \seq (\Bits^n)^\NumRep$ containing $X,Y,X+Y$
of size at most $\abs{S} \leq \MaxQueries$,
it holds that
\begin{equation*}
    \Pr_{\QuasiLinRep_{S }}\left[\QuasiLinRep(X) + \QuasiLinRep(Y) = \QuasiLinRep(X+Y) \right] \geq 1-\eps
    \enspace.
\end{equation*}
\end{definition}

We will allow ourselves to use the informal term \emph{almost linear},
when referring to a non-signaling function $\QuasiLinRep$ that is $(1-\eps)$-linear for some small $\eps$.

\subsection{Consistent repeated non-signaling functions}

In this part, we define the notion of \emph{consistency} for $\NumRep$-repeated $\MaxQueries$-non-signaling function.

\begin{definition}[Consistent $\NumRep$-repeated functions]
\label{def:consistent-rep-ns-fun}
Let $\QuasiConsistent \colon \Domain^\NumRep \to \Alphabet^\NumRep$ be a $\NumRep$-repeated $\MaxQueries$-non-signaling function.
We say that $\QuasiConsistent$ is \emph{consistent}, if for any $Q,Q' \in \Domain^\NumRep$ it holds that
\begin{equation*}
    \Pr_{\QuasiConsistent}\left[\QuasiConsistent(Q)_j = \QuasiConsistent(Q')_j \quad \forall j \in [\NumRep] \mbox{ such that } Q_j=Q'_j \right] = 1
    \enspace.
\end{equation*}

\end{definition}

Similarly to the almost linear property, we define the relaxed notion of almost consistent non-signalling function.

\begin{definition}[Almost consistent $\NumRep$-repeated functions]
\label{def:almost-consistent-rep-ns-fun}
Let $\QuasiConsistent \colon \Domain^\NumRep \to \Alphabet^\NumRep$ be a $\NumRep$-repeated $\MaxQueries$-non-signaling function.
We say that $\QuasiConsistent$ is $(1-\eps)$-\emph{consistent}, if for any $Q, Q' \in \Domain^\NumRep$
\begin{equation*}
    \Pr_{\QuasiConsistent}\left[\QuasiConsistent(Q)_j = \QuasiConsistent(Q')_j \quad \forall j \in [\NumRep] \mbox{ such that } Q_j=Q'_j \right] \geq 1 - \eps
    \enspace.
\end{equation*}
\end{definition}

We will allow ourselves to use the informal term \emph{almost consistent},
when referring to a non-signaling function $\QuasiConsistent$ that is $(1-\eps)$-consistent for some small $\eps$.

\begin{claim}\label{claim:consistent-F}
Let $\QuasiConsistent \colon \Domain^\NumRep \to \Alphabet^\NumRep$ be a $\NumRep$-repeated
$\MaxQueries$-non-signaling function for $\MaxQueries \geq 3$,
and suppose that $\QuasiConsistent$ is $(1-\eps)$-\emph{consistent}.
Fix $Q,Q' \in \Domain^\NumRep$ and let $J = \{ j\in [\NumRep] : Q_j = Q'_j\}$.
Then, for any event $E \seq \Alphabet^J$ it holds that
\begin{equation*}
    \abs{ \Pr[\QuasiConsistent(Q)_{J} \in E] - \Pr[\QuasiConsistent(Q')_{J} \in E] } \leq \eps
    \enspace.
\end{equation*}

\end{claim}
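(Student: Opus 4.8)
The plan is to deduce the claim from the $(1-\eps)$-consistency condition by introducing a single auxiliary query set that simultaneously sees the relevant coordinates of both $Q$ and $Q'$, and then using the non-signaling property to compare the marginals. Concretely, I would set $S = \{Q, Q'\} \subseteq \Domain^\NumRep$, which has size at most $2 \leq \MaxQueries$ (this is where the hypothesis $\MaxQueries \geq 3$ gives us room; actually size $2$ suffices here, but having slack lets us add a third point if needed for a cleaner argument). Working inside the single distribution $\QuasiConsistent_S$, both $\QuasiConsistent(Q)$ and $\QuasiConsistent(Q')$ are defined on the same probability space, so we may speak of the joint event that their restrictions to $J$ agree.

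First I would observe that, by the non-signaling property, $\Pr[\QuasiConsistent(Q)_J \in E]$ computed from $\QuasiConsistent_S$ equals $\Pr[\QuasiConsistent(Q)_J \in E]$ computed from $\QuasiConsistent_{\{Q\}}$ (the marginals on $\{Q\}$, hence on the coordinate $Q$, agree), and similarly for $Q'$; so it is legitimate to do the whole computation inside $\QuasiConsistent_S$. Next, let $A$ be the event (in $\QuasiConsistent_S$) that $\QuasiConsistent(Q)_j = \QuasiConsistent(Q')_j$ for all $j \in J$. By definition of $(1-\eps)$-consistency applied to the pair $Q, Q'$ (whose agreement set is exactly $J$), we have $\Pr[A] \geq 1-\eps$. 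On the event $A$ we have $\QuasiConsistent(Q)_J = \QuasiConsistent(Q')_J$, hence $\QuasiConsistent(Q)_J \in E \iff \QuasiConsistent(Q')_J \in E$. Therefore
\begin{align*}
\bigl|\Pr[\QuasiConsistent(Q)_J \in E] - \Pr[\QuasiConsistent(Q')_J \in E]\bigr|
&= \bigl|\Pr[\QuasiConsistent(Q)_J \in E,\, A^c] - \Pr[\QuasiConsistent(Q')_J \in E,\, A^c]\bigr| \\
&\leq \Pr[A^c] \leq \eps,
\end{align*}
where the first equality uses that the two probabilities agree on $A$ and the inequality bounds each of the two nonnegative terms on $A^c$ by $\Pr[A^c]$ and takes the larger.

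I expect the only subtlety — and the thing to be careful about rather than a genuine obstacle — is the bookkeeping that justifies evaluating all three quantities ($\Pr[\QuasiConsistent(Q)_J \in E]$, $\Pr[\QuasiConsistent(Q')_J \in E]$, and $\Pr[A]$) inside one common distribution $\QuasiConsistent_S$; this is exactly what the non-signaling consistency guarantees, since the marginal of $\QuasiConsistent_S$ on the coordinate $Q$ matches that of $\QuasiConsistent_{\{Q\}}$ and likewise for $Q'$. One should also note the degenerate cases: if $Q = Q'$ then $J = [\NumRep]$ and the statement is trivial, and if $J = \emptyset$ then $\Alphabet^J$ is a single point and both probabilities are $0$ or $1$ together. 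With those remarks in place the argument above is complete.
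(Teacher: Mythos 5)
Your argument is correct and is essentially the paper's own proof: both work inside the joint distribution on the pair $\{Q,Q'\}$, invoke $(1-\eps)$-consistency to get the agreement event on $J$ with probability at least $1-\eps$, and use that on this event the two restrictions coincide (the paper phrases this as a two-sided inequality chain rather than your split into $A$ and $A^c$, but the content is identical). Your explicit remarks on why all quantities may be computed in the single distribution $\QuasiConsistent_S$ are a fine, slightly more careful rendering of what the paper leaves implicit.
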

\begin{proof}
    Note that
    \begin{align*}
      \Pr[\QuasiConsistent(Q)_{\mid J} \in E]
      & \geq \Pr[\QuasiConsistent(Q)_{\mid J} \in E \wedge \QuasiConsistent(Q)_{\mid J} = \QuasiConsistent(Q')_{\mid J}] \\
      & = \Pr[\QuasiConsistent(Q')_{\mid J} \in E \wedge \QuasiConsistent(Q)_{\mid J} = \QuasiConsistent(Q')_{\mid J}] \\
      & \geq \Pr[\QuasiConsistent(Q')_{\mid J} \in E]  - \eps
      \enspace,
    \end{align*}
    where the last inequality is by the assumption that $\QuasiConsistent$ is $(1-\eps)$-consistent.
    By symmetry, we also get the inequality in the other direction, and the claim follows.
\end{proof}

We observe that for $\Domain = \Bits^n$ and $\Alphabet = \Bits$ (almost) linearity implies (almost) consistency.
Specifically, we prove the following claim.

\begin{claim}
\label{claim:lin-implies-consistent}
  Let $\QuasiLinRep \colon (\Bits^n)^\NumRep \to \Bits^\NumRep$ be a $\NumRep$-repeated $\MaxQueries$-non-signaling function,
  and suppose that
  \begin{inparaenum}[(i)]
  \item $\QuasiLinRep$ is $(1-\eps)$-linear, and
  \item  $\Pr\left[ \QuasiLinRep(Q)_j = 0 \quad \forall j \in [\NumRep] \mbox{ such that } Q_j = 0^n \right] > 1 - \eps$
  for all $Q \in (\Bits^n)^\NumRep$.
  \end{inparaenum}
  Then, $\QuasiLinRep$ is $(1-2\eps)$-consistent when treated as a $(\MaxQueries-1)$-non-signaling function.
\end{claim}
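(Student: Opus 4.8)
The plan is to reduce consistency of $\QuasiLinRep$ at a pair $Q, Q'$ to a statement about the single query $Q + Q'$ (coordinate-wise addition mod $2$), exploiting that in coordinates $j$ where $Q_j = Q'_j$ the vector $Q+Q'$ has $j$-th block equal to $0^n$. Concretely, fix $Q, Q' \in (\Bits^n)^\NumRep$ and let $J = \{ j \in [\NumRep] : Q_j = Q'_j \}$. First I would take the query set $S = \{Q, Q', Q+Q'\}$, which has size at most $3 \leq \MaxQueries$, and work inside the distribution $\QuasiLinRep_S$. By hypothesis (i), with probability at least $1 - \eps$ we have $\QuasiLinRep(Q) + \QuasiLinRep(Q') = \QuasiLinRep(Q+Q')$ (equality in $\Bits^\NumRep$, hence coordinate-wise). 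By hypothesis (ii) applied to the point $Q + Q'$ — whose $j$-th block is $0^n$ for exactly the $j \in J$ — with probability at least $1 - \eps$ we have $\QuasiLinRep(Q+Q')_j = 0$ for all $j \in J$.

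Next I would union-bound these two events: with probability at least $1 - 2\eps$ (inside $\QuasiLinRep_S$) both hold simultaneously, and on that event, for every $j \in J$,
\begin{equation*}
\QuasiLinRep(Q)_j + \QuasiLinRep(Q')_j = \QuasiLinRep(Q+Q')_j = 0,
\end{equation*}
i.e. $\QuasiLinRep(Q)_j = \QuasiLinRep(Q')_j$ (working over $\Bits$). Therefore
\begin{equation*}
\Pr_{\QuasiLinRep_S}\bigl[\QuasiLinRep(Q)_j = \QuasiLinRep(Q')_j \ \ \forall j \in J\bigr] \geq 1 - 2\eps.
\end{equation*}

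Finally I would observe that consistency of $\QuasiLinRep$ viewed as a $(\MaxQueries - 1)$-non-signaling function only concerns the joint distribution of $\QuasiLinRep$ on the pair $\{Q, Q'\}$, which has size $2 \leq \MaxQueries - 1$. By the non-signaling property, the marginal of $\QuasiLinRep_S$ on $\{Q, Q'\}$ agrees with $\QuasiLinRep_{\{Q,Q'\}}$, so the inequality above transfers verbatim to the $(\MaxQueries-1)$-non-signaling function; since $Q, Q'$ were arbitrary, this is exactly the statement that $\QuasiLinRep$ is $(1 - 2\eps)$-consistent as a $(\MaxQueries-1)$-non-signaling function. The only mild subtlety — and the one place to be careful — is the bookkeeping that all three relevant query sets ($\{Q,Q',Q+Q'\}$ of size $\leq 3$, the singleton $\{Q+Q'\}$ used for hypothesis (ii), and $\{Q,Q'\}$) have size at most $\MaxQueries$, so that every probability above is well-defined and the non-signaling consistency between their overlapping marginals can be invoked; given $\MaxQueries \geq 3$ this is immediate, so there is no real obstacle here.
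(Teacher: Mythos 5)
Your proof is correct and follows essentially the same route as the paper's: query the auxiliary point $Q+Q'$, apply hypothesis (ii) to its zero blocks, combine with $(1-\eps)$-linearity on a set containing $Q,Q',Q+Q'$, and union-bound to get $1-2\eps$. The only (cosmetic) difference is the locality bookkeeping: the paper embeds $Q,Q'$ in an arbitrary query set $S$ of size at most $\MaxQueries-1$ and augments it with $Q''=Q+Q'$ (which is exactly where the loss of one in locality comes from), whereas you work directly with $\{Q,Q',Q+Q'\}$ and transfer the bound to the pair $\{Q,Q'\}$ via the exact non-signaling property — equivalent here since the function is exactly non-signaling.
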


\begin{proof}
Let $S \in (\Bits^n)^\NumRep$ be a set of queries of size $\abs{S} \leq \MaxQueries-1$.
Let $Q,Q' \in S$, and let $J = \{j \in [\NumRep] : Q_j = Q'_j$.
We show below that
\begin{equation*}
    \Pr_{\QuasiConsistent_{Q,Q'}}
    \left[\QuasiConsistent(Q)_j = \QuasiConsistent(Q')_j \quad \forall j \in J \right] \geq 1 - \eps
    \enspace.
\end{equation*}
Consider the set of queries $S' = S \cup \{Q''\}$, where $Q'' = Q+Q'$.
In particular, $Q''_j = 0^n$ for all $j \in J$.
By the assumption of the claim we get that
$\Pr\left[\QuasiLinRep(Q'')_j = 0 \quad \forall j \in J \right] \geq 1-\eps$.
Therefore, using the assumption that $\QuasiLinRep$ is $(1-\eps)$-linear it follows that
\begin{align*}
    \Pr\left[\QuasiConsistent(Q)_j \neq \QuasiConsistent(Q')_j \quad \forall j \in J \right]
    &\geq \Pr\left[\QuasiLinRep(Q)_j + \QuasiLinRep(Q')_j = \QuasiLinRep(Q'')_j \wedge \QuasiLinRep(Q'')_j = 0 \quad \forall j \in J \right]\\
    &\geq 1 - 2\eps
    \enspace.
\end{align*}
Therefore, $\QuasiLinRep$ is $(1-2\eps)$-consistent, as required.
\end{proof}

\subsection{Flattening of a \texorpdfstring{$\NumRep$}{t}-nsPCP}
\label{subsec:flatteningOfnsPCP}
Below we define the \emph{flattening} operation, which transforms a given $\NumRep$-repeated proof into a non-repeated proof in the natural way.
Namely, given a query set $S$ to the non-repeated proof, we create a vector $Q^S$ containing all the elements of $S$,
query the repeated proof on $Q^{S}$, and respond according to the received answer.

\begin{definition}
\label{def:flatten}
    Let $\NSStrategyRep \colon \Domain^\NumRep \to \Alphabet^\NumRep$ be a $\MaxQueries$-non-signaling $\NumRep$-repeated proof.
    Define the flattening of $\NSStrategyRep$, denoted by $\Flat\NSStrategy = \flatten{\NSStrategyRep} \colon \Domain \to \Alphabet$ as follows.
    For a query set $S = \{q_1,\dots,q_s\} \seq \Domain$ of size $s \leq \NumRep$,
    define a vector $Q^S$ whose first $s$ entries are $(q_1,\dots,q_s)$ and the rest are set arbitrarily,
    query $\NSStrategyRep$ on the single query $Q^S$, and let the distribution of
    $\Flat\NSStrategy(S)$ be
    \begin{equation*}
      \Flat\NSStrategy(S) = (\NSStrategyRep(Q^S)_1, \dots, \NSStrategyRep(Q^S)_s)
      \enspace.
    \end{equation*}
\end{definition}

\begin{claim}
\label{claim:flatten-almost-consistent}
Let $\QuasiConsistent \colon \Domain^\NumRep \to \Alphabet^\NumRep$  be a $\MaxQueries$-non-signaling function that is permutation folded and $(1-\eps)$-consistent for $k \geq 2$.
Then $\Flat\NSStrategy = \flatten{\QuasiConsistent}$ is a $(\eps, \NumRep)$-non-signaling function.

Furthermore, fix a query $Q =(w_1,\dots,w_\NumRep) \in \Domain^\NumRep$ for $\QuasiConsistent$,
a query set $S \seq \Domain$ of size $s$ for $\NSStrategy$,
also let $1 \leq \ell \leq \NumRep$ such that $w_1, \dots, w_\ell$ are distinct and $w_j \in S$ for all $j \in [\ell]$.
Then, the distribution of $\NSStrategy_S(\{w_1,\dots,w_\ell\})$
and
$(\QuasiConsistent(Q)_1, \dots, \QuasiConsistent(Q)_\ell)$ are $\eps$-close in total variation distance.
\end{claim}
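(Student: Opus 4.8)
The plan is to prove the two assertions of \cref{claim:flatten-almost-consistent} in order, relying crucially on \cref{claim:consistent-F} and the permutation-folding property. The first assertion — that $\Flat\NSStrategy$ is $(\eps,\NumRep)$-non-signaling — requires showing that for any two query sets $S, S' \seq \Domain$ of size at most $\NumRep$, the marginals of $\Flat\NSStrategy_S$ and $\Flat\NSStrategy_{S'}$ on $S \cap S'$ are $\eps$-close in total variation distance. By \cref{def:flatten}, $\Flat\NSStrategy_S$ is obtained by querying $\QuasiConsistent$ at a single vector $Q^S$ whose first $|S|$ coordinates enumerate $S$ (in some fixed order), and $\Flat\NSStrategy_{S'}$ similarly at $Q^{S'}$. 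The coordinates of $Q^S$ and $Q^{S'}$ that correspond to elements of $S \cap S'$ may sit in different positions within the two vectors; this is exactly where permutation-folding comes in. Using \cref{obs:permutation-folding}, since $\QuasiConsistent$ is permutation folded we may first permute $Q^{S'}$ so that the $|S\cap S'|$ entries lying in $S\cap S'$ occupy the same coordinate positions as they do in $Q^S$, without changing the relevant marginal (up to the corresponding permutation of answer coordinates, which we also apply). After this alignment, $Q^S$ and $Q^{S'}$ agree in those coordinates, so \cref{claim:consistent-F} (applied with $Q = Q^S$, $Q' = $ the aligned $Q^{S'}$, and $J$ the set of aligned coordinates) gives that the two answer-marginals on those coordinates differ by at most $\eps$ in total variation distance. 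This is precisely the $(\eps,\NumRep)$-non-signaling condition.

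For the ``furthermore'' part, fix $Q = (w_1,\dots,w_\NumRep)$ with $w_1,\dots,w_\ell$ distinct and contained in a query set $S$ of size $s$ for $\NSStrategy = \Flat\NSStrategy$. By \cref{def:flatten}, $\NSStrategy_S$ is defined via a single query $Q^S$ to $\QuasiConsistent$, where the first $s$ coordinates of $Q^S$ enumerate $S$ in some order; in particular each $w_j$ ($j \in [\ell]$) appears in exactly one coordinate of $Q^S$. Using permutation-folding as above, we may reorder $Q^S$ so that $w_1,\dots,w_\ell$ occupy the first $\ell$ coordinates, i.e. so that the first $\ell$ coordinates of the reordered $Q^S$ agree with the first $\ell$ coordinates of $Q$ (the $w_j$ are distinct, so this is consistent). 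Then \cref{claim:consistent-F}, applied to the reordered $Q^S$ and to $Q$ with $J = [\ell]$, yields that the distributions of $(\QuasiConsistent(Q^S)_1,\dots,\QuasiConsistent(Q^S)_\ell)$ and $(\QuasiConsistent(Q)_1,\dots,\QuasiConsistent(Q)_\ell)$ are $\eps$-close in total variation distance. Since the former distribution is exactly $\NSStrategy_S(\{w_1,\dots,w_\ell\})$ after matching up answer coordinates (again invoking permutation-folding to justify the coordinate reshuffle on the answer side), this is the claimed bound. The locality bookkeeping is mild: we only ever query $\QuasiConsistent$ on one or two vectors at a time, so $k \geq 2$ (and $k \geq 3$ for the invocations of \cref{claim:consistent-F}, which the hypothesis supplies) suffices.

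The main obstacle — or rather the main bookkeeping subtlety — is carefully tracking how permutation-folding interacts with the arbitrary padding in $Q^S$ and with the positions of shared queries across different flattening sets. One has to be precise that applying a permutation to the query vector is accompanied by the inverse permutation on the answer vector (as in \cref{obs:permutation-folding}), and that after this alignment the coordinates corresponding to $S \cap S'$ (resp.\ to $\{w_1,\dots,w_\ell\}$) genuinely coincide as domain elements, so that \cref{claim:consistent-F} applies with the right index set $J$. A minor point to handle is repeated entries: if $S$ has size $s < \NumRep$ the trailing coordinates of $Q^S$ are arbitrary and could coincide with elements of $S$, but since the claim only concerns the marginal on the first $s$ (distinct) coordinates, this causes no difficulty. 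Once the alignment is set up correctly, both parts reduce immediately to \cref{claim:consistent-F}.
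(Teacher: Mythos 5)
Your proposal is correct and follows essentially the same route as the paper's proof: in both parts you use permutation folding to align the shared queries (the elements of $S \cap S'$, resp.\ $w_1,\dots,w_\ell$) into common coordinate positions of the repeated queries, and then invoke \cref{claim:consistent-F} on the resulting pair of vectors to bound the total variation distance by $\eps$. The only cosmetic difference is that the paper permutes both $Q^S$ and $Q^T$ (querying $\QuasiConsistent$ on the pair simultaneously) while you permute one of them, and your parenthetical remark that the hypothesis supplies $k\geq 3$ echoes a mismatch already present in the paper ($\MaxQueries\geq 3$ in \cref{claim:consistent-F} versus $k\geq 2$ here), so it is not a gap in your argument relative to the paper's.
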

\begin{proof}
    To prove that $\Flat\NSStrategy$ is $(\eps, \NumRep)$-non-signaling function
    let $S, T \in \Domain$ be two sets of queries, and suppose $S \cap T = \{w_1, \dots, w_\ell\}$.
    We want to show that for any event $E \seq \Alphabet^{S \cap T}$ it holds that
    \begin{equation}\label{eq:flatten-almost-consistent-key-ineq}
        \left|
        \Pr_{\Flat\NSStrategy_S}[\Flat\NSStrategy_S {\mid_{S \cap T}} \in E]
        - \Pr_{\Flat\NSStrategy_T}[\Flat\NSStrategy_{T} {\mid_{S \cap T}} \in E]
        \right| \leq \eps
        \enspace.
    \end{equation}
    Define $Q^S, Q^T \in \Domain^{\NumRep}$ as in \cref{def:flatten}, let $\pi, \pi' \in S_\NumRep$ be permutations such that for all $j \in [\ell]$ it holds that $\pi(Q^S)_j = \pi'(Q^T)_j = w_j$. By non-signaling and permutation invariance of $\QuasiConsistent$, if we query it on $\{\pi(Q^S), \pi'(Q^T)\}$ we have:
    \begin{align*}
        \Pr_{\Flat\NSStrategy_S}[\Flat\NSStrategy_S {\mid_{S \cap T}} \in E] &=
        \Pr\left[ \left(\QuasiConsistent(\pi_1(Q^S)), \dots, \QuasiConsistent(\pi(Q^S))_\ell \right) \in E \right]
        \\
        \Pr_{\Flat\NSStrategy_T}[\Flat\NSStrategy_T{\mid_{S \cap T}} \in E] &=
        \Pr\left[ \left(\QuasiConsistent(\pi'(Q^T))_1, \dots, \QuasiConsistent(\pi'(Q^T))_\ell \right) \in E \right] \enspace.
    \end{align*}
    Then, by \cref{claim:consistent-F} we get the following:
    \[\abs{\Pr\left[ \left(\QuasiConsistent(\pi_1(Q^S)), \dots, \QuasiConsistent(\pi(Q^S))_\ell \right) \in E \right] - \Pr\left[ \left(\QuasiConsistent(\pi'(Q^T))_1, \dots, \QuasiConsistent(\pi'(Q^T))_\ell \right) \in E \right]} \leq \eps\]
    which proves \cref{eq:flatten-almost-consistent-key-ineq}.
    Therefore, $\Flat\NSStrategy$ is a $(\eps, \NumRep)$-non-signaling function.

    \medskip

    Next we prove the second part of the claim.
    Given $S$, define $Q^S \in \Domain^{\NumRep}$ as in \cref{def:flatten},
    and consider the query set $\{Q^S,Q\}$ to $\QuasiConsistent$.
    Since $\QuasiConsistent$ is permutation folded, we may assume that
    $Q_j = Q^S_j = w_j$ for all $j \in [\ell]$.
    Therefore, for any $E \seq \Alphabet^{\ell}$ we have:
    \begin{align*}
      &\abs{\Pr\left[\left(\Flat\NSStrategy_S(w_1), \dots, \Flat\NSStrategy_S(w_\ell)\right) \in E\right] - \Pr\left[\left(\QuasiConsistent(Q)_1, \dots, \QuasiConsistent(Q)_\ell\right) \in E\right]} \\
      &=\abs{\Pr\left[\left(\QuasiConsistent(Q^S)_1, \dots, \QuasiConsistent(Q^S)_\ell\right) \in E\right] - \Pr\left[\left(\QuasiConsistent(Q)_1, \dots, \QuasiConsistent(Q)_\ell\right) \in E\right]}
      \enspace,
    \end{align*}
    which is upper bounded by $\eps$ by \cref{claim:consistent-F}.
    This complets the proof of \cref{claim:flatten-almost-consistent}
\end{proof}

The following claim is follows rather immediately from \cref{claim:flatten-almost-consistent} above.

\begin{claim}
\label{claim:flatten-linear-proof}
Let $\MaxQueries \geq 4$, and let
$\QuasiLinRep \colon (\Bits^n)^\NumRep \to \Bits^\NumRep$ be a
$\MaxQueries$-non-signaling function that is permutation folded, $(1-\eps_1)$-linear,
and $(1-\eps_2)$-consistent.
Then $\Flat{\QuasiLin} = \flatten{\QuasiLinRep}$ is a (non-repeated) $(\eps_2,\NumRep)$-non-signaling $(1-\eps_1-3\eps_2)$-linear function.
\end{claim}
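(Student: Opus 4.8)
The plan is to deduce both parts of the claim from \cref{claim:flatten-almost-consistent}. The non-signaling part is free: since $\QuasiLinRep$ is $\MaxQueries$-non-signaling, permutation folded, and $(1-\eps_2)$-consistent with $\MaxQueries\geq 4\geq 2$, the first part of \cref{claim:flatten-almost-consistent} (with its parameter equal to $\eps_2$) says verbatim that $\Flat{\QuasiLin}=\flatten{\QuasiLinRep}$ is an $(\eps_2,\NumRep)$-non-signaling function. It remains to establish $(1-\eps_1-3\eps_2)$-linearity.

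Fix $x,y\in\Bits^n$ and a query set $S$ with $x,y,x+y\in S$ and $\abs{S}\leq\NumRep$; I would first treat the generic case in which $x,y,x+y$ are pairwise distinct (which already forces $n\geq 2$). The reduction step is to pass from $\Flat{\QuasiLin}$ to a single convenient query of $\QuasiLinRep$: applying the ``Furthermore'' part of \cref{claim:flatten-almost-consistent} with $\ell=3$ to the vector $Q=(x,\,y,\,x+y,\,0^n,\dots,0^n)\in(\Bits^n)^\NumRep$ shows that $\big(\Flat{\QuasiLin}_S(x),\Flat{\QuasiLin}_S(y),\Flat{\QuasiLin}_S(x+y)\big)$ and $\big(\QuasiLinRep(Q)_1,\QuasiLinRep(Q)_2,\QuasiLinRep(Q)_3\big)$ are $\eps_2$-close in total variation distance. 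Hence it suffices to show that $\QuasiLinRep(Q)_1+\QuasiLinRep(Q)_2=\QuasiLinRep(Q)_3$ with probability at least $1-\eps_1-2\eps_2$.

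For this I would introduce two helper vectors. Pick $a\in\Bits^n\setminus\{0^n,y\}$ (such $a$ exists because $n\geq 2$ so $\Bits^n$ has at least four elements) and set $A=(x+y,\,a,\,x,\,0^n,\dots,0^n)$ and $B=Q+A=(y,\,y+a,\,y,\,0^n,\dots,0^n)$, so that $A+B=Q$, $A_3=x$ and $B_3=y$. Under the standing assumptions one checks that $Q,A,B$ together with the permuted vectors $\tau(Q),\rho(Q)$ are pairwise distinct, where $\tau$ and $\rho$ are the transpositions of coordinates $\{1,3\}$ and $\{2,3\}$. Now query $\QuasiLinRep$ on the size-$3$ set $\{Q,A,B\}$. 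By $(1-\eps_1)$-linearity, $\QuasiLinRep(A)_3+\QuasiLinRep(B)_3=\QuasiLinRep(Q)_3$ with probability at least $1-\eps_1$. By permutation folding applied to $\{Q,A,B\}$, with $\tau$ acting on the query $Q$ and the identity on $A,B$, the joint distribution of $\big(\QuasiLinRep(Q)_1,\QuasiLinRep(A)_3\big)$ equals that of $\big(\QuasiLinRep(\tau(Q))_3,\QuasiLinRep(A)_3\big)$ on the set $\{\tau(Q),A,B\}$; since $\tau(Q)_3=Q_1=x=A_3$, $(1-\eps_2)$-consistency of the pair $\tau(Q),A$ gives $\QuasiLinRep(\tau(Q))_3=\QuasiLinRep(A)_3$ with probability at least $1-\eps_2$, hence $\QuasiLinRep(Q)_1=\QuasiLinRep(A)_3$ with probability at least $1-\eps_2$. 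Symmetrically, via $\rho$, the set $\{\rho(Q),A,B\}$, and the identity $\rho(Q)_3=Q_2=y=B_3$, we get $\QuasiLinRep(Q)_2=\QuasiLinRep(B)_3$ with probability at least $1-\eps_2$. A union bound over these three events yields $\QuasiLinRep(Q)_1+\QuasiLinRep(Q)_2=\QuasiLinRep(Q)_3$ with probability at least $1-\eps_1-2\eps_2$, which with the $\eps_2$ lost in the reduction gives the claimed $1-\eps_1-3\eps_2$.

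Two remaining points I would clean up. The degenerate cases $x=0^n$, $y=0^n$, or $x=y$ (which subsume $n=1$, since there $x,y,x+y$ can never be pairwise distinct) all reduce the linearity relation for $\Flat{\QuasiLin}$ to the single statement ``$\Flat{\QuasiLin}(0^n)=0$ with high probability'', which follows by applying $(1-\eps_1)$-linearity to $X=Y=(0^n,\dots,0^n)$ (so $\QuasiLinRep$ on the all-zero vector is $0$ in every coordinate with probability at least $1-\eps_1$) together with $(1-\eps_2)$-consistency against the all-zero vector, for a total of at least $1-\eps_1-\eps_2$. And the distinctness of the auxiliary size-$3$ sets, needed so that permutation folding applies cleanly, is exactly what the choice $a\notin\{0^n,y\}$ together with $x,y,x+y$ distinct secures. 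The one step I expect to require genuine care — rather than routine computation — is the combined use of permutation folding and consistency: consistency equates answers only at the \emph{same} coordinate of two queries, so the mismatch between coordinate $1$ of $Q$ (holding $x$) and coordinate $3$ of $A$ (also holding $x$) must be routed through the permuted query $\tau(Q)$ rather than appealed to directly, and getting that bookkeeping right is where the argument has to be precise.
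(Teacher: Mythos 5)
Your proof is correct and follows essentially the same route as the paper: obtain the $(\eps_2,\NumRep)$-non-signaling property from \cref{claim:flatten-almost-consistent}, reduce the linearity event for $\Flat{\QuasiLin}$ to a single repeated query holding $x,y,x+y$ in fixed coordinates, and then combine one application of $(1-\eps_1)$-linearity on an auxiliary additive triple with consistency (via permutation folding) transfers to reach $1-\eps_1-3\eps_2$. The paper uses the triple $(x,0^n,\dots)$, $(y,0^n,\dots)$, $(x+y,0^n,\dots)$ and pays $3\eps_2$ in consistency steps, whereas you pay $\eps_2$ through the ``Furthermore'' part and $2\eps_2$ through your $A,B$ construction; your explicit routing through permuted queries and your treatment of the degenerate cases ($x=0$, $y=0$, $x=y$) carefully spell out bookkeeping that the paper's proof leaves implicit.
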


\begin{proof}
    By applying \cref{claim:flatten-almost-consistent} on $\QuasiLinRep$,
    we get that $\Flat{\QuasiLin} = \flatten{\QuasiLinRep}$ is a $(\eps_2, \NumRep)$-non-signaling function.
    Next we prove that $\Flat{\QuasiLin}$ is $(1-(\eps_1 + 3\eps_2))$-linear.
    Fix $x,y \in \Bits^n$,
    and let $S \seq \Bits^n$ be a query set for $\Flat{\QuasiLin}$ such that $\{x,y,x+y\} \seq S$.
    We want to prove that
    \begin{equation}\label{eq:flatten-linear-proof-key-ineq}
        \Pr[\Flat{\QuasiLin}(x) + \Flat{\QuasiLin}(y) = \Flat{\QuasiLin}(x+y)] \geq 1 - \eps_1 - 3\eps_2
        \enspace.
    \end{equation}
    Let $Q^S$ be as in \cref{def:flatten}.
    By the permutation folding property of $\QuasiLinRep$ we may assume that the first three coordinates of $Q^S$ are $x,y,x+y$.
    That is $Q^S_1 = x, Q^S_2 = y$, and $Q^S_3 = x+y$.

    By definition of $Q^S$ we have
    $\Pr[\Flat{\QuasiLin}(x) + \Flat{\QuasiLin}(y) = \Flat{\QuasiLin}(x+y)] = \Pr[\QuasiLinRep(Q^S)_1 + \QuasiLinRep(Q^S)_2 = \QuasiLinRep(Q^S)_3]$.
    Consider now the vectors $Q^x = (x,0^n,0^n,\dots,0^n)$, $Q^y = (y,0^n,0^n,\dots,0^n)$, and $Q^{x+y} = (x+y,0^n,0^n,\dots,0^n)$.
    Since $\QuasiLinRep$ is $(1-\eps_1)$-linear, we get that
    $\Pr[\QuasiLinRep(Q^x) + \QuasiLinRep(Q^y) = \QuasiLinRep(Q^{x+y})] \geq 1-\eps_1$.
    Since $\QuasiLinRep$ is $(1-\eps_2)$-consistent,
    it follows that
    \begin{align*}
        \Pr[\Flat{\QuasiLin}(x) + \Flat{\QuasiLin}(y) = \Flat{\QuasiLin}(x+y)]
        & = \Pr[\QuasiLinRep(Q^S)_1 + \QuasiLinRep(Q^S)_2 = \QuasiLinRep(Q^S)_3] \\
        & \geq \Pr[\QuasiLinRep(Q^x) + \QuasiLinRep(Q^y) = \QuasiLinRep(Q^{x+y})] - 3\eps_2 \\
        & \geq 1 - \eps_1 - 3\eps_2
        \enspace,
    \end{align*}
    as required.
\end{proof}

\doclearpage
\section{Main result}
\label{sec:main-result}

In this section we formally state the main result of the paper.
In order to describe the result we need to first state the hypothesis conditioned on which our main theorem holds.

\begin{hypothesis}\label{hyp:rounding}
    Fix integers $n$ and $k \leq 2^n$, and let $\eps \in (0,1)$.
    For any $(\eps, \MaxQueries)$-almost non-signaling function $\NSStrategy \colon \Bits^n \to \Bits$
    that is $(1-\eps)$-linear there exists a $\MaxQueries'$-non-signaling function $\NSStrategy' \colon \Bits^n \to \Bits$ such that
    $\Distance{4}(\NSStrategy, \NSStrategy') \leq \eps'$,
    where $\MaxQueries' \geq \MaxQueries^{\expHypothesis}$ for some positive absolute constant $\expHypothesis > 0$,
    and $\eps'=\eps'_{\hyp}(\eps)$ is some function that depends only on $\eps$ such that $\eps'_{\hyp}(\eps) \to 0$ as $\eps \to 0$.
\end{hypothesis}

\begin{remark}
We make two remarks regarding the hypothesis.
\begin{itemize}
    \item A statement analogous to \cref{hyp:rounding} has been proven in \cite{ChiesaMS20},
        showing that there exist a $\MaxQueries$-non-signaling function $\NSStrategy' \colon \Bits^n \to \Bits$
        such that $\Distance{\MaxQueries}(\NSStrategy, \NSStrategy') \leq O(4^\MaxQueries \cdot \eps)$.
        The multiplicative factor of $4^\MaxQueries$ is too large, which makes it insufficient for our applications.
    \item For our applications, we need a much weaker version of \cref{hyp:rounding}.
        We elaborate more on the hypothesis in \cref{sec:hypothesis-discussion}.
\end{itemize}
\end{remark}

For a computable function $\NumWires \colon \N \to \N$
we denote by $\SIZE(\NumWires)$ the complexity class of all languages $\Language$
having a uniform family of boolean circuits $(\Circuit_n \colon \Bits^n \to \Bits)_{n \in \N}$
of maximum fan-in 2 with AND, OR, and NOT gates,
such that $\Circuit_n$ has at most $\NumWires(n)$ wires for all $n \in \N$.%
\footnote{Note that our complexity measure for the size of a circuit is the number of wires, (and not the number of gates, which is more standard)
as this measure directly affects the complexity of the PCP construction.
However, for circuits with bounded fan-in, the two quantities are equal up to a multiplicative constant factor.}

\begin{stheorem}[Main theorem]
\label{thm:main}
Assuming \cref{hyp:rounding}
every language $\Language \in \SIZE(\NumWires)$ has an input oblivious nsPCP verifier that
on input of length $n$ uses $\widetilde{O}(\NumWires^2)$ random bits,
makes $O(1)$ queries to proofs over the alphabet $\Sigma=\Bits^{\polylog(\NumWires)}$, and is sound against $O(1)$-non-signaling proofs.
The query sampler runs in time $\widetilde{O}(\NumWires^2)$, and the decision predicate runs in time $O(\InputSize \cdot \polylog(\NumWires))$.
That is,
\begin{equation*}
\SIZE(\NumWires)
\subseteq
\nsPCPParams{1-\Omega(1)}{\widetilde{O}(\NumWires^2)}{2^{\widetilde{O}(\NumWires^2)}}{4}{O(1)}{\widetilde{O}(\NumWires^2)}{O(\InputSize \cdot \polylog(\NumWires))}
\enspace.
\end{equation*}
\end{stheorem}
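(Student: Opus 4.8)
The plan is to run a bundled verifier on a $t$-fold parallel repetition of the exponential-length ALMSS PCP with $t=\polylog(\NumWires)$, and to push soundness all the way back to the (non-repeated) ALMSS verifier --- whose non-signaling soundness is already available from \cite{ChiesaMS19} --- through a chain of structural reductions: folding, local testing, flattening, and the rounding granted by \cref{hyp:rounding}. Concretely, given $\Language\in\SIZE(\NumWires)$ with circuits $\Circuit_n$ of size $\NumWires$, fix the Hadamard-based ALMSS PCP for the circuit-satisfiability instance ``$\Circuit_n(x)=1$''; its proof lives over a domain $\Domain=\Bits^m$ with $m=\widetilde{O}(\NumWires^2)$, the verifier is input oblivious, makes $O(1)$ queries and has perfect completeness with constant soundness error, and for an honest prover the proof is a linear function. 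Our nsPCP verifier $\Verifier$ operates on a $t$-repeated proof $\NSStrategyRep\colon\Domain^t\to\Bits^t$, which by \cref{obs:permutation-folding} we may assume permutation folded, and with probability $1/3$ each it runs one of: (i) the $t$-repeated ALMSS verifier $\Verifier^{(t)}$; (ii) a linearity test on $\NSStrategyRep$ as in \cref{def:linear-rep-ns-fun}; (iii) a direct-product/agreement test on $\NSStrategyRep$, querying two vectors that overlap on a random coordinate set and checking agreement there in the sense of \cref{def:consistent-rep-ns-fun}. Each sub-test makes $O(1)$ queries, so $\Verifier$ has locality $O(1)$ and (after collapsing to the active sub-predicate) reads at most $4$ proof symbols to decide; the query sampler writes $O(1)$ vectors in $\Domain^t$ in time $\widetilde{O}(\NumWires^2)$, the proof has length $|\Domain^t|=2^{\widetilde{O}(\NumWires^2)}$ over the alphabet $\Bits^t=\Bits^{\polylog\NumWires}$, and the decision predicate reads $x$ and $O(1)$ symbols, running in time $O(n\cdot\polylog\NumWires)$. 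Completeness is immediate: if $x\in\Language$, the honest ALMSS proof, repeated and folded, is exactly linear and an exact direct product, so all three sub-tests accept with probability $1$.

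For soundness, suppose $x\notin\Language$ yet some $O(1)$-non-signaling proof is accepted by $\Verifier$ with probability $\ge 1-\gamma_0$ for a small absolute constant $\gamma_0$ fixed below; I derive $x\in\Language$. Since each sub-test runs with probability $1/3$, the proof passes the linearity test, the agreement test and $\Verifier^{(t)}$ each with probability $\ge 1-3\gamma_0$. Folding preserves non-signaling, linearity and consistency, and --- because $\Verifier^{(t)}$, the linearity test and the agreement test are all symmetric under permuting the $t$ coordinates --- the acceptance probability, so we may take $\NSStrategyRep$ permutation folded. The key analytic step is then the non-signaling analysis of the linearity and direct-product tests (extending the non-signaling linearity-test analysis of \cite{ChiesaMS20} to $t$-repeated proofs, adding a non-signaling direct-product argument, and a local self-correction step, cf.\ \cref{claim:lin-implies-consistent}): passing all sub-tests with probability $\ge 1-3\gamma_0$ should force $\NSStrategyRep$ to be $(1-\eps_1)$-linear and $(1-\eps_2)$-consistent with $\eps_1,\eps_2=O(\gamma_0)$ --- i.e.\ ``nicely structured''. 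Plugging this into \cref{claim:flatten-linear-proof} produces a non-repeated proof $\Flat\NSStrategy=\flatten{\NSStrategyRep}$ over $\Bits^m$ that is $(\bar\eps,t)$-non-signaling and $(1-\bar\eps)$-linear for $\bar\eps=\eps_1+3\eps_2=O(\gamma_0)$. Moreover the structure transfers acceptance: for a single copy of $\Verifier^{(t)}$ the queries restricted to one coordinate are distributed exactly as the queries of the non-repeated ALMSS verifier on the base proof, and by permutation invariance together with almost-consistency (a variant of the ``furthermore'' part of \cref{claim:flatten-almost-consistent}, union-bounded over the $O(1)$ queries) the answers coincide up to $O(\eps_2)$ with those returned by $\Flat\NSStrategy$; since all $t$ copies of $\Verifier^{(t)}$ accept with probability $\ge 1-3\gamma_0$, the non-repeated ALMSS verifier accepts $\Flat\NSStrategy$ with probability $\ge 1-O(\gamma_0)$.

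Now apply \cref{hyp:rounding} to $\Flat\NSStrategy$: there is a genuinely $t'$-non-signaling $\NSStrategy'\colon\Bits^m\to\Bits$ with $t'\ge t^{\expHypothesis}$ and $\Distance{4}(\Flat\NSStrategy,\NSStrategy')\le\eps'_{\hyp}(\bar\eps)$. Because the ALMSS decision predicate reads at most $4$ coordinates, non-signaling lets us pass to the marginals on them, so the non-repeated ALMSS verifier accepts $\NSStrategy'$ with probability at least $1-O(\gamma_0)-\eps'_{\hyp}(\bar\eps)$. Choose $t=\polylog(\NumWires)$ large enough that $t^{\expHypothesis}$ exceeds the locality threshold below which \cite{ChiesaMS19} proves the ALMSS verifier sound against non-signaling proofs --- this threshold is $\polylog(\NumWires)$ for an ALMSS instance built from a size-$\NumWires$ circuit, consistent with the claimed proof length and alphabet --- and choose $\gamma_0$ small enough that $O(\gamma_0)+\eps'_{\hyp}(\bar\eps)$ falls below the soundness gap of \cite{ChiesaMS19} (possible since $\eps'_{\hyp}(\eps)\to 0$ as $\eps\to 0$). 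Then the non-repeated ALMSS verifier accepts $\NSStrategy'$ above its non-signaling soundness error, so $x\in\Language$ --- a contradiction. Hence $\Verifier$ has soundness error $1-\Omega(1)$, completing the proof.

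The main obstacle is the non-signaling analysis of the linearity and direct-product (agreement) tests together with the accompanying self-correction --- converting ``all three sub-tests pass with high probability'' into ``permutation folded, almost linear, almost consistent''. Classically these are routine, but a non-signaling proof is a family of locally consistent distributions rather than a fixed string and cannot be queried adaptively, so the standard linearity-testing and direct-product-testing arguments must be rebuilt using only the non-signaling constraints as leverage, while keeping the soundness slack a constant. A secondary subtlety is the flattening-transfer argument --- ensuring the single-copy behaviour of the repeated verifier genuinely matches the flattened proof --- together with the parameter bookkeeping that keeps the locality at $O(1)$ and the repetition parameter $t$ polylogarithmic while still clearing the \cite{ChiesaMS19} threshold through \cref{hyp:rounding}.
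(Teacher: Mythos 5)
Your overall route is the same as the paper's (parallel repetition of the linear ALMSS verifier, linearity plus consistency/direct-product testing, flattening, rounding via \cref{hyp:rounding}, and reduction to the non-signaling soundness of the non-repeated verifier from \cite{ChiesaMS19}), but the step you flag as the ``main obstacle'' is resolved incorrectly as you state it, and this is not a detail. You claim that passing the linearity and agreement tests with probability $1-O(\gamma_0)$ ``should force $\NSStrategyRep$ to be $(1-\eps_1)$-linear and $(1-\eps_2)$-consistent.'' This implication is false for the proof itself: a deterministic function obtained from a linear function by corrupting a small fraction of outputs passes both tests with high probability yet violates specific linearity and consistency constraints with probability $1$, and the flattening argument of \cref{claim:flatten-linear-proof} needs the per-constraint (``for all $X,Y,X+Y$'' and ``for all $Q,Q'$'') guarantees, since it queries the repeated proof at highly structured points such as $(x,0^n,\dots,0^n)$. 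The paper's fix is not merely analytic but changes the verifier: in \cref{alg:repeated-ALMSS} the repeated linear ALMSS checks are run on the \emph{self-correction} $\CorrectNSStrategyRep[\NumRep]$ of the proof (each query simulated by two randomly shifted queries to $\NSStrategyRep[2\NumRep]$, \cref{def:self-correction}), and it is $\CorrectNSStrategyRep[\NumRep]$ --- not $\NSStrategyRep[2\NumRep]$ --- that \cref{thm:self-correction-structure} shows to be permutation folded, $(1-O(\eps))$-linear and $(1-O(\eps))$-consistent, and that is subsequently flattened. In your design the repeated ALMSS test is run directly on the raw proof and the flattening is applied to the raw proof, so the chain from ``tests pass'' to ``structured'' to ``flattening transfers acceptance'' does not go through as written.

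A second gap is at the end: after rounding you invoke the soundness of the non-repeated ALMSS verifier on $\NSStrategy'$, but \cref{thm:nsLPCP-CMS19} is stated only for \emph{linear} non-signaling proofs, whereas $\NSStrategy'$ is merely $(1-O(\eps))$-linear. The paper needs an additional step, \cref{thm:almost-lin-function} (Theorem~7 of \cite{ChiesaMS19}), to replace $\NSStrategy'$ by an exactly linear $\linlocalityhat$-non-signaling function $\QuasiLin$ that is $O(\sqrt{\eps})$-close on all $4$-query events, at the cost of reducing the locality from $\NumRep'$ to roughly $\Omega(\sqrt{\NumRep'})$; only then can \cref{thm:nsLPCP-CMS19} be applied. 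This also affects your parameter bookkeeping: one needs $\NumRep \geq K\log^{2/\expHypothesis}(\NumWires)$ so that after both the hypothesis loss ($\NumRep \mapsto \NumRep^{\expHypothesis}$) and the square-root loss the final locality still clears the $O(\log \NumWires)$ threshold of \cref{thm:nsLPCP-CMS19}. Your choice ``$t=\polylog(\NumWires)$ large enough'' can absorb this, but the missing linearization step itself must be made explicit, since without it the invocation of the \cite{ChiesaMS19} soundness theorem is not justified.
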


It is clear that \cref{thm:main-informal} follows from \cref{thm:main} since $\DTIME(T) \seq \SIZE(O(T \log(T)))$.

\doclearpage
\section{The PCP construction}
\label{sec:PCP-construction}

In this section we formally describe our PCP construction.
In one sentence, the PCP verifier gets a permutation invariant proof
$\NSStrategyRep \colon (\Bits^{\NumWires^2})^\NumRep \to \Bits^\NumRep$,
runs on it linearity test, direct product test, and the parallel repetition of the ALMSS verifier,
and accepts if and only if all tests accepts.

We start by recalling the setting of the PCP verifier of \cite{AroraLMSS98} (the ``linear ALMSS verifier'').
Let $\Language \in \SIZE(\NumWires)$ be a language, and let $\{\Circuit_n\}_{n \in \N}$ be a uniform family of boolean circuits
with $\NumWires = \NumWires(n)$ wires that decides $\Language$. That is, for all inputs $x \in \Bits^n$ of length $n$
it holds that $C_n(x) = 1$ if and only if $x \in \Language$.

For a given length $n$ let $\Circuit \DefineEqual \Circuit_n$ be the circuit corresponding to the computation on inputs of length $n$.
The computation of $\Circuit$ on the input $x$ is viewed as a system
of $\NumConstraints \DefineEqual \NumWires+1$ constraints $\{ \ConstraintPoly_{j}(\WiresVars) = \ConstraintConstant_{j} \}_{j \in [\NumConstraints]}$ over $\NumWires$ boolean variables $\WiresVars = (\Wire_1,\dots,\Wire_\NumWires) \in \Bits^{\NumWires}$, where $\ConstraintPoly_1,\dots,\ConstraintPoly_{\NumConstraints} \colon \Bits^{\NumWires} \to \Bits$ are quadratic polynomials (each involving at most three variables in $\WiresVars$) and $\ConstraintConstant_{1},\dots,\ConstraintConstant_{\NumConstraints}$ are boolean constants.
Each variable represents the value of one of the wires of $\Circuit$ during the computation on the input $x$.
In particular, the first $\InputSize$ variables, $\Wire_1,\dots,\Wire_{\InputSize}$, correspond to the $\InputSize$ input wires,
and the variable $\Wire_\NumWires$ corresponds to the output wire.
The constraints are of three types:
\begin{description}

\item [Input consistency:]
For every $j \in \{1,\dots,\InputSize\}$,
$\ConstraintPoly_{j}(\WiresVars) \DefineEqual \Wire_{i}$ and $\ConstraintConstant_{j} \DefineEqual x_{j}$.

\item [Gate consistency:]\
For every $j \in \{\InputSize+1,\dots,\NumWires\}$,
\begin{itemize}

  \item If the wire represented by the variable $\Wire_{j}$ is an output of an AND gate $\Gate$, where the inputs to $\Gate$ are given by $\Wire_{j_1}, \Wire_{j_2}$, then $\ConstraintPoly_{j}(\WiresVars) \DefineEqual \Wire_{j} - \Wire_{j_1} \cdot \Wire_{j_2}$ and $\ConstraintConstant_{j} \DefineEqual 0$.

  \item If the wire represented by the variable $\Wire_{j}$ is an output of an OR gate $\Gate$, where the inputs to $\Gate$ are given by $\Wire_{j_1}, \Wire_{j_2}$, then $\ConstraintPoly_{j}(\WiresVars) \DefineEqual \Wire_{j} - \Wire_{j_1} - \Wire_{j_2} + \Wire_{j_1} \cdot \Wire_{j_2}$ and $\ConstraintConstant_{j} \DefineEqual 0$.

  \item If the wire represented by the variable $\Wire_{j}$ is an output of a NOT gate $\Gate$, where the input to $\Gate$ is given by $\Wire_{j_1}$, then $\ConstraintPoly_{j}(\WiresVars) \DefineEqual \Wire_{j} - \Wire_{j_1}$ and $\ConstraintConstant_{j} \DefineEqual 1$.

\end{itemize}

\item [Accepting output:]
$\ConstraintPoly_{\NumConstraints}(\WiresVars) \DefineEqual \Wire_{\NumWires}$ and $\ConstraintConstant_{\NumConstraints} \DefineEqual 1$.

\end{description}
We overload notation, and use $\ConstraintPoly_{j}$ to also denote the upper triangular matrix in $\Bits^{\NumWires^2}$ with $\ConstraintPoly_{j}(\WiresVars) = \ip{\ConstraintPoly_{j}, \WiresVars \otimes \WiresVars}$
That is, if $\ConstraintPoly_{j}(\WiresVars) = \sum_{i=1}^{\NumWires} a_i \Wire_i +  \sum_{1\leq i<i' \leq \NumWires} a_{i,i'} \Wire_i \Wire_{i'}$,
then the corresponding matrix has $a_i$ in the diagonal entry $(i,i)$ and $a_{i,i'}$ in the entry $(i,i')$, for $1 \leq i < i' \leq \NumWires$.
Also, for $a \in \Bits^{\NumWires}$, denote by $\Diagonal{a}$ the diagonal matrix in $\Bits^{\NumWires^2}$ whose diagonal is $a$.

\subsection{The linear ALMSS verifier}
\label{sec:pcp-construction-linear-almss}

The \emph{linear} ALMSS verifier of \cite{AroraLMSS98} is defined as follows.

\begin{algorithm}[H]
\caption{The linear ALMSS verifier}\label{alg:linear-ALMSS}

    \InputExplicity{A circuit $\Circuit \colon \Bits^n \to \Bits$ with $\NumWires$ wires, and input $x \in \Bits^n$ to $\Circuit$.}
    \InputOracle{A $\MaxQueries$-non-signaling linear function $\QuasiLinRep \colon \Bits^{\NumWires^2} \to \Bits$.}

    \medskip
    Use the circuit $\Circuit$ and input $x$ to construct the matrices $\ConstraintPoly_1,\dots,\ConstraintPoly_{\NumConstraints} \in \Bits^{\NumWires^2}$ and constants $\ConstraintConstant_{1},\dots,\ConstraintConstant_{\NumConstraints} \in \Bits$ representing the computation of $\Circuit$ on $x$.

    Sample $u,v \in \Bits^{\NumWires}$ and $s \in \Bits^{\NumConstraints}$ uniformly and independently at random.

    Query the oracle $\QuasiLin$ on the 4-element set $S = \{ \Diagonal{u}, \Diagonal{v}, u \otimes v, \sum_{j = 1}^{\NumConstraints} s_j \ConstraintPoly_j \}$.

    \Return ACCEPT if and only if $\QuasiLin(\Diagonal{u})\QuasiLin(\Diagonal{v}) = \QuasiLin(u \otimes v)$ and $\QuasiLin(\sum_{j = 1}^{\NumConstraints} s_j \ConstraintPoly_j) = \sum_{j = 1}^{\NumConstraints} s_j \ConstraintConstant_j$.

\end{algorithm}

\noindent
That is, the verifier makes 4 queries to a linear proof $\QuasiLin \colon \Bits^{\NumWires^2} \to \Bits$ (of exponential length).

\parhead{Completeness}
Completeness of the ALMSS verifier is the same as in the classical setting.
Indeed, $C(x) = 1$, then the classical proof defined by the design is accepted with probability 1.

\parhead{Soundness} For soundness, Chiesa~et~al.~\cite{ChiesaMS19} proved that this construction is indeed sound against linear $O(\log \NumWires)$-non-signaling proofs with soundness error bounded below 1.

\begin{theorem}[Theorem~6 in \cite{ChiesaMS19}]\label{thm:nsLPCP-CMS19}
For any language $\Language \in \SIZE(\NumWires)$ there is an input oblivious PCP system,
where the verifier gets as an explicit input
a circuit $\Circuit$ of size $\NumWires = \NumWires(n)$ deciding $\Language$ and an input $x \in \Bits^n$,
and an oracle access to a \emph{linear} proof $\pi \Bits^{O(\NumWires^2)} \to \Bits$.
The verifier uses $O(\NumWires)$ random coins, makes $4$ queries to the proof that are independent of $x$.
If $x \in \Language$, then there exists a (classical) proof that causes the verifier to accepts with probability $1$.
If $x \not\in \Language$, then for any $O(\log(\NumWires))$-non-signaling linear proof the verifier to accepts with probability at most $39/40$.

That is, we have
\begin{equation*}
\SIZE(\NumWires)
\subseteq
\nsLPCPParams{39/40}{O(\NumWires)}{2^{O(\NumWires^2)}}{4}{O(\log \NumWires)}{O(\NumWires^2)}{O(\InputSize)}
\enspace.
\end{equation*}
\end{theorem}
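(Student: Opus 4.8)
The plan is to adapt the classical Hadamard/ALMSS analysis to the non-signaling setting. \textbf{Completeness} is immediate: if $\Circuit(x) = 1$ there is a wire assignment $w \in \Bits^{\NumWires}$ consistent with the input $x$ and with all gate and output constraints, and the honest proof is the Hadamard encoding $\pi(A) = \ip{A, w \otimes w}$ of the rank-one matrix $w \otimes w \in \Bits^{\NumWires^2}$. Since $w_i^2 = w_i$ over $\Bits$ one gets $\pi(\Diagonal{u}) = \ip{u, w}$, so $\pi(\Diagonal{u})\pi(\Diagonal{v}) = \ip{u \otimes v, w \otimes w} = \pi(u \otimes v)$ and $\pi(\sum_j s_j \ConstraintPoly_j) = \sum_j s_j \ConstraintPoly_j(w) = \sum_j s_j \ConstraintConstant_j$, so the verifier accepts with probability $1$.

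For \textbf{soundness}, suppose toward a contradiction that $x \notin \Language$ and that some $\MaxQueries$-non-signaling linear $\QuasiLin$ is accepted with probability $> 39/40$; then the tensor check $\QuasiLin(\Diagonal{u})\QuasiLin(\Diagonal{v}) = \QuasiLin(u\otimes v)$ and the constraint check $\QuasiLin(\sum_j s_j \ConstraintPoly_j) = \sum_j s_j \ConstraintConstant_j$ each fail with probability $< \delta := 1/40$. The structural input I would use is that, because $\QuasiLin$ is \emph{exactly} linear, for any query set $S$ whose $\F_2$-span has at most $\MaxQueries$ elements the distribution $\QuasiLin_S$ is a genuine distribution over $\F_2$-linear functionals on $\linspan{S}$ (all additive relations present inside $\linspan{S}$ hold with probability $1$). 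Every step below touches only a constant number of coordinates, whose span has at most $2^{O(1)} \le \MaxQueries$ elements, so inside each step $\QuasiLin$ literally is a distribution over linear functionals; then standard self-correction (``random shift'') lets one evaluate $\QuasiLin$ at structured points such as $\Diagonal{\Unit{i}}$, $\Unit{i}\otimes\Unit{i'}$ and $\ConstraintPoly_j$ starting from the random points the verifier actually queries.

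Two consistency chains then follow the classical template. First, shifting $u\otimes v$ to $(u+\Unit{i})\otimes(v+\Unit{i'})$ and expanding by linearity, the tensor check --- applied to the $O(1)$ shifted, still uniformly distributed pairs that arise --- forces $\QuasiLin(\Unit{i}\otimes\Unit{i'}) = \QuasiLin(\Diagonal{\Unit{i}})\QuasiLin(\Diagonal{\Unit{i'}})$ with probability $1-O(\delta)$; thus $\QuasiLin$ on the diagonal and off-diagonal generators is ``quadratically consistent'' with the decoded partial wire values $w_i := \QuasiLin(\Diagonal{\Unit{i}})$. Second, self-correcting the constraint check through random shifts $s\mapsto s+r$ forces $\QuasiLin(\ConstraintPoly_j) = \ConstraintConstant_j$ with probability $1-O(\delta)$ for each fixed $j$; since each $\ConstraintPoly_j$ is a quadratic polynomial in at most three wire variables, hence has at most six nonzero entries, expanding $\QuasiLin(\ConstraintPoly_j)$ by linearity and inserting the first chain shows that the decoded partial assignment to $\vars(\ConstraintPoly_j)$ satisfies $\ConstraintPoly_j$ with probability $1-O(\delta)$. (All of these ``combine'' across the different query sets that arise by invoking the non-signaling property on the shared coordinate $\ConstraintPoly_j$.)

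The last step, extracting a contradiction from the unsatisfiability of the system, is where I expect the real difficulty, and it is the point at which the non-signaling setting parts ways with the classical one. Classically one decodes a \emph{single} matrix $w\otimes w$, and a random linear combination $\sum_j s_j\ConstraintPoly_j$ then detects any fixed violated constraint with probability $\tfrac12$ --- with no union bound over the $\NumConstraints = \NumWires+1$ constraints, which is exactly what keeps the soundness error a constant. In the non-signaling world there is no single global $w$, only the family of non-signaling-consistent local partial assignments produced above, and one must show that these cannot all be almost-satisfying unless the system is genuinely satisfiable; reconciling these local views, rather than the individual self-corrections, is the technical heart of the argument, is what forces the locality past $O(1)$ up to $\MaxQueries = O(\log\NumWires)$, and is what must be done carefully enough to keep the soundness error bounded by $39/40$ rather than $1-o(1)$. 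A convenient way to organize this last step is first to argue that, if all checks pass with probability close to $1$, then $\QuasiLin$ viewed as an ``almost linear, almost non-signaling'' object is $O(\delta)$-close in $\Distance{4}$ to an \emph{exactly} non-signaling linear proof of slightly smaller locality, to which the decoding above applies verbatim --- this is precisely the kind of ``almost-non-signaling to non-signaling'' rounding that \cref{hyp:rounding} of the present paper postulates in a stronger form.
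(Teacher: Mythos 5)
This statement is not proved in the paper at all: it is Theorem~6 of~\cite{ChiesaMS19}, imported verbatim and used as an external, unconditional building block (it is invoked in \cref{sec:linear-proof} and \cref{sec:proof-of-soundness}). So the benchmark for your proposal is the proof in~\cite{ChiesaMS19}, not anything in this paper. Your completeness argument and your two self-correction chains are fine and follow the classical template: since $\QuasiLin$ is \emph{exactly} linear and exactly non-signaling, querying a constant-size set together with its span does give a genuine distribution over linear functionals there, and the shift arguments (e.g.\ pairing $\sum_j s_j\ConstraintPoly_j$ with $\sum_j s_j\ConstraintPoly_j+\ConstraintPoly_j$ and $\ConstraintPoly_j$) do yield, for each \emph{fixed} $j$, that $\QuasiLin(\ConstraintPoly_j)=\ConstraintConstant_j$ with probability $1-O(\delta)$, and similarly the local quadratic consistency $\QuasiLin(\Unit{i}\otimes\Unit{i'})=\QuasiLin(\Diagonal{\Unit{i}})\QuasiLin(\Diagonal{\Unit{i'}})$.

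The genuine gap is exactly the step you flag and then defer: turning these per-constraint, per-pair marginal guarantees into a contradiction with $\Circuit(x)\neq 1$. In the non-signaling setting there is no single decoded assignment $w\otimes w$; each guarantee holds only as a marginal of a different small query set, a union bound over the $\NumConstraints=\NumWires+1$ constraints costs $O(\NumWires\delta)$ and is useless for constant $\delta=1/40$, and one cannot ``expand the aggregated check by linearity'' because the query set containing all $\ConstraintPoly_j$'s has size $\Theta(\NumWires)$, far beyond the locality $O(\log\NumWires)$. This reconciliation of the local views is the actual content of Theorem~6 in~\cite{ChiesaMS19} (and is where the $O(\log\NumWires)$ locality is genuinely spent); your proposal contains no mechanism for it. Moreover, your suggested way out --- invoking a rounding statement in the spirit of \cref{hyp:rounding} --- does not help here: the proof in the theorem's hypothesis is already exactly linear and exactly non-signaling, so there is nothing to round, and within this paper \cref{hyp:rounding} is an unproven hypothesis, whereas \cref{thm:nsLPCP-CMS19} must be unconditional (the paper's conditional main theorem is obtained by \emph{combining} the hypothesis with this unconditional result; making the latter depend on the former would collapse that architecture). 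As it stands, the proposal reproduces the easy half of the argument and leaves the theorem's core unproven.
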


\subsection{Parallel repetition of the linear ALMSS verifier}
\label{sec:pcp-construction-linear-almss-par-rep}

Next, we consider the $\NumRep$-repeated parallel repetition of the linear ALMSS verifier.
Specifically, the verifier samples $t$ independent sets of queries, makes 4 queries to the PCP over the alphabet $\Bits^\NumRep$,
and accepts if and only if all $t$ sets of answers satisfy the basic linear ALMSS verifier.
Formally, the $\NumRep$-repetition of the linear ALMSS verifier is defined as follows.

\begin{algorithm}[H]
\caption{The $\NumRep$-repeated linear ALMSS verifier}\label{alg:repeated-linear-ALMSS}

    \InputExplicity{A circuit $\Circuit \colon \Bits^n \to \Bits$ with $\NumWires$ wires, and input $x \in \Bits^n$ to $\Circuit$.}
    \InputOracle{A $\NumRep$-repeated $\MaxQueries$-non-signaling linear function $\QuasiLinRep \colon (\Bits^{\NumWires^2})^\NumRep \to \Bits^\NumRep$.}

    \medskip
    Construct the matrices $\ConstraintPoly_1,\dots,\ConstraintPoly_{\NumConstraints} \in \Bits^{\NumWires^2}$
    and constants $\ConstraintConstant_{1},\dots,\ConstraintConstant_{\NumConstraints} \in \Bits$, representing the computation of $\Circuit$ on $x$.

    Sample $u^{(1)},\dots,u^{(\NumRep)},v^{(1)},\dots,v^{(\NumRep)} \in \Bits^{\NumWires}$
    and $s^{(1)},\dots,s^{(t)} \in \Bits^{\NumConstraints}$ independently  and uniformly at random.

    Let $Q_1 = ( \Diagonal{u^{(i)}} )_{i \in [\NumRep]}$;
        $Q_2 = ( \Diagonal{v^{(i)}} )_{i \in [\NumRep]}$;
        $Q_3 = ( u^{(i)} \otimes v^{(i)} )_{i \in [\NumRep]}$;
        $Q_4 = ( \sum_{j = 1}^{\NumConstraints} s^{(i)}_j \ConstraintPoly_j )_{i \in [\NumRep]}$.

    Query the oracle $\QuasiLinRep$ on the 4-element set $S = \{ Q_1,Q_2,Q_3,Q_4\}$.

    Check that $\QuasiLinRep(Q_1)_i \cdot \QuasiLinRep(Q_2)_i = \QuasiLinRep(Q_3)_i \quad \forall i \in [\NumRep]$.

    Check that and $\QuasiLinRep(Q_4)_i = \sum_{j = 1}^{\NumConstraints} s^{(i)}_j \ConstraintConstant_j \quad \forall i \in [\NumRep]$.

    \Return ACCEPT if and only if in the two previous steps all equalities hold.

\end{algorithm}

\noindent
Here, just as in the previous case, the verifier makes 4 queries to a linear proof. However, now the proof is over the alphabet $\Bits^\NumRep$.

\parhead{Completeness}
Completeness of the repeated linear ALMSS verifier is clear.
Indeed, if $C(x) = 1$, then we can take the parallel repetition of the intended classical linear proof,
and it will satisfy the repeated linear ALMSS verifier with probability 1.

\parhead{Soundness}
For soundness we prove in \cref{sec:linear-proof} that if $t \geq O(\log(N))$,
then the verifier is sound against $O(1)$-non-signaling proofs that are \emph{linear} and \emph{consistent}.
The proof works by reducing to the soundness of the \emph{non-repeated} linear ALMSS verifier,
Specifically, we consider a circuit $C$ and an input $x$ to $C$, and consider a $\NumRep$-repeated $\MaxQueries$-non-signaling proof that is accepted with probability at least $\gamma$.
We show that if the proof is linear and consistent, then its flattening is a $\NumRep$-non-signaling (non-repeated) linear proof that satisfies the non-repeated ALMSS verifier with the same probability.
Therefore, if $\gamma \geq 39/40$, then by \cref{thm:nsLPCP-CMS19} we conclude that $C(x) = 1$.

\subsection{From linear PCPs to standard PCPs using linearity and consistency testing}
\label{sec:pcp-construction-full}

So far we have assumed that the given non-signaling proof is linear and consistent.
Below we show how to discard this assumption,
and prove \cref{thm:main} by constructing a PCP verifier that is sound against arbitrary proofs.
This is done by running (the parallel repetition of) the linearity test,
the consistency test, and then feeding (the self-corrected version of) the proof to the linear repeated ALMSS verifier from \cref{alg:repeated-linear-ALMSS}.
We describe the verifier formally below.

\begin{algorithm}[H]
\caption{The $2\NumRep$-repeated ALMSS verifier + consistency test}\label{alg:repeated-ALMSS}

    \InputExplicity{A circuit $\Circuit \colon \Bits^n \to \Bits$ with $\NumWires$ wires, and input $x \in \Bits^n$ to $\Circuit$}
    \InputOracle{A $2\NumRep$-repeated $\MaxQueries$-non-signaling linear function $\NSStrategyRep[2\NumRep] \colon (\Bits^{\NumWires^2})^{2\NumRep} \to \Bits^{2\NumRep}$}

    \medskip
    Sample uniformly random $X,Y \in (\Bits^{\NumWires^2})^{2\NumRep}$.

    Sample uniformly random $W,Z_1,Z_2 \in (\Bits^{\NumWires^2})^{\NumRep}$.

    Sample the four queries $Q_1,Q_2,Q_3,Q_4 \in (\Bits^{\NumWires^2})^{\NumRep}$ of the $\NumRep$-repeated linear ALMSS verifier from \cref{alg:repeated-linear-ALMSS}.
    and let $\DecisionPredicate_{\LIN} \colon (\Bits^{\NumRep})^4 \to \{ACCEPT, REJECT\}$ be the corresponding predicate.

    Define $\Correct\NSStrategyRep \colon (\Bits^{\NumWires^2})^\NumRep \to \Bits^\NumRep$ as in \cref{def:self-correction},
    which makes two queries to $\NSStrategyRep[2\NumRep]$ for every query to $\Correct\NSStrategyRep$.

    Sample an input $S \seq (\Bits^{\NumWires^2})^{2\NumRep}$ to $\NSStrategyRep[2\NumRep]$
    corresponding to querying $\Correct\NSStrategyRep$ on $\{Q_1,Q_2,Q_3,Q_4\}$.

    Query $\NSStrategyRep[2\NumRep]$ on the set $S \cup \{X, Y, Z+Y\} \cup \{[W;Z_1], [W;Z_2]\}$.

    \emph{Linearity test:} Check that $\NSStrategyRep[2\NumRep](X) + \NSStrategyRep[2\NumRep](Y) = \NSStrategyRep[2\NumRep](X+Y)$.

    \emph{Consistency test:} Check that $\NSStrategyRep[2\NumRep]([W;Z_1])_{|W} = \NSStrategyRep[2\NumRep]([W;Z_2])_{|W}$.

    \emph{Linear PCP verifier:} Interpret  $\NSStrategyRep[2\NumRep](S)$ as the answers of $\Correct\NSStrategyRep$ on the query set $(\{Q_1,Q_2,Q_3,Q_4\})$,
    and check that $\Correct\NSStrategyRep(\{Q_1,Q_2,Q_3,Q_4\})$ satisfies $\DecisionPredicate_{\LIN}$.

    \Return ACCEPT if and only if all three steps above accept.
\end{algorithm}

\noindent
That is, the verifier is almost the parallel repetition of the classical ALMSS verifier.
The only difference is that our verifier makes 2 additional queries for the consistency test.

\parhead{Completeness}
Completeness of the repeated ALMSS verifier is clear, as by design the expected proof is linear,
and hence $\NSStrategyRep$ satisfies the linearity constraint with probability 1.
Furthermore, it follows that $\Correct\NSStrategyRep$ is equal to $\NSStrategyRep$, and thus
the predicate $\DecisionPredicate_{\LIN}$ is also satisfied with probability 1.

\parhead{Soundness}
We prove soundness of the PCP system in \cref{alg:repeated-ALMSS} in \cref{sec:proof-of-soundness}.
Specifically, we use \cref{hyp:rounding},and prove that for $t \geq \polylog(N)$ and $\MaxQueries \geq O(1)$,
the verifier is sound against $O(1)$-non-signaling \emph{linear} proofs.
The proof works, again, by reducing to the soundness of the \emph{non-repeated} linear ALMSS verifier.
Specifically, we consider a circuit $C$ and an input $x$ to $C$, and prove that if the PCP verifier accepts
a $\NumRep$-repeated $\MaxQueries$-non-signaling proof $1-\eps$,
then its flattening (or rather the flattening of its self-correction)
is a $(\eps,\NumRep)$-non-signaling (non-repeated) proof $\NSStrategy \colon \Bits^{\NumWires^2} \to \Bits$ that is $(1-\eps)$-linear,
and it satisfies the \emph{non-repeated} linear ALMSS verifier with high probability.
By applying \cref{hyp:rounding} we obtain a $\NumRep^{\Omega(1)}$-non-signaling (non-repeated) proof
that is $(1-\eps')$-linear and satisfies the \emph{non-repeated} linear ALMSS verifier from \cref{alg:linear-ALMSS} with high probability.
By applying a result from \cite{ChiesaMS20} we conclude that
$\NSStrategy$ is close to a linear $\NumRep^{\Omega(1)}$-non-signaling (non-repeated) proof $\Correct\NSStrategy$
that also satisfies the linear ALMSS verifier with high probability,
and thus, by \cref{thm:nsLPCP-CMS19} it follows that $C(x)=1$.
See \cref{sec:proof-of-soundness} for details.

\doclearpage
\section{Proof overview: Soundness}
\label{sec:overview}

In this section we give an overview of the soundness analysis of the parallel repetition of the PCP verifier from \cref{alg:repeated-ALMSS}.
Before describing the actual proof, we first consider soundness against \emph{structured} proofs.
Indeed, this is a common approach in the analysis of PCP systems.
Specifically, we show first that the PCP verifier from \cref{alg:repeated-linear-ALMSS} is sound against such structured proofs.
Then we use local testing and self-correction to show the proofs that
are accepted by the verifier with high probability satisfy the desired properties.

\parhead{Soundness of the linear $\NumRep$-repeated ALMSS verifier}
Fix a circuit $C \colon \Bits^N \to \Bits$ and let $x \in \Bits^N$ be an input to $C$.
Consider a $2\NumRep$-repeated linear ALMSS verifier for $C(x)$, and
suppose that $\QuasiLinRep[2\NumRep] \colon (\Bits^{N^2})^{2\NumRep} \to \Bits^{2\NumRep}$ is a $2\NumRep$-repeated linear and consistent proof such that $2\NumRep$-repeated linear ALMSS verifier the accepts $\QuasiLinRep[2\NumRep]$ with high probability.
According to \cref{claim:flatten-linear-proof} it follows that we can ``flatten'' $\QuasiLinRep[2\NumRep]$ into a $t$-non-signaling \emph{linear} proof
$\QuasiLin \colon \Bits^{N^2}\to \Bits$.
We show that since $\QuasiLinRep[2\NumRep]$ is acepted with high probability by the $2\NumRep$-repeated linear verifier, it follows that the (non-repeated) linear ALMSS verifier accepts $\QuasiLin$ with high probability.
Therefore, by applying the result of \cite{ChiesaMS19} it follows that if $\NumRep > c\log(N)$ for some (sufficiently large) constant $c$, then $C(x)=1$. See Section~\ref{sec:linear-proof} for details.

\parhead{General proofs - forcing consistency using extended linearity test}
Next we prove soundness of the general (i.e., non-linear) parallel repetition PCP against arbitrary $O(1)$-non-signaling proofs.
The general approach is analogous to the approach used for analyzing PCPs, specifically, we first run a test that ``forces'' the proof to be (close to) linear and consistent, and then apply the analysis of the linear proof in the previous paragraph.

More concretely, we fix a circuit $C \colon \Bits^N \to \Bits$ and an input $x \in \Bits^N$ to $C$,
and consider a $2\NumRep$-repeated (non-linear) ALMSS verifier for $C(x)$.
Suppose that $\NSStrategyRep[2\NumRep] \colon (\Bits^{N^2})^{2\NumRep} \to \Bits^{2\NumRep}$ is a $2\NumRep$-repeated proof such that $2\NumRep$-repeated ALMSS verifier accepts $\NSStrategyRep[2\NumRep]$ with high probability.
Our goal is to prove that $C(x) = 1$, and our high level strategy to show it is the following:

\medskip

\begin{enumerate}
    \item First we assume that the proof is permutation invariant as in \cref{def:permutation-folding}.

    \item Suppose the repeated ALMSS verifier accepts $\NSStrategyRep[2\NumRep]$ with high probability $1-\eps$.
    In particular, this implies that $\NSStrategyRep[2\NumRep]$ passes linearity test with at least same probability.

    \item By applying the self-correction procedure, we obtain the self-correction of $\NSStrategyRep[2\NumRep]$.
    The self-correction of $\NSStrategyRep[2\NumRep]$, denoted by $\CorrectNSStrategyRep[\NumRep]$,
    is a $\NumRep$-repeated $\Correct\MaxQueries$-non-signaling proof for $\Correct\MaxQueries = \Omega(\MaxQueries)$
    such that in order to make one query to $\CorrectNSStrategyRep[\NumRep]$ we make $O(1)$ queries to $\NSStrategyRep[2\NumRep]$.
    We prove that $\CorrectNSStrategyRep[\NumRep]$ satisfies the following two properties.

    \begin{enumerate}
      \item $\CorrectNSStrategyRep[\NumRep]$ is $(1-O(\eps))$-linear, i.e., for all $X,Y \in (\Bits^{\NumWires^2})^{\NumRep}$
      it holds that $\Pr[\CorrectNSStrategyRep[\NumRep](X) + \CorrectNSStrategyRep[\NumRep](Y) = \CorrectNSStrategyRep[\NumRep](X+Y)]$.
      That is, the self-correction transforms an \emph{average-case} guarantee about linearity testing $\NSStrategyRep[2\NumRep]$
      into a guarantee that $\CorrectNSStrategyRep[\NumRep]$ satisfies the linearity constraints \emph{for all} $X,Y,X+Y$.
      \item $\CorrectNSStrategyRep[\NumRep]$ is $(1-O(\eps))$-consistent, i.e., for all $Q,Q' \in (\Bits^{\NumWires^2})^{\NumRep}$
      with high probability $\CorrectNSStrategyRep[\NumRep](Q)_j = \CorrectNSStrategyRep[\NumRep](Q')_j$ for all $j \in [\NumRep]$
      such that $Q_j = Q'_j$. Here also, the \emph{average-case} guarantee of the consistency test is
      converted into the \emph{worst-case} guarantee holding for all $Q,Q' \in (\Bits^{\NumWires^2})^{\NumRep}$.
    \end{enumerate}

    \item Next, we let $\Flat\NSStrategy = \flatten{\CorrectNSStrategyRep[\NumRep]}$ be the flattening of $\CorrectNSStrategyRep[\NumRep]$.
    By \cref{claim:flatten-almost-consistent}, $\Flat\NSStrategy \colon \Bits^{\NumWires^2} \to \Bits$ is a almost linear $(O(\eps), \NumRep)$-no-signaling function.
    Furthermore, using the fact that $\Correct\NSStrategyRep$ is $(1-O(\eps))$-linear and is accepted by the repeated ALMSS verifier from \cref{alg:repeated-linear-ALMSS} with high probability,
    we prove that $\Flat\NSStrategy$ is accepted by the (non-repeated) ALMSS verifier from \cref{alg:linear-ALMSS} with high probability.
\end{enumerate}

    At this point we would like to apply \cref{thm:nsLPCP-CMS19} on $\CorrectNSStrategyRep[\NumRep]$,
    and say that since $\Flat\NSStrategy$ is accepted by the AMLSS verifier with high probability, it follows that $C(x) = 1$.
    However, the difficulty in applying~\cref{thm:nsLPCP-CMS19} is that $\Flat\NSStrategy$ is not necessarily non-signaling,
    but only \emph{almost non-signaling} (see \cref{def:almost-ns-function} for reference).
    In order to still apply this result we use \cref{hyp:rounding} to ``round'' $\Flat\NSStrategy$ into a non-signaling proof,
    and then apply \cref{thm:nsLPCP-CMS19} to conclude that $C(x) = 1$. Specifically, we do the following.

\begin{enumerate}
  \setcounter{enumi}{4}

    \item Assuming \cref{hyp:rounding}, there exist a $\NumRep'$-no-signaling function $\NSStrategy$, which is close to $\Flat\NSStrategy$.
    In particular, $\NSStrategy$ is an almost linear non-signaling function.

    \item Using the result of \cite{ChiesaMS19} on linearity testing we get that for some $\linlocalityhat = \Omega(\sqrt{\NumRep'})$ there exists a $\linlocalityhat$-non-signaling linear proof $\QuasiLin$ that is $O(q \eps)$-close to $\NSStrategy'$ on queries sets of size at most $q \leq \linlocalityhat$.

    \item By our choice of parameters, the locality of $\QuasiLin$ is $\linlocalityhat = \Omega(\sqrt{\NumRep'}) > C \log(N)$,
    and by the previous item the linear ALMSS verifier accepts $\QuasiLin$ with high probability.
    Therefore, using \cref{thm:nsLPCP-CMS19} we conclude that $C(x) = 1$.
\end{enumerate}

This completes the overview of the proof. Below we describe each step in detail.

\doclearpage
\section{Soundness of the linear PCP verifier against structured proofs}
\label{sec:linear-proof}

In this section we prove that the $\NumRep$-repeated linear PCP verifier from \cref{alg:repeated-linear-ALMSS}
is sound against linear consistent proofs. Specifically, we prove the following theorem.

\begin{stheorem}\label{thm:consistent-linear-soundness}
Fix a circuit $\Circuit \colon \Bits^n \to \Bits$ with $\NumWires$ wires, and input $x \in \Bits^n$ to $\Circuit$.
Let $\MaxQueries \geq 4$, and $\NumRep$ be positive integers such that $\NumRep \geq K \log(N)$ for some sufficiently large constant $K>0$.
Let $\QuasiLinRep \colon (\Bits^{N^2})^\NumRep \to \Bits^\NumRep$ be a $\MaxQueries$-non-signaling $\NumRep$-repeated linear consistent proof,
and suppose that the $\NumRep$-repeated ALMSS verifier accepts $\QuasiLinRep$ with probability $\geq 39/40$.
Then $C(x) = 1$.
\end{stheorem}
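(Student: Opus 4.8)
The plan is to reduce the soundness of the $\NumRep$-repeated linear ALMSS verifier to the soundness of the non-repeated linear ALMSS verifier (\cref{thm:nsLPCP-CMS19}) via the flattening operation. First I would invoke \cref{claim:flatten-linear-proof}: since $\QuasiLinRep$ is a $\MaxQueries$-non-signaling ($\MaxQueries \geq 4$) function that is linear and consistent (i.e.\ $(1-0)$-linear and $(1-0)$-consistent), its flattening $\Flat{\QuasiLin} = \flatten{\QuasiLinRep} \colon \Bits^{N^2} \to \Bits$ is a genuine (error parameter $\eps_2 = 0$) $\NumRep$-non-signaling linear proof. So $\Flat{\QuasiLin}$ is a legitimate input to the non-repeated linear ALMSS verifier, provided $\NumRep \geq 4$, which holds since $\NumRep \geq K\log N$.

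The heart of the argument is to show that the acceptance probability of the non-repeated linear ALMSS verifier on $\Flat{\QuasiLin}$ is at least that of the $\NumRep$-repeated verifier on $\QuasiLinRep$, which by hypothesis is $\geq 39/40$. The $\NumRep$-repeated verifier queries the $4$-element set $S = \{Q_1,Q_2,Q_3,Q_4\}$ where each $Q_a = (q_a^{(i)})_{i \in [\NumRep]}$ packs together $\NumRep$ independent copies of the non-repeated verifier's queries, and it accepts iff \emph{all} $\NumRep$ coordinates pass. By the second part of \cref{claim:flatten-almost-consistent} (with $\eps = 0$, using that $\QuasiLinRep$ is permutation folded — which we may assume by \cref{obs:permutation-folding}, folding preserves linearity and consistency), the joint distribution of $(\Flat{\QuasiLin}(q_1^{(i)}), \dots, \Flat{\QuasiLin}(q_4^{(i)}))$ obtained by querying $\Flat{\QuasiLin}$ on the set $\{q_1^{(i)},\dots,q_4^{(i)}\}$ is identical to the distribution of the $i$-th coordinates $(\QuasiLinRep(Q_1)_i, \dots, \QuasiLinRep(Q_4)_i)$. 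Hence, for each fixed choice of the $i$-th block of randomness, the probability that the non-repeated verifier accepts equals the probability that block $i$ of the repeated verifier's check passes. Therefore the repeated verifier's acceptance probability is a convex combination (over choices of the other $\NumRep-1$ blocks and of $i$) of quantities each bounded above by $\max_i \Pr[\text{block } i \text{ passes}] = \Pr[\text{non-repeated verifier accepts } \Flat{\QuasiLin}]$; so $\Pr[\Verifier_{\Flat{\QuasiLin}}(x) = 1] \geq \Pr[\Verifier^{(\NumRep)}_{\QuasiLinRep}(x) = 1] \geq 39/40$. (More carefully: fixing the randomness of all blocks except a uniformly random block $i$, the conditional acceptance probability of block $i$ equals the acceptance probability of the non-repeated verifier on a single block of randomness, so averaging over everything gives exactly $\Pr[\Verifier_{\Flat{\QuasiLin}}(x)=1]$ as an upper bound on the repeated acceptance probability — indeed they are related by the inequality above.)

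Finally, $\Flat{\QuasiLin}$ is a $\NumRep$-non-signaling linear proof with $\NumRep \geq K\log(N)$; taking $K$ large enough that $\NumRep \geq O(\log \NumWires)$ matches the locality requirement of \cref{thm:nsLPCP-CMS19}, and since $\Flat{\QuasiLin}$ is accepted by the non-repeated linear ALMSS verifier with probability $\geq 39/40$, \cref{thm:nsLPCP-CMS19} forces $C(x) = 1$, as desired. The main obstacle I anticipate is the bookkeeping in the second step: carefully arguing that the per-coordinate marginals of the repeated proof agree with the flattened proof's answers (this requires permutation folding so that the ``packing'' vector $Q^S$ can be reordered to align coordinate $j$ with query $w_j$), and then correctly turning the ``all $\NumRep$ blocks pass'' event into a bound by a single block's acceptance probability — the independence of the $\NumRep$ blocks of randomness chosen by the verifier is what makes this clean, but one must be cautious that the non-signaling proof itself is \emph{not} a product across blocks, so the reduction must be phrased at the level of marginal distributions of individual coordinates rather than joint distributions across all $\NumRep$ coordinates.
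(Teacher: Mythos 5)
Your proposal follows essentially the same route as the paper's proof: flatten $\QuasiLinRep$ via \cref{claim:flatten-linear-proof} (with $\eps_1=\eps_2=0$), observe that acceptance of all $\NumRep$ blocks implies acceptance of a single block, identify the single-block answers with the flattened proof's answers using consistency, and invoke \cref{thm:nsLPCP-CMS19}. The paper does exactly this, restricting to the first coordinate rather than averaging over a random block, and bridging the flattened answers to $(\QuasiLinRep(Q_1)_1,\dots,\QuasiLinRep(Q_4)_1)$ by the consistency of $\QuasiLinRep$ (your appeal to the second part of \cref{claim:flatten-almost-consistent} only relates the flattening to the coordinates of a \emph{single} packed query, so the extra consistency step relating that packed query to the four verifier queries is still needed --- you flag this bookkeeping yourself, and it is handled at the same level of detail as in the paper).

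The one assertion that does not hold up is the claim that permutation foldedness can be assumed without loss of generality because ``folding preserves linearity and consistency.'' It does not: take the deterministic $\NumRep$-repeated function $F(Q)_j = g_j(Q_j)$ with distinct linear maps $g_1,\dots,g_\NumRep \colon \Bits^{N^2}\to\Bits$; it is non-signaling, linear, and consistent in the sense of \cref{def:linear-rep-ns-fun,def:consistent-rep-ns-fun}, yet its folding per \cref{obs:permutation-folding} answers position $j$ of a query $Q$ by $g_{\pi(j)}(Q_j)$ for an independent random $\pi$ per query, which breaks both consistency and linearity with constant probability. So foldedness must be taken as a hypothesis of the theorem (it is what \cref{claim:flatten-linear-proof} formally requires); in the paper's application this is supplied by the construction, since the proof fed to the verifier is permutation invariant and self-correction preserves folding (\cref{lemma:self-correction-preserve-folding}, \cref{thm:main-soundness}), rather than derived by re-folding an arbitrary linear consistent proof.
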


\begin{proof}
Let $\QuasiLinRep \colon (\Bits^{N^2})^\NumRep \to \Bits^\NumRep$ be a $\MaxQueries$-non-signaling $\NumRep$-repeated linear consistent proof,
and suppose that the $\NumRep$-repeated ALMSS verifier accepts $\QuasiLinRep$ with probability $\geq 39/40$.

Let $\Flat\QuasiLin = \flatten{\QuasiLinRep}$ be the flattening of $\QuasiLinRep$ as per \cref{def:flatten}.
Since $\QuasiLinRep$ is linear and consistent, it follows by \cref{claim:flatten-linear-proof} that $\Flat\QuasiLin$ is a $\NumRep$-non-signaling linear function.

Next we show that the non-repeated linear ALMSS verifier accepts $\Flat\QuasiLin$ with probability $>39/40$,
and hence, by \cref{thm:nsLPCP-CMS19} it follows that $\Circuit(x) = 1$.

Indeed, consider the random choices of $u,v \in \Bits^{\NumWires}$ and $s \in \Bits^{\NumConstraints}$ in \cref{alg:linear-ALMSS},
and let $Q^* \in (\Bits^{N^2})^\NumRep$ be a query to $\QuasiLinRep$ that contains the four queries
$Q_{ALMSS} = \{ \Diagonal{u}, \Diagonal{v}, u \otimes v, \sum_{j = 1}^{\NumConstraints} s_j \ConstraintPoly_j \}$ in its first 4 coordinates.
That is, $Q^*_1 = \Diagonal{u}$, $Q^*_2 = \Diagonal{v}$, $Q^*_3 = u \otimes v$, and $Q^*_4 = \sum_{j = 1}^{\NumConstraints} s_j \ConstraintPoly_j$.
Denoting by $\DecisionPredicateLIN$ the predicated in \cref{alg:linear-ALMSS}, by definition of the flattening operation we have
\begin{equation}\label{eq:Pr-accept-flatten-to-first-coord}
  \Pr[\DecisionPredicateLIN(\Flat\QuasiLin(Q_{ALMSS}))]
  =
  \Pr[\DecisionPredicateLIN(\QuasiLinRep(Q^*)_1,\QuasiLinRep(Q^*)_2,\QuasiLinRep(Q^*)_3,\QuasiLinRep(Q^*)_4) = 1]
  \enspace.
\end{equation}
Next, consider the random choices of $u^{(1)},\dots,u^{(\NumRep)},v^{(1)},\dots,v^{(\NumRep)} \in \Bits^{\NumWires}$
and $s^{(1)},\dots,s^{(t)} \in \Bits^{\NumConstraints}$ in \cref{alg:repeated-linear-ALMSS},
and let $Q_1 = ( \Diagonal{u^{(i)}} )_{i \in [\NumRep]}$,
        $Q_2 = ( \Diagonal{v^{(i)}} )_{i \in [\NumRep]}$,
        $Q_3 = ( u^{(i)} \otimes v^{(i)} )_{i \in [\NumRep]}$,
        and $Q_4 = ( \sum_{j = 1}^{\NumConstraints} s^{(i)}_j \ConstraintPoly_j )_{i \in [\NumRep]}$
be the queries made by the repeated ALMSS verifier.

Since $u^{(1)}, v^{(1)}$ and $s^{(1)}$ are distributed identically to the random choices
of $u,v \in \Bits^{\NumWires}$ and $s \in \Bits^{\NumConstraints}$ in \cref{alg:linear-ALMSS},
it follows by consistency of $\QuasiLinRep$ that $\Pr[\DecisionPredicateLIN(\QuasiLinRep(Q^*)_{\mid \{1,2,3,4\}}) = 1]$ is equal to
\begin{equation*}
  \Pr[\DecisionPredicateLIN(
        \QuasiLinRep(Q_1)_1,
        \QuasiLinRep(Q_2)_1,
        \QuasiLinRep(Q_3)_1,
        \QuasiLinRep(Q_4)_1)
         = 1]
  \enspace,
\end{equation*}
i.e., to the probability that $\DecisionPredicateLIN$ accepts
the responses of $\QuasiLinRep$ in the first coordinate of the parallel repetition in \cref{alg:repeated-linear-ALMSS}.
However, since the verifier in \cref{alg:repeated-linear-ALMSS} accepts $\QuasiLinRep$
with probability $\geq 39/40$, it follows in particular, that the first coordinate is accepted with probability $\geq 39/40$,
and hence, by \cref{eq:Pr-accept-flatten-to-first-coord} we conclude that
\begin{equation*}
  \Pr[\DecisionPredicateLIN(\Flat\QuasiLin(Q_{ALMSS}))] \geq 39/40
  \enspace,
\end{equation*}
and hence, by \cref{thm:nsLPCP-CMS19} we have $\Circuit(x) = 1$.
This completes the proof of \cref{thm:consistent-linear-soundness}.
\end{proof}

\doclearpage
\section{Testing and self-correcting repeated non-signaling functions}
\label{sec:test-self-correction}

As shown in \cref{sec:linear-proof}, it is rather straightforward to construct a PCP system
that is sound against repeated non-signaling proofs that are consistent and linear.
Therefore, we would like to make sure that the given proof satisfies these properties.
We ``enforce'' these properties in \cref{alg:repeated-ALMSS} by first running linearity test and consistency test
on a given $\NumRep$-repeated non-signaling proof, and then run the linear PCP on the self-correction of the given proof.
Next, we show that if the tests accept a given proof with high probability,
then its self-correction (almost) satisfies the desired properties,
hence reducing the problem to the structured case.
In this section we analyze the tests and prove guarantees about the self-correction
of any non-signaling function that passes the test with high probability.
Then, in \cref{sec:proof-of-soundness} we use these results on testing and self-correction
in order to analyze the PCP system from \cref{alg:repeated-ALMSS}.

\subsection{Definitions of tests and the self-correction}

\medskip
\parhead{Testing linearity}
Linearity test is a randomized algorithm that given an input function $f$, queries it on 3 inputs and wishes decides whether $f$ is linear or far from linear.
The test was first analyzed in \cite{BlumLR93}. Bellare et al. in \cite{BellareCHKS96} simplified the analysis and proved for any boolean function $f$, the probability that it passes the test is at most $1 - \Delta(f)$,
where $\Delta(f)$ is the normalized Hamming distance of $f$ to the closest linear function.
Extension of \cite{BlumLR93} linearity test to general groups and many other closely related problems have been studied since then
\cite{AumannHRS01, ShpilkaW04, BenOrCLR08, BhattacharyyaKSSZ10, DavidDGEKS17}.
More recently, \cite{ItoV12, Vidick14} and \cite{ChiesaMS20} analyzed the linearity test against quantum strategies and non-signaling strategies.
For our setting, when the functions are of the form
$\NSStrategyRep \colon (\Bits^n)^\NumRep \to \Bits^\NumRep$ the test is as follows.

\begin{definition}[Linearity test~\cite{BlumLR93}]
\label{def:repeated-linearity-test}
Let $\NSStrategyRep \colon (\Bits^n)^\NumRep \to \Bits^\NumRep$ be a $\NumRep$-repeated $\MaxQueries$-non-signaling function.
Linearity test works by uniformly sampling $X, Y  \in (\Bits^n)^\NumRep$,
querying $\NSStrategyRep$ on set $\{X, Y, X+Y\}$, and checking that
$\NSStrategyRep(X) + \NSStrategyRep(Y) = \NSStrategyRep(X+Y)$, i.e., that for all $j \in [\NumRep]$ it holds that
$\NSStrategyRep(X)_j + \NSStrategyRep(Y)_j = \NSStrategyRep(X+Y)_j$.
\end{definition}

In the non-signaling setting, linearity test was analyzed by \cite{ChiesaMS20} for \emph{boolean} functions.
They proved that any $\MaxQueries$-non-signaling boolean function $\NSStrategy$ that passes the linearity test
with probability $1-\eps$ can be self-corrected to a $\floor{\MaxQueries/2}$-non-signaling function $\Correct{\NSStrategy}$
that is $2^{O(\MaxQueries)}\eps$-close to a linear $\floor{\MaxQueries/2}$-non-signaling function $\QuasiLin$.
However, we cannot directly apply their result to our setting, as our functions are not boolean.
Furthermore, adapting the approach of \cite{ChiesaMS20} will give a linear non-signaling function
with the guarantee that the distance between $\Correct{\NSStrategy}$ and a truly linear function $\QuasiLin$
is at most $2^{O(\NumRep \MaxQueries)}\eps$, which is too large for our application.

\medskip
\parhead{Testing consistency}
Next we consider \emph{consistency test}, whose goal is to check that a given $\NumRep$-repeated
non-signaling function is (close to) consistent as per \cref{def:consistent-rep-ns-fun}.
The test works as follows.

\begin{definition}[Consistency test]
\label{def:consistency-test}
Let $\NSStrategyRep[2\NumRep] \colon (\Bits^n)^{2\NumRep} \to \Bits^{2\NumRep}$ be a $2\NumRep$-repeated $\MaxQueries$-non-signaling function for an integer $\NumRep$.
Consistency test chooses $W, Z_1, Z_2  \in (\Bits^n)^{\NumRep}$ uniformly at random,
queries $\NSStrategyRep$ on $\{[W; Z_1], [W; Z_2]\}$,
and checks that $\NSStrategyRep([W; Z_1])_{|W} = \NSStrategyRep([W; Z_2])_{|W}$.
\end{definition}

Similar tests have been studied in the literature in the context of \emph{Direct product testing} in a long series of work \cite{DinurR04,ImpagliazzoKW12, DinurS14-b, DinurN17, GoldenbergC19}.

We prove below that if a $2\NumRep$-repeated $\MaxQueries$-non-signaling
proof $\NSStrategyRep[2\NumRep]$ passes both the linearity test and the consistency test with probability $1-\Soundness$,
then its \emph{self-correction} $\Correct\NSStrategyRep$ is $(1-O(\Soundness))$-linear and $(1-O(\Soundness))$-consistent.
That is, $\Correct\NSStrategyRep$ is close to having the properties we need in order to prove soundness against repeated non-signaling proofs.
Next, we discuss the notion of self-correction,
and prove if $\NSStrategyRep[2\NumRep]$ passes the tests with high probability, then $\Correct\NSStrategyRep$ satisfies the desired properties.

\subsection{Self-correction of a \texorpdfstring{$\NumRep$}{t}-repeated \texorpdfstring{$\MaxQueries$}{k}-non-signaling function}

Below we define the self-correction of a given $\NumRep$-repeated $\MaxQueries$-non-signaling function $\NSStrategyRep$.
Observe that if $\NSStrategyRep$ passes the linearity test with high probability $1-\eps$,
it does not necessarily imply that it satisfies \emph{all} linearity constraints with high probability, i.e.,
it does not imply that $\NSStrategyRep$ is $(1-\eps')$-linear. As a simple example, one may consider the case when
$\NSStrategyRep$ is a deterministic function that is obtained from a linear function by changing some small fraction of its outputs.
The same applies to the consistency test, i.e., satisfying the consistency constraints on average
as opposed to satisfying each consistency constraint with high probability.

A standard approach to transform the ``average-case'' guarantee of the tests into a ``point-wise'' guarantee
is by employing the idea of \emph{self-correction}. Next we define the notion of self-correction suitable for our tests.

\begin{definition}\label{def:self-correction}
Let $\NSStrategyRep[2\NumRep] \colon (\Bits^n)^{2\NumRep} \to \Bits^{2\NumRep}$ be a $2\NumRep$-repeated $\MaxQueries$-non-signaling function.
The \defemph{self-correction} of $\NSStrategyRep[2\NumRep]$,
is a $\NumRep$-repeated $\Correct{\MaxQueries}$-non-signaling function $\CorrectNSStrategyRep[\NumRep] \colon (\Bits^n)^{\NumRep} \to \Bits^{\NumRep}$,
for $\Correct{\MaxQueries} \leq \frac{\MaxQueries}{2}$ defined as follows.

Given a query $Q \in (\Bits^n)^{\NumRep}$, in order to sample $\CorrectNSStrategyRep[\NumRep](Q)$
we uniformly choose $R, W \in (\Bits^n)^{\NumRep}$,
query $\NSStrategyRep[2\NumRep]$ on the set $\{[R; W], [R + Q; W]\}$, and output the first half of
$(\NSStrategyRep[2\NumRep]([R; W]) + \NSStrategyRep[2\NumRep]([R + Q; W])$.

More generally, for a query set $\Correct{S} = \{Q_1,\dots,Q_s\}$ of size $s \leq \Correct{\MaxQueries}$
we  sample $R_i, W_i \in (\Bits^n)^{\NumRep}$ independently, uniformly at random for each $i \in [s]$ ,
query $\NSStrategyRep[2\NumRep]$ on the set
\begin{equation*}
    S = \bigcup_{i=1}^{s} \{[R_i; W_i], [Q_i + R_i; W_i] \}
    \enspace,
\end{equation*}
and output
\begin{equation*}
\CorrectNSStrategyRep[\NumRep](Q_i)_j \DefineEqual (\NSStrategyRep[2\NumRep]([R_i; W_i]) + \NSStrategyRep[2\NumRep]([Q_i + R_i; W_i]))_{j}
\quad \forall j \in [\NumRep]
\enspace.
\end{equation*}
for all $i \in [s]$.
\end{definition}

Observe that the self-correction of $\NSStrategyRep[2\NumRep]$ is indeed a non-signaling function with the appropriate locality parameter.
Indeed, this follows immediately from the assumption that $\NSStrategyRep[2\NumRep]$ is $\MaxQueries$-non-signaling
and the fact that the $R_i, W_i$'s are uniformly random and independent.

\subsection{Self-correction is almost linear and almost consistent}
Next we show that if $\NSStrategyRep[2\NumRep]$ passes both the linearity test and the agreement test with high probability then its self-correction $\Correct\NSStrategyRep$ is almost linear and almost consistent.
Indeed, this average-to-worst-case is a standard step in the analysis of non-signaling PCPs \cite{KalaiRR14,ChiesaMS19}.

\begin{stheorem}
\label{thm:self-correction-structure}
Let $\NSStrategyRep[2\NumRep] \colon (\Bits^n)^{2\NumRep} \to \Bits^{2\NumRep}$ be a $2\NumRep$-repeated $\MaxQueries$-non-signaling function,
and suppose that $\NSStrategyRep[2\NumRep]$ is permutation folded.
If $\NSStrategyRep[2\NumRep]$ passes both the linearity and consistency tests with probability at least $1-\eps$,
then $\CorrectNSStrategyRep[\NumRep]$ is $\Correct{\MaxQueries}$-non-signaling function that is permutation folded,
$(1-4\eps)$-linear, and $(1-8\eps)$-consistent, for for $\Correct{\MaxQueries} = \MaxQueries/2-5$.
\end{stheorem}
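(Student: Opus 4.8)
The plan is to establish the three claimed properties of $\CorrectNSStrategyRep[\NumRep]$ separately, in the order: non-signaling, permutation folding, almost-linearity, almost-consistency. The first two are essentially structural and follow the remark already made after \cref{def:self-correction}. For non-signaling: since each query $Q_i$ to $\CorrectNSStrategyRep[\NumRep]$ is answered by querying $\NSStrategyRep[2\NumRep]$ on two points $[R_i;W_i]$ and $[Q_i+R_i;W_i]$ with $R_i,W_i$ fresh uniform and independent across $i$, a query set of size $s \le \Correct\MaxQueries = \MaxQueries/2-5$ translates into a query set of size $2s \le \MaxQueries$ to $\NSStrategyRep[2\NumRep]$, and the non-signaling property of $\NSStrategyRep[2\NumRep]$ together with the independence of the randomness transfers directly. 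For permutation folding: applying a permutation $\pi \in S_\NumRep$ to a query $Q$ to $\CorrectNSStrategyRep[\NumRep]$ can be absorbed by applying the block permutation $(\pi,\pi) \in S_{2\NumRep}$ to the corresponding $[R;W]$ and $[Q+R;W]$ (and correspondingly permuting $R,W$, which does not change their distribution), so permutation invariance of $\NSStrategyRep[2\NumRep]$ passes through; I would write this out carefully since the self-correction mixes the two halves of the $2\NumRep$ coordinates.

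For almost-linearity, I would fix $X,Y \in (\Bits^{\NumWires^2})^\NumRep$ and a query set $\Correct S \ni X,Y,X+Y$ of size at most $\Correct\MaxQueries$, and expand $\CorrectNSStrategyRep[\NumRep](X), \CorrectNSStrategyRep[\NumRep](Y), \CorrectNSStrategyRep[\NumRep](X+Y)$ using three independent pairs of shifts $(R_X,W_X), (R_Y,W_Y), (R_{X+Y},W_{X+Y})$. The key is to rewrite each self-corrected value as a telescoping sum of values of $\NSStrategyRep[2\NumRep]$ and then invoke the fact that the linearity test passes with probability $1-\eps$ to argue that, on each of a constant number of linearity instances we introduce, the tested equation holds except with probability $\eps$. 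Concretely, $\CorrectNSStrategyRep[\NumRep](X) = \NSStrategyRep[2\NumRep]([R_X;W_X])_{|\text{first half}} + \NSStrategyRep[2\NumRep]([X+R_X;W_X])_{|\text{first half}}$, and to relate the three shifted evaluations I would pass through a common reference point: introduce auxiliary queries to $\NSStrategyRep[2\NumRep]$ at points like $[R_X+R_Y;W]$, $[X+Y+R_X+R_Y;W]$ etc., applying the linearity-test guarantee $O(1)$ many times (a union bound over a constant number of events, each of probability $\le \eps$), yielding the bound $1-4\eps$. The constant $4$ should emerge from counting exactly how many applications of the test guarantee are needed; I would not grind the bookkeeping here, but the structure is the standard hybrid argument of \cite{KalaiRR14,ChiesaMS19}.

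For almost-consistency, I would fix $Q,Q' \in (\Bits^{\NumWires^2})^\NumRep$, let $J = \{j : Q_j = Q'_j\}$, and expand $\CorrectNSStrategyRep[\NumRep](Q)$ and $\CorrectNSStrategyRep[\NumRep](Q')$ using independent shifts $(R,W)$ and $(R',W')$; each is a sum of two $\NSStrategyRep[2\NumRep]$ values. To compare them on coordinates in $J$, I would introduce a common $[W''; \cdot]$-type bridge and use (i) the consistency-test guarantee, which controls how much $\NSStrategyRep[2\NumRep]([W;Z_1])_{|W}$ and $\NSStrategyRep[2\NumRep]([W;Z_2])_{|W}$ can differ on their shared first half, and (ii) the linearity-test guarantee to align the shifts, again by a union bound over a constant number of events. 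The factor $8$ (versus $4$ for linearity) reflects that consistency needs both test guarantees invoked, roughly doubling the count. The main obstacle I expect is the combinatorial bookkeeping of exactly which auxiliary $\NSStrategyRep[2\NumRep]$-queries to introduce so that (a) the total query set stays within $\MaxQueries$ — this is why $\Correct\MaxQueries = \MaxQueries/2 - 5$ rather than $\MaxQueries/2$, the slack of $5$ absorbing the auxiliary points — and (b) every comparison step is licensed either by the non-signaling property, by the linearity-test guarantee, or by the consistency-test guarantee, with the union bound giving exactly the claimed constants. Getting this accounting tight, rather than loose, is the delicate part; everything else is routine.
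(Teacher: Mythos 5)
Your handling of the structural parts matches the paper: the folding argument is the paper's \cref{lemma:self-correction-preserve-folding}, and your telescoping plan for almost-linearity is exactly the hybrid of \cref{lemma:self-correction-almost-linear} (four intermediate query sets, three comparison steps each distributed as a linearity-test instance, plus one final linearity event, giving $4\eps$); leaving that bookkeeping implicit is acceptable since you identified the right structure.

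The gap is in the almost-consistency part. The paper does not bridge $\CorrectNSStrategyRep[\NumRep](Q)$ and $\CorrectNSStrategyRep[\NumRep](Q')$ directly. It first proves a ``zero-on-zeros'' statement (\cref{claim:zero-on-zeros}): for every query $Q$, with probability at least $1-4\eps$ one has $\CorrectNSStrategyRep[\NumRep](Q)_j=0$ on all coordinates $j$ with $Q_j=0^n$. Almost-consistency then follows in two lines: apply almost-linearity to $Q,Q',Q+Q'$ (cost $4\eps$) and the zero-on-zeros claim to $Q+Q'$, which vanishes on $J=\{j: Q_j=Q'_j\}$ (cost $4\eps$), and union bound to get $8\eps$. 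Your direct bridge leaves two concrete issues unresolved. First, the consistency test (\cref{def:consistency-test}) only certifies agreement of the two answers on the coordinates of the shared block, whereas you need agreement on the scattered set $J$ inside the output half; in the paper this relocation is done by invoking the permutation foldedness of $\NSStrategyRep[2\NumRep]$ (the final step of \cref{claim:zero-on-zeros}), which is precisely why the theorem hypothesizes that $\NSStrategyRep[2\NumRep]$ is folded --- your sketch establishes folding of $\CorrectNSStrategyRep[\NumRep]$ but never uses foldedness of $\NSStrategyRep[2\NumRep]$ in the consistency argument. Second, the pair of queries produced by the self-correction, $[R;W]$ and $[Q+R;W]$, is \emph{not} distributed as a consistency-test pair: the non-shared halves are correlated (they differ by $Q$), so the test guarantee cannot be applied to them directly; the paper flags exactly this after \cref{eq:consistency-using-lin} and repairs it by two applications of the linearity test to re-randomize both queries (writing each as a sum of two fresh-looking queries) before the consistency test is applied twice to the re-randomized pairs. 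This is where the count $2\eps+2\eps=4\eps$, and hence $4\eps+4\eps=8\eps$, comes from. Your phrase ``use the linearity-test guarantee to align the shifts'' gestures at this, but without the explicit re-randomization identity, the folding step, and the reduction through $Q+Q'$, the proposed union bound does not yet constitute a proof and does not pin down the constant $8$.
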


\noindent
The rest of this section is devoted to the proof of \cref{thm:self-correction-structure}

\subsubsection{\texorpdfstring{$\CorrectNSStrategyRep[\NumRep]$}{Self-correction} is permutation folded}
We first prove that if $\NSStrategyRep[2\NumRep]$ is permutation folded, then $\CorrectNSStrategyRep[\NumRep]$ is also permutation-folded.
(Recall \cref{def:permutation-folding} for the definition of the permutation folded property and the application of permutations on vectors.)

\begin{lemma}
\label{lemma:self-correction-preserve-folding}
Assuming $\NSStrategyRep[2\NumRep]$ is permutation-folded, $\CorrectNSStrategyRep[\NumRep]$ is also permutation-folded.
\end{lemma}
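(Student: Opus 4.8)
The plan is to unwind the definitions of permutation folding and of the self-correction, and show directly that applying a permutation to a query set to $\CorrectNSStrategyRep[\NumRep]$ permutes the answer distribution in the required way, using only the fact that the underlying proof $\NSStrategyRep[2\NumRep]$ already has this property. Recall from \cref{def:permutation-folding} that for a repeated proof, permutation invariance means that for query sets $\Correct S = \{Q_1,\dots,Q_\ell\}$ and $\Correct T = \{\pi_1(Q_1),\dots,\pi_\ell(Q_\ell)\}$ and answers $b_1,\dots,b_\ell$, the event ``$\CorrectNSStrategyRep[\NumRep](Q_i) = b_i$ for all $i$'' and the event ``$\CorrectNSStrategyRep[\NumRep](\pi_i(Q_i)) = \pi_i(b_i)$ for all $i$'' have the same probability. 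So I would fix such $\Correct S, \Correct T$ and permutations $\pi_1,\dots,\pi_\ell$, and compute both probabilities by expanding the self-correction according to \cref{def:self-correction}.

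The key observation is how a permutation of a query $Q_i$ interacts with the structure $[R_i;W_i]$ appearing in the self-correction. I would first extend the action of $S_\NumRep$ on $(\Bits^n)^\NumRep$ to an action on $(\Bits^n)^{2\NumRep}$ acting separately on the first and second halves, so that $\pi([R;W]) = [\pi(R);\pi(W)]$, and note that on the alphabet side the corresponding action on $\Bits^{2\NumRep}$ also acts blockwise. The definition of $\CorrectNSStrategyRep[\NumRep](Q_i)$ samples $R_i,W_i \in (\Bits^n)^\NumRep$ uniformly, queries $\NSStrategyRep[2\NumRep]$ on $\{[R_i;W_i],[Q_i+R_i;W_i]\}$, and outputs the first half of the sum of the two answers. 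The point is that the map $(R_i,W_i) \mapsto (\pi_i(R_i),\pi_i(W_i))$ is a measure-preserving bijection on the space of random choices, and under this reparametrization the query set $\{[R_i;W_i],[Q_i+R_i;W_i]\}$ for $\pi_i(Q_i)$ becomes exactly $\pi_i$ applied to the query set $\{[R_i;W_i],[Q_i+R_i;W_i]\}$ for $Q_i$ (using that $\pi_i(Q_i) + \pi_i(R_i) = \pi_i(Q_i + R_i)$ since the permutation acts on coordinate blocks and addition is coordinatewise). I would then invoke the permutation-foldedness of $\NSStrategyRep[2\NumRep]$ to conclude that the joint answer distribution on this permuted query set is the $\pi_i$-image of the answer distribution on the original set; finally, since taking the first half of a sum commutes with the blockwise permutation action, the output of $\CorrectNSStrategyRep[\NumRep](\pi_i(Q_i))$ is distributed as $\pi_i$ applied to the output of $\CorrectNSStrategyRep[\NumRep](Q_i)$, jointly over $i \in [\ell]$. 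This is precisely the permutation-folded condition for $\CorrectNSStrategyRep[\NumRep]$.

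One subtlety I would be careful about is that the self-correction samples the auxiliary randomness $R_i, W_i$ \emph{independently} for each $i \in [\ell]$, so the argument must handle the full joint distribution over all $\ell$ queries at once rather than one query at a time; this is fine because the reparametrization $(R_i,W_i) \mapsto (\pi_i(R_i),\pi_i(W_i))$ can be done independently in each coordinate block $i$, and then a single application of the permutation-foldedness of $\NSStrategyRep[2\NumRep]$ to the combined query set $\bigcup_i \{[R_i;W_i],[Q_i+R_i;W_i]\}$ (which has size $2\ell \leq 2\Correct\MaxQueries \leq \MaxQueries$, hence is a legal query set) handles everything simultaneously. I expect the main obstacle, such as it is, to be purely notational: setting up the blockwise permutation action on $(\Bits^n)^{2\NumRep}$ and $\Bits^{2\NumRep}$ cleanly enough that the identities $\pi([R;W]) = [\pi(R);\pi(W)]$, $\pi(Q+R) = \pi(Q)+\pi(R)$, and ``first half of $\pi(\cdot)$ equals $\pi$ of first half'' are manifestly true, after which the proof is a short chain of equalities invoking \cref{def:self-correction} and the assumed permutation-foldedness of $\NSStrategyRep[2\NumRep]$. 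There is no real mathematical difficulty here — the self-correction is a symmetric construction built on a symmetric object — so the proof should be a few lines once the notation is in place.
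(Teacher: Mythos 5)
Your proposal is correct and follows essentially the same route as the paper's proof: expand $\CorrectNSStrategyRep[\NumRep]$ via \cref{def:self-correction}, use the uniformity of the auxiliary randomness to reparametrize it by the permutations, and invoke the permutation-foldedness of $\NSStrategyRep[2\NumRep]$ once on the combined $2\ell$-element query set. The only (immaterial) difference is that you let $\pi_i$ act on both halves $[R_i;W_i]\mapsto[\pi_i(R_i);\pi_i(W_i)]$, whereas the paper keeps the $W_i$-halves fixed; both choices make the blockwise bookkeeping go through.
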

\begin{proof}
Fix $S = \{Q_1, \dots, Q_\ell\} \seq (\Bits^n)^{\NumRep}$ with $1 \leq \ell \leq \MaxQueries$,
and let $T = \{\pi_1(Q_1), \dots, \pi_\ell(Q_\ell) \}$
for some permutations $\pi_1,\dots\pi_\ell \in S_\NumRep$.
By definition of $\CorrectNSStrategyRep[\NumRep]$ for any $b_1, \dots, b_\ell \in \Bits^\NumRep$ it holds that

\begin{align*}
  \Pr\left[\forall i \in [\ell] \quad \CorrectNSStrategyRep[\NumRep]_S(Q_i) = b_i \right]
  & = \Pr_{R_i,W_i}\left[\forall i \in [\ell] \quad \NSStrategyRep[2\NumRep]([R_i; W_i]) + \NSStrategyRep[2\NumRep]([Q_i + R_i; W_i]) = b_i \right]\\
  & = \Pr_{R_i,W_i}\left[\forall i \in [\ell] \quad \NSStrategyRep[2\NumRep]([\pi(R_i)_i; W_i]) + \NSStrategyRep[2\NumRep]([\pi_i(Q_i + R_i); W_i]) = b_i \right]\\
  & = \Pr\left[\forall i \in [\ell] \quad \CorrectNSStrategyRep[\NumRep]_T(\pi_i(Q_i)) = \pi_i(b_i) \right]
  \enspace,
\end{align*}
as required.
\end{proof}

\subsubsection{\texorpdfstring{$\CorrectNSStrategyRep[\NumRep]$}{Self-correction} is almost linear}
Next, we show that if $\NSStrategyRep$ passes the linearity test with high probability,
then its self-correction is almost linear as per \cref{def:almost-linear-rep-ns-fun}.

\begin{lemma}
\label{lemma:self-correction-almost-linear}
Let $\NSStrategyRep[2\NumRep] \colon (\Bits^n)^{2\NumRep} \to \Bits^{2\NumRep}$ be a $2\NumRep$-repeated $\MaxQueries$-non-signaling function
such that $\MaxQueries \geq 7$.
If $\NSStrategyRep[2\NumRep]$ passes the linearity test with probability at least $1-\eps$, then $\CorrectNSStrategyRep[\NumRep]$ is $(1-4\eps)$-linear.
That is, for any query set $\Correct{S} = \{X, Y, X+Y\} \seq (\Bits^n)^{\NumRep}$ we have $\Pr[\CorrectNSStrategyRep[\NumRep](X) + \CorrectNSStrategyRep[\NumRep](Y) = \CorrectNSStrategyRep[\NumRep](X+Y)] \geq 1 - 4\eps$.
\end{lemma}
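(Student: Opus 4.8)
The plan is to show that $\CorrectNSStrategyRep[\NumRep]$ satisfies each linearity constraint with high probability by expanding the definition of the self-correction and bounding the "bad" event by a union over a constant number of calls to the (average-case) linearity test on $\NSStrategyRep[2\NumRep]$. Concretely, fix a query set $\Correct S = \{X, Y, X+Y\} \subseteq (\Bits^n)^{\NumRep}$. To sample $\CorrectNSStrategyRep[\NumRep](X), \CorrectNSStrategyRep[\NumRep](Y), \CorrectNSStrategyRep[\NumRep](X+Y)$ we draw independent uniform $R_1, W_1$, $R_2, W_2$, $R_3, W_3 \in (\Bits^n)^{\NumRep}$ and query $\NSStrategyRep[2\NumRep]$ on the $6$-element set $S = \{[R_1;W_1], [X+R_1;W_1], [R_2;W_2], [Y+R_2;W_2], [R_3;W_3], [X+Y+R_3;W_3]\}$; this requires $\MaxQueries \geq 6$, and to leave room for the additional queries introduced in the argument below we will use the hypothesis $\MaxQueries \geq 7$ as stated.

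The key idea is the standard "two fresh randomizations cancel" trick. Introduce one more auxiliary batch: I would additionally consider the query $[R_1 + R_2 + X + Y \,;\, W]$ for a fresh uniform $W$, or—cleaner—relate the three self-corrected values through a common intermediate vector. The cleanest route is: since $R_1, R_2, R_3$ are independent uniform, the joint distribution of $(R_1, R_1+X)$, $(R_2, R_2+Y)$, $(R_3, R_3+X+Y)$ can be coupled to expose pairwise linearity patterns. I would argue that $\CorrectNSStrategyRep[\NumRep](X) + \CorrectNSStrategyRep[\NumRep](Y) + \CorrectNSStrategyRep[\NumRep](X+Y)$ equals, coordinatewise, a sum of six values of $\NSStrategyRep[2\NumRep]$, and that this sum is $0$ unless one of a constant number of linearity triples fails. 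Specifically, using non-signaling to add the vector $[R_1 + R_2 + R_3 \,;\, W']$ (and choosing the $W_i$ consistently via permutation-folding or simply by making them all equal to a common fresh $W$, which only costs us locality), each application of the identity $\NSStrategyRep[2\NumRep](A) + \NSStrategyRep[2\NumRep](B) = \NSStrategyRep[2\NumRep](A+B)$ holds with probability $\geq 1-\eps$ when $A, B, A+B$ lie in the queried set, because the linearity test samples its two points uniformly and independently and $[R_i;W], [X+R_i;W]$ etc. are each uniformly distributed. Four such applications (one per "randomization layer" plus one to stitch them) and a union bound give failure probability $\leq 4\eps$, hence $\Pr[\CorrectNSStrategyRep[\NumRep](X) + \CorrectNSStrategyRep[\NumRep](Y) = \CorrectNSStrategyRep[\NumRep](X+Y)] \geq 1 - 4\eps$.

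\textbf{The main obstacle} I anticipate is bookkeeping the non-signaling consistency across the several overlapping query sets: each time I invoke a linearity-test bound on a pair like $([R_1;W], [X+R_1;W])$ I am really invoking it inside a larger query set, and I must check that (i) the total set size stays $\leq \MaxQueries$, which is why $\MaxQueries \geq 7$ is needed, and (ii) the marginal of the large query set on the relevant triple agrees with the distribution in the standalone linearity test—this is exactly what $\MaxQueries$-non-signaling guarantees, but one has to be careful that the points being compared are genuinely uniformly and independently distributed (they are, since $R_i, W$ are uniform and independent of $X, Y$). The other subtlety is making sure the "second half" projection in \cref{def:self-correction} interacts correctly with taking sums; since the projection and the coordinatewise $\Bits$-addition commute, this is routine but should be stated. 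I would organize the proof as: (1) write out the $\leq 7$-element query set, (2) express the target event as a conjunction of linearity-type equalities on $\NSStrategyRep[2\NumRep]$, (3) bound each by $\eps$ via non-signaling plus the test hypothesis, (4) union bound to get $4\eps$.
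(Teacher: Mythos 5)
Your high-level plan (expand the self-correction, reduce to a constant number of applications of the average-case linearity guarantee, union bound) is the right flavor, but as written it has a genuine gap: the mechanism you propose cannot be executed within locality $\MaxQueries \geq 7$. To cancel the six values $\NSStrategyRep[2\NumRep]([R_i;W_i]), \NSStrategyRep[2\NumRep]([X+R_1;W_1]),\dots$ you need roughly three intermediate points of the form $[X+R_1+R_2;\,W_1+W_2]$, $[Y+R_2+R_3;\,W_2+W_3]$, $[X+Y+R_1+R_3;\,W_1+W_3]$, i.e.\ nine points in total, and a union bound over the four linearity events requires all of them to live in a \emph{single} query set -- which a $7$-non-signaling function does not provide. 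The single auxiliary point you suggest ($[R_1+R_2+X+Y;W]$ or $[R_1+R_2+R_3;W']$) does not form any triple $\{A,B,A+B\}$ with the six base points (the $W$-halves do not match), so "four applications of the identity inside the queried set" cannot be instantiated. The paper's proof resolves exactly this obstacle by a hybrid/chaining argument: it defines a sequence of query sets $S_1,\dots,S_4$ with $\abs{S_i\cup S_{i+1}}\leq 7$, shows $\Pr[\AddFunction(\NSStrategyRep[2\NumRep](S_i))=\AddFunction(\NSStrategyRep[2\NumRep](S_{i+1}))]\geq 1-\eps$ by observing that the symmetric difference is a linearity triple whose two free points are uniform and \emph{independent}, and then compares $\Pr[\AddFunction=0]$ across the chain by the triangle inequality, finishing with the observation that $S_4$ itself is distributed as a test triple; no joint distribution over all nine points is ever needed.

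Two further problems with your sketch: (i) replacing the independent $W_1,W_2,W_3$ by a common fresh $W$ changes the distribution of $\CorrectNSStrategyRep[\NumRep]$ as defined in \cref{def:self-correction}, so you would be proving a statement about a different object (and, since addition is mod $2$, a common $W$ cancels in sums and still produces no usable triples); (ii) the hypothesis only controls triples whose two generating points are \emph{jointly} uniform and independent -- uniformity of each point separately, which is what you invoke for pairs like $[R_1;W_1],[X+R_1;W_1]$, is not sufficient, and the paper's choice of which pairs to use in each hybrid step is dictated by this constraint.
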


The proof is almost the same as in \cite{ChiesaMS20} Theorem 12 ($1 \implies 2$).
The idea is to define a constant number of intermediate events such that each of them holds with high probability by high acceptance probability of the linearity test. Then we put together these intermediate events and derive the desired statement.
\begin{proof}

For $X,Y \in (\Bits^n)^{\NumRep}$ define $Z = X+Y$, and sample $R_X,R_Y,R_Z, W_X,W_Y,W_Z \in (\Bits^n)^{\NumRep}$ uniformly at random independently of each other.
By definition of $\CorrectNSStrategyRep[\NumRep]$ we have
\begin{align*}
\Pr[\CorrectNSStrategyRep[\NumRep](X) + \CorrectNSStrategyRep[\NumRep](Y) = \CorrectNSStrategyRep[\NumRep](X+Y)] \\
\geq
\Pr \bigg[
 &\NSStrategyRep[2\NumRep]([R_X; W_X]) + \NSStrategyRep[2\NumRep]([X+R_X; W_X]) \\
&+
\NSStrategyRep[2\NumRep]([R_Y; W_Y]) + \NSStrategyRep[2\NumRep]([Y+R_Y; W_Y]) \\
&=
\NSStrategyRep[2\NumRep]([R_Z; W_Z]) + \NSStrategyRep[2\NumRep]([Z+R_Z; W_Z])
\bigg]
\enspace.
\end{align*}
Define
\begin{flalign*}
    S_1 \DefineEqual& \{[R_X; W_X], [R_Y; W_Y], [R_Z; W_Z], [X+R_X; W_X], [Y+R_Y; W_Y], [X+Y+R_Z; W_Z] \} \enspace, \\
    S_2 \DefineEqual& \{[R_X; W_X], [R_Z; W_Z], [X+R_X+R_Y; W_X+W_Y], [Y+R_Y; W_Y], [X+Y+R_Z; W_Z] \} \enspace, \\
    S_3 \DefineEqual& \{[R_X; W_X], [X+R_X+R_Y; W_X + W_Y], [Y+R_Y+R_Z; W_Y+W_Z], [X+Y+R_Z; W_Z] \} \enspace, \\
    S_4 \DefineEqual& \{[X+R_X+R_Y; W_X + W_Y], [Y+R_Y+R_Z; W_Y+W_Z], [X+Y+R_X+R_Z; W_X+W_Z] \} \enspace.
\end{flalign*}
Note that $\Cardinality{S_i \cup S_{i+1}} \leq 7 \leq \MaxQueries$ for $i =1,2,3$.

Let $\AddFunction(\cdot)$ be the addition function, and consider the sets $S_1$ and $S_2$.
Then
\begin{align*}
    &\Pr[\AddFunction(\NSStrategyRep[2\NumRep](S_1)) = \AddFunction(\NSStrategyRep[2\NumRep](S_2))] \\
    &= \Pr[\NSStrategyRep[2\NumRep]([X+R_X+R_Y; W_X + W_Y] + \NSStrategyRep[2\NumRep]([X+R_X; W_X]) = \NSStrategyRep[2\NumRep]([R_Y; W_Y])]\enspace.
\end{align*}
Observing that the distribution on the right hand side is exactly as in the linearity test,
we get that
\begin{equation*}
    \Pr[\AddFunction(\NSStrategyRep[2\NumRep](S_1)) = \AddFunction(\NSStrategyRep[2\NumRep](S_2))] \geq 1 - \eps \enspace.
\end{equation*}
Similarly, we have
\begin{align*}
    &\Pr[\AddFunction(\NSStrategyRep[2\NumRep](S_2)) = \AddFunction(\NSStrategyRep[2\NumRep](S_3))] \\
    &= \Pr[\NSStrategyRep[2\NumRep]([R_Z; W_Z] + \NSStrategyRep[2\NumRep]([Y+R_Y; W_Y]) = \NSStrategyRep[2\NumRep]([Y+R_Y+R_Z; W_Y+W_Z])] \\
    &\geq 1-\eps \enspace,
\end{align*}
and
\begin{align*}
    &\Pr[\AddFunction(\NSStrategyRep[2\NumRep](S_3)) = \AddFunction(\NSStrategyRep[2\NumRep](S_4))] \\
    &= \Pr[\NSStrategyRep[2\NumRep]([R_X; W_X] + \NSStrategyRep[2\NumRep]([X+R_Z; W_Z]) = \NSStrategyRep[2\NumRep]([X+Y+R_X+R_Z; W_X+W_Z])] \\
    &\geq 1-\eps \enspace.
\end{align*}
Therefore,
\begin{align*}
\abs{\Pr[\AddFunction(\NSStrategyRep[2\NumRep](S_1)) = 0] - \Pr[\AddFunction(\NSStrategyRep[2\NumRep](S_4)) = 0]  }
&\leq \sum_{i = 1}^3 \abs{ \Pr[\AddFunction(\NSStrategyRep[2\NumRep](S_i)) = 0] - \Pr[\AddFunction(\NSStrategyRep[2\NumRep](S_{i+1})) = 0] }\\
&\leq 3 \eps
\enspace.
\end{align*}
Finally, note that
\begin{equation*}
    \Pr[\AddFunction(\NSStrategyRep[2\NumRep](S_4)) = 0] \geq 1-\eps
    \enspace,
\end{equation*}
because the distribution of $S_4$ is equal to the distribution of a three tuple used for linearity testing.
Therefore,
\begin{equation*}
    \Pr[\CorrectNSStrategyRep[\NumRep](X) + \CorrectNSStrategyRep[\NumRep](Y) = \CorrectNSStrategyRep[\NumRep](X+Y)]
    \geq
    \Pr[\AddFunction(\NSStrategyRep[2\NumRep](S_1)) = 0]
    \geq 1 - 4\eps
    \enspace,
\end{equation*}
as required.
\end{proof}

\subsubsection{\texorpdfstring{$\CorrectNSStrategyRep[\NumRep]$}{Self-correction} is almost consistent}
Finally, we prove in \cref{lemma:self-correction-almost-consistent}
that if $\NSStrategyRep$ passes the consistency test with high probability,
then its self-correction is almost consistent.
Before proving it we need the following claim.

\begin{claim}
\label{claim:zero-on-zeros}
Let $\NSStrategyRep[2\NumRep] \colon (\Bits^n)^{2\NumRep} \to \Bits^{2\NumRep}$ be a $2\NumRep$-repeated $\MaxQueries$-non-signaling function
such that $\MaxQueries \geq 6$.
Suppose that $\NSStrategyRep[2\NumRep]$ passes both linearity and consistency tests with probability at least $1 - \eps$.
Then for any $Q \in (\Bits^n)^{\NumRep}$ it holds that
\begin{equation*}
    \Pr\left[ \CorrectNSStrategyRep[\NumRep](Q)_j = 0 \quad \forall j \in [\NumRep] \mbox{ such that } Q_j = 0^n \right] > 1 - 4\eps
\end{equation*}
\end{claim}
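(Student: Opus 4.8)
The plan is to unfold the definition of the self‑correction and reduce the claim to a statement about the answers of $\NSStrategyRep[2\NumRep]$ on two correlated queries. Write $J' = \{j \in [\NumRep] : Q_j = 0^n\}$. For a single query $Q$, the self‑correction samples uniform $R, W \in (\Bits^n)^\NumRep$, queries $\NSStrategyRep[2\NumRep]$ on the pair $\{[R;W],[R+Q;W]\}$, and outputs $\CorrectNSStrategyRep[\NumRep](Q)_j = \NSStrategyRep[2\NumRep]([R;W])_j + \NSStrategyRep[2\NumRep]([R+Q;W])_j$. Hence the claim is equivalent to showing that, with probability more than $1-4\eps$ over $R,W$ and the internal randomness of $\NSStrategyRep[2\NumRep]$, one has $\NSStrategyRep[2\NumRep]([R;W])_j = \NSStrategyRep[2\NumRep]([R+Q;W])_j$ for all $j \in J'$. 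The key observation is that for every $j \in J'$ the two queries $[R;W]$ and $[R+Q;W]$ carry the \emph{same} value $R_j$ in coordinate $j$ (since $Q_j = 0^n$) and obviously carry the same values in all second‑half coordinates, so they agree on the coordinate set $J' \cup \{\NumRep+1,\dots,2\NumRep\}$, whose size $|J'| + \NumRep$ is at least $\NumRep$.

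From here I would fix a $\NumRep$‑element coordinate set $K$ with $J' \seq K \seq J' \cup \{\NumRep+1,\dots,2\NumRep\}$ --- on which $[R;W]$ and $[R+Q;W]$ therefore agree --- and introduce one auxiliary query $P_3$ that agrees with $[R;W]$ (equivalently, with $[R+Q;W]$) on $K$ and takes a \emph{fresh}, uniformly random, independent value on the complementary coordinates. Since $\NSStrategyRep[2\NumRep]$ is $\MaxQueries$‑non‑signaling with $\MaxQueries \geq 6$, we may issue $[R;W]$, $[R+Q;W]$ and $P_3$ together in a query set of size $3 \leq \MaxQueries$, and non‑signaling guarantees that the marginal on $\{[R;W],[R+Q;W]\}$ is unchanged, so this only helps the analysis. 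The point of the choice of $K$ is that the pair $([R;W],P_3)$ is distributed exactly as a permutation‑folded instance of the consistency test with ``first half'' $K$: the two queries agree on $K$, their common values on $K$ are uniform, and their values off $K$ are two independent uniform strings, independent also of the $K$‑part --- here one uses that the blocks of $R$ and of $W$ that fall inside $K$ involve coordinates disjoint from those falling outside. The same is true of $([R+Q;W],P_3)$, using that $Q$ vanishes on $J'$ (so the two queries still agree on $K$) and that adding the fixed shift $Q$ to the $R$‑blocks outside $J'$ keeps them uniform and independent. Invoking permutation‑foldedness of $\NSStrategyRep[2\NumRep]$ --- available in the setting of \cref{thm:self-correction-structure} --- the (folded) consistency test accepts each of these instances with probability at least $1-\eps$, so by a union bound, with probability at least $1-2\eps$ we have $\NSStrategyRep[2\NumRep]([R;W])_{|K} = \NSStrategyRep[2\NumRep](P_3)_{|K} = \NSStrategyRep[2\NumRep]([R+Q;W])_{|K}$; restricting this chain to $J' \seq K$ yields the desired agreement, and as $2\eps < 4\eps$ the claim follows. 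Only the consistency‑test hypothesis is used in this routing; the linearity hypothesis (also available) leaves comfortable slack up to $4\eps$ for a proof that prefers to invoke it.

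The step I expect to be the crux is the distributional matching above: the consistency test, as stated, compares queries that agree on the \emph{entire} first half, whereas the self‑correction produces a pair that agrees only on $J'$ --- a strict subset of the first half unless $Q = 0$ --- together with the common second half $W$, and the two queries are moreover \emph{correlated through $W$}. Neither the linearity test nor the consistency test disturbs the ``$[R;W]$ versus $[R+Q;W]$ differ by the fixed vector $[Q;0^\NumRep]$'' structure, so the entire difficulty lies in introducing just enough fresh randomness --- a single extra query $P_3$ --- and permuting coordinates so that the resulting pairs exactly match the distribution the (folded) consistency test expects, all while keeping the number of simultaneously issued queries at most $\MaxQueries = 6$. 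One must be careful that the uniform blocks of $R$ and of $W$ inside $K$ involve coordinates disjoint from those outside, so that the ``$K$‑part'' and the ``complement parts'' are mutually independent as the test requires; the degenerate cases $J' = \emptyset$ and $J' = [\NumRep]$ (i.e.\ $Q = 0$) are immediate and are handled separately.
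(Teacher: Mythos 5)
Your proof is correct, but it takes a genuinely different route from the paper's. The paper's proof confronts the same crux you identify --- the pair $([R;W],[Q+R;W])$ agrees on the second half but its first halves differ by the \emph{fixed} shift $Q$, so it is not itself a consistency-test instance --- and resolves it via the \emph{linearity} test: with two applications of linearity it rewrites each of $\NSStrategyRep[2\NumRep]([Q+R;W])$ and $\NSStrategyRep[2\NumRep]([R;W])$ as a sum of answers on freshly randomized queries, and then applies the consistency test to the two resulting independent-looking pairs; this requires issuing $6$ queries simultaneously (the source of the hypothesis $\MaxQueries\geq 6$) and yields the stated $1-4\eps$. Your argument instead bypasses linearity entirely: the single bridge query $P_3$, agreeing with both queries on a fixed $\NumRep$-element set $K\supseteq J'$ and fresh elsewhere, makes each of the pairs $([R;W],P_3)$ and $([R+Q;W],P_3)$ \emph{exactly} a (coordinate-permuted) consistency-test instance --- the correlation between the two original queries never needs to be broken, only side-stepped by chaining through $P_3$ and union-bounding. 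This buys a smaller query set ($3$ instead of $6$, so $\MaxQueries\geq 3$ suffices for this claim), a better constant ($1-2\eps$ instead of $1-4\eps$), and no use of the linearity hypothesis. Both proofs rely on permutation foldedness even though it is absent from the claim's stated hypotheses (the paper's own last step invokes permutation invariance, and its two consistency-test applications concern pairs agreeing on the \emph{second} half, which also implicitly uses the fold), so your importing it from the setting of \cref{thm:self-correction-structure} is consistent with the paper; your distributional matching (uniform common $K$-part, independent uniform off-$K$ parts, with non-signaling justifying the passage from the $3$-query experiment to the pair marginals and back to the $2$-query experiment defining $\CorrectNSStrategyRep[\NumRep]$) checks out.
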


\begin{proof}
The key observation here is that for a uniformly random $R,W \in (\Bits^n)^{\NumRep}$ it holds that
\begin{equation}\label{eq:consistency-using-lin}
    \Pr\left[\NSStrategyRep[2\NumRep]([Q+R; W])_j = \NSStrategyRep[2\NumRep]([R; W]))_j \quad \forall j = \NumRep+1, \dots, 2\NumRep \right] \geq 1 - 4\eps
    \enspace.
\end{equation}
(Note that \cref{eq:consistency-using-lin} does not follow from consistency testing since $R$ and $Q+R$ are not independent.)
Indeed, let $R',R'',W' \in (\Bits^n)^{\NumRep}$ be sampled uniformly at random, independently of all other random variables.
Then, since $\NSStrategyRep[2\NumRep]$ passes linearity test with probability at least $1-\eps$, it follows that
with probability at least $1 - 2\eps$ the following equalities hold:
\begin{align*}
        \NSStrategyRep[2\NumRep]([Q+R; W]) =& \NSStrategyRep[2\NumRep]([Q+R''; W']) + \NSStrategyRep[2\NumRep]([R+R''; W+W']) \\
        \NSStrategyRep[2\NumRep]([R; W]) =& \NSStrategyRep[2\NumRep]([R'; W']) + \NSStrategyRep[2\NumRep]([R+R'; W+W'])
\end{align*}
If these two equalities hold, then
\begin{align*}
    \NSStrategyRep[2\NumRep]([Q+R; W]) + \NSStrategyRep[2\NumRep]([R; W])
    =& \NSStrategyRep[2\NumRep]([Q+R''; W']) + \NSStrategyRep[2\NumRep]([R'; W']) \\
    &+ \NSStrategyRep[2\NumRep]([R+R''; W+W']) + \NSStrategyRep[2\NumRep]([R+R'; W+W'])
    \enspace.
\end{align*}
Noting that the queries $\{[Q+R''; W'], [R'; W']\}$ are distributed as in the consistency test,
it follows that
\begin{equation}\label{eq:consistency-apply1}
    \Pr\left[\NSStrategyRep[2\NumRep]([Q+R''; W'])_j = \NSStrategyRep[2\NumRep]([R'; W']))_j \quad \forall j = \NumRep+1, \dots, 2\NumRep \right] \geq 1 - \eps
    \enspace.
\end{equation}
By the same argument we have
\begin{equation}\label{eq:consistency-apply2}
    \Pr\left[\NSStrategyRep[2\NumRep]([R+R''; W+W'])_j = \NSStrategyRep[2\NumRep]([R+R';W+W']))_j \quad \forall j = \NumRep+1, \dots, 2\NumRep \right] \geq 1 - \eps
    \enspace.
\end{equation}
These immediately imply \cref{eq:consistency-using-lin}.

In order to complete the proof let $\pi \in S_{2\NumRep}$ be an arbitrary permutation such that for all $j \in [t]$, $\pi(j) \in \{\NumRep+1, \dots, 2\NumRep\}$.
Then,
\begin{align*}
    &\Pr\left[ \CorrectNSStrategyRep[\NumRep](Q)_j = 0 \quad \forall j \in [\NumRep] \mbox{ such that } Q_j = 0^n \right] \\
    &=
    \Pr\left[\NSStrategyRep[2\NumRep]([Q+R; W])_j = \NSStrategyRep[2\NumRep]([R; W]))_j \quad \forall j \in [\NumRep] \mbox{ such that } Q_j = 0^n \right] \\
    &=
    \Pr\left[\NSStrategyRep[2\NumRep](\pi([Q+R; W]))_{\pi(j)} = \NSStrategyRep[2\NumRep](\pi([R; W])))_{\pi(j)} \quad \forall j \in [\NumRep] \mbox{ such that } Q_j = 0^n \right] \\
    &\geq 1 - 4\eps
    \enspace,
\end{align*}
where the last inequality follows from \cref{eq:consistency-using-lin} together with the permutation invariance of $\NSStrategyRep[2\NumRep]$.
\end{proof}

The following lemma, saying that $\CorrectNSStrategyRep[\NumRep]$ is $(1-O(\eps))$-consistent, follows almost immediately from \cref{claim:zero-on-zeros}.

\begin{lemma}\label{lemma:self-correction-almost-consistent}
Let $\NSStrategyRep \colon (\Bits^n)^{2\NumRep} \to \Bits^{2\NumRep}$ be a $2\NumRep$-repeated $\MaxQueries$-non-signaling function
such that $\MaxQueries \geq 7$, and suppose that $\NSStrategyRep[2\NumRep]$ is permutation folded.

If $\NSStrategyRep[2\NumRep]$ passes both linearity and consistency tests with probability $1-\eps$,
then $\CorrectNSStrategyRep[\NumRep]$ is $(1-8\eps)$-consistent.
That is, for any two queries $X, Y \in (\Bits^n)^{\NumRep}$ to $\CorrectNSStrategyRep[\NumRep]$ it holds that
\begin{equation*}
    \Pr \left[ \CorrectNSStrategyRep[\NumRep](X)_j = \CorrectNSStrategyRep[\NumRep](Y)_j
        \quad \forall j \in [\NumRep] \mbox{ such that } X_j = Y_j  \right] \geq 1 - 8\eps
    \enspace.
\end{equation*}
\end{lemma}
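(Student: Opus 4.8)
The plan is to deduce consistency of $\CorrectNSStrategyRep[\NumRep]$ from the two structural facts already established for the self-correction: that it is almost linear (\cref{lemma:self-correction-almost-linear}) and that it almost vanishes on coordinates whose input block is $0^n$ (\cref{claim:zero-on-zeros}). Fix two queries $X, Y \in (\Bits^n)^{\NumRep}$ and let $J = \{j \in [\NumRep] : X_j = Y_j\}$ be the set of coordinates on which we must force agreement. The key observation is that for $Z := X+Y$ one has $Z_j = 0^n$ for every $j \in J$, so agreement of $\CorrectNSStrategyRep[\NumRep](X)$ and $\CorrectNSStrategyRep[\NumRep](Y)$ on $J$ is exactly the assertion that $\CorrectNSStrategyRep[\NumRep](X)_j + \CorrectNSStrategyRep[\NumRep](Y)_j$ equals the ``correct'' value $0$ on $J$, and almost-linearity ties this sum to $\CorrectNSStrategyRep[\NumRep](Z)_j$, which \cref{claim:zero-on-zeros} pins to $0$ on $J$.

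Concretely, I would work inside the joint distribution of $\CorrectNSStrategyRep[\NumRep]$ on the query set $\{X, Y, Z\}$, which is legitimate since it has size at most $3 \leq \Correct{\MaxQueries}$ under $\MaxQueries \geq 7$ (the same size regime already used in \cref{lemma:self-correction-almost-linear}). Within this marginal, \cref{lemma:self-correction-almost-linear} gives that the event $A = \{\CorrectNSStrategyRep[\NumRep](X) + \CorrectNSStrategyRep[\NumRep](Y) = \CorrectNSStrategyRep[\NumRep](Z)\}$ holds with probability at least $1-4\eps$, while \cref{claim:zero-on-zeros} applied with $Q = Z$ gives that the event $B = \{\CorrectNSStrategyRep[\NumRep](Z)_j = 0 \ \forall j \in J\}$ holds with probability greater than $1-4\eps$ (the hypotheses of both — $\MaxQueries$ large enough, passing both tests with probability $\geq 1-\eps$ — are subsumed by those of the lemma). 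On $A \cap B$, every $j \in J$ satisfies $\CorrectNSStrategyRep[\NumRep](X)_j + \CorrectNSStrategyRep[\NumRep](Y)_j = \CorrectNSStrategyRep[\NumRep](Z)_j = 0$, i.e.\ $\CorrectNSStrategyRep[\NumRep](X)_j = \CorrectNSStrategyRep[\NumRep](Y)_j$; a union bound over $\overline A$ and $\overline B$ then yields probability at least $1 - 4\eps - 4\eps = 1 - 8\eps$, as claimed.

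I do not expect a real obstacle here; the lemma is essentially a two-line consequence of the previously established structure. The points requiring care in the write-up are: phrasing everything inside a single non-signaling marginal over $\{X, Y, Z\}$ rather than treating the probabilities of $A$ and $B$ as coming from unrelated experiments (this is where the non-signaling property of $\CorrectNSStrategyRep[\NumRep]$ is used, via \cref{def:self-correction}); dispatching the degenerate cases where $X, Y, Z$ are not distinct (e.g.\ $X = Y$, or $X$ or $Y$ equal to $0^{n\NumRep}$), which are either trivial or follow from \cref{claim:zero-on-zeros} alone; and checking that the constants add up to the stated $8\eps$. Note also that permutation-foldedness of $\CorrectNSStrategyRep[\NumRep]$ (\cref{lemma:self-correction-preserve-folding}) is not needed for this particular argument.
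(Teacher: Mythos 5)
Your proposal is correct and follows essentially the same route as the paper's proof: query $\{X,Y,Z=X+Y\}$, note $Z_j=0^n$ on the agreement set $J$, combine \cref{claim:zero-on-zeros} (probability $>1-4\eps$) with \cref{lemma:self-correction-almost-linear} (probability $\geq 1-4\eps$), and take a union bound to get $1-8\eps$. The additional remarks about working inside a single marginal and the degenerate cases are fine but do not change the argument.
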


\begin{proof}
    Let $J = \{j \in [\NumRep]: X_j = Y_j\}$.
    Consider the query set $S = \{X,Y,Z = X+Y\}$,
    and note that $Z_j = 0^n$ for all $j \in J$.
    Therefore, by \cref{claim:zero-on-zeros} it follows that
    $\Pr\left[ \CorrectNSStrategyRep[\NumRep](Z)_j = 0 \quad \forall j \in J \right] > 1 - 4\eps$.
    By applying \cref{lemma:self-correction-almost-linear} we have
    $\Pr[\CorrectNSStrategyRep[\NumRep](X) + \CorrectNSStrategyRep[\NumRep](Y) = \CorrectNSStrategyRep[\NumRep](Z)] \geq 1 - 4\eps$.
    Therefore, by the union bound we conclude that
    $\Pr \left[ \CorrectNSStrategyRep[\NumRep](X)_j = \CorrectNSStrategyRep[\NumRep](Y)_j \quad \forall j \in J \right] \geq 1 - 8\eps$,
    thus concluding the proof of \cref{lemma:self-correction-almost-consistent}.
\end{proof}

\cref{thm:self-correction-structure} is an immediate conclusion from
\cref{lemma:self-correction-preserve-folding},
\cref{lemma:self-correction-almost-linear},
and \cref{lemma:self-correction-almost-consistent}.

\doclearpage
\section{Proof of \texorpdfstring{\cref{thm:main}}{the main theorem }}
\label{sec:proof-of-soundness}

Below we prove \cref{thm:main}. Specifically, we show that assuming \cref{hyp:rounding}
the PCP construction in \cref{alg:repeated-ALMSS} is sound against $O(1)$-non-signaling proofs.
\cref{thm:main} follows immediately from the following statement.

\begin{theorem}\label{thm:main-soundness}
Fix a circuit $\Circuit \colon \Bits^n \to \Bits$ with $\NumWires$ wires, and input $x \in \Bits^n$ to $\Circuit$.
Let $\MaxQueries \geq 18$ be a sufficiently large positive constant, and $\NumRep$ be a positive integer such that $\NumRep \geq K \log^{2/\expHypothesis}(\NumWires)$ for some sufficiently large constant $K>0$.
Let $\NSStrategyRep[2\NumRep] \colon (\Bits^{\NumWires^2})^{2\NumRep} \to \Bits^{2\NumRep}$ be a $\MaxQueries$-non-signaling $2\NumRep$-repeated linear consistent proof,
and suppose that $\NSStrategyRep[2\NumRep]$ is permutation invariant.
If $2\NumRep$-repeated ALMSS verifier from \cref{alg:repeated-ALMSS} accepts $\NSStrategyRep[2\NumRep]$ with probability $\geq 1-\eps$
for some sufficiently small $\eps$, then $C(x) = 1$.
\end{theorem}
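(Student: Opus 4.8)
The plan is to follow the modular route sketched in \cref{sec:overview}: (i) use the high acceptance probability to argue that the self-correction of the given proof is almost linear and almost consistent; (ii) flatten it into a (non-repeated) almost non-signaling, almost linear function that is accepted by the non-repeated linear ALMSS verifier of \cref{alg:linear-ALMSS} with high probability; (iii) invoke \cref{hyp:rounding} to replace it by a genuinely non-signaling function of slightly smaller locality; (iv) invoke the non-signaling linearity-test analysis of \cite{ChiesaMS20,ChiesaMS19} to further replace it by a genuinely \emph{linear} non-signaling proof whose locality still exceeds $C\log\NumWires$; and (v) conclude via \cref{thm:nsLPCP-CMS19}.

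For (i)--(ii): if \cref{alg:repeated-ALMSS} accepts $\NSStrategyRep[2\NumRep]$ with probability $\geq 1-\eps$, then each of its three sub-tests accepts with probability $\geq 1-\eps$; in particular $\NSStrategyRep[2\NumRep]$ passes the linearity test and the consistency test with probability $\geq 1-\eps$. Since $\NSStrategyRep[2\NumRep]$ is permutation folded and $\MaxQueries\geq 18$, \cref{thm:self-correction-structure} gives that $\CorrectNSStrategyRep[\NumRep]$ is $\Correct{\MaxQueries}$-non-signaling with $\Correct{\MaxQueries}=\MaxQueries/2-5\geq 4$, permutation folded, $(1-4\eps)$-linear and $(1-8\eps)$-consistent. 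The third sub-test of \cref{alg:repeated-ALMSS} is exactly the $\NumRep$-repeated linear ALMSS verifier of \cref{alg:repeated-linear-ALMSS} applied to $\CorrectNSStrategyRep[\NumRep]$, so that verifier accepts $\CorrectNSStrategyRep[\NumRep]$ with probability $\geq 1-\eps$. Let $\Flat\NSStrategy=\flatten{\CorrectNSStrategyRep[\NumRep]}$. By \cref{claim:flatten-linear-proof}, $\Flat\NSStrategy\colon\Bits^{\NumWires^2}\to\Bits$ is a $(8\eps,\NumRep)$-non-signaling, $(1-28\eps)$-linear function; and repeating the reduction from the proof of \cref{thm:consistent-linear-soundness} — which uses consistency and permutation invariance of $\CorrectNSStrategyRep[\NumRep]$ (together with \cref{claim:flatten-almost-consistent}) to transport the acceptance probability from the first coordinate of the repetition to the non-repeated verifier, now losing an extra $O(\eps)$ for each use of almost-consistency — yields that the linear ALMSS verifier of \cref{alg:linear-ALMSS} accepts $\Flat\NSStrategy$ with probability $\geq 1-O(\eps)$.

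For (iii)--(v): since $\NumRep\leq 2^{\NumWires^2}$, applying \cref{hyp:rounding} to $\Flat\NSStrategy$ with locality parameter $\NumRep$ and error $O(\eps)$ yields a $\NumRep'$-non-signaling $\NSStrategy'\colon\Bits^{\NumWires^2}\to\Bits$ with $\NumRep'\geq\NumRep^{\expHypothesis}$ and $\Distance{4}(\Flat\NSStrategy,\NSStrategy')\leq\eps'_{\hyp}(O(\eps))=:\eps'$. Because linearity is a $3$-query property and \cref{alg:linear-ALMSS} makes $4$ queries, $\NSStrategy'$ is $(1-O(\eps+\eps'))$-linear and is accepted by \cref{alg:linear-ALMSS} with probability $\geq 1-O(\eps+\eps')$. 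Applying the non-signaling linearity-test analysis of \cite{ChiesaMS20,ChiesaMS19} to $\NSStrategy'$ produces some $\linlocalityhat=\Omega(\sqrt{\NumRep'})$ and a $\linlocalityhat$-non-signaling \emph{linear} proof $\QuasiLin$ that is $O(\eps+\eps')$-close to $\NSStrategy'$ on query sets of size at most $4$, hence accepted by \cref{alg:linear-ALMSS} with probability $\geq 1-O(\eps+\eps')$. Finally, $\linlocalityhat=\Omega(\sqrt{\NumRep^{\expHypothesis}})=\Omega(\NumRep^{\expHypothesis/2})$, so choosing $K$ large enough in $\NumRep\geq K\log^{2/\expHypothesis}(\NumWires)$ guarantees $\linlocalityhat\geq C\log\NumWires$ for the constant $C$ hidden in \cref{thm:nsLPCP-CMS19}; and since $\eps'_{\hyp}(\delta)\to 0$ as $\delta\to 0$, taking $\eps$ sufficiently small makes the acceptance probability of $\QuasiLin$ exceed $39/40$. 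Then \cref{thm:nsLPCP-CMS19} forces $\Circuit(x)=1$.

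The step I expect to be the main obstacle is the transport step inside (ii): turning the guarantee ``every coordinate of the repetition satisfies the repeated linear ALMSS predicate with probability $\geq 1-\eps$'' into ``the flattened proof is accepted by the non-repeated verifier of \cref{alg:linear-ALMSS} with probability $\geq 1-O(\eps)$''. This requires simultaneously exploiting permutation invariance and $(1-O(\eps))$-consistency of $\CorrectNSStrategyRep[\NumRep]$ to match the distribution of its answer on a single flattening query with the distribution of the first-coordinate answers on the four repeated ALMSS queries, while keeping the accumulated error $O(\eps)$; the exactly-linear/exactly-consistent instance of this is precisely the argument in \cref{thm:consistent-linear-soundness}. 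A secondary, purely quantitative concern is the contraction of locality $\NumRep\mapsto\NumRep^{\expHypothesis}\mapsto\sqrt{\NumRep^{\expHypothesis}}$ across \cref{hyp:rounding} and the linearity test: one must verify that the final locality $\linlocalityhat$ still exceeds $C\log\NumWires$, which is exactly what dictates taking $\NumRep$ polylogarithmic in $\NumWires$ of exponent $2/\expHypothesis$.
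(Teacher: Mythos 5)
Your proposal follows essentially the same route as the paper's own proof: self-correction via \cref{thm:self-correction-structure}, flattening via \cref{claim:flatten-linear-proof} together with the consistency-based transport of the acceptance probability (first coordinate of the repetition) to the non-repeated verifier of \cref{alg:linear-ALMSS}, rounding via \cref{hyp:rounding}, then \cref{thm:almost-lin-function} to obtain an exactly linear non-signaling proof of locality $\Omega(\sqrt{\NumRep'}) \geq C\log\NumWires$, and finally \cref{thm:nsLPCP-CMS19}. The only (harmless) quantitative nuance is that the closeness delivered by \cref{thm:almost-lin-function} is $O(\sqrt{\eps+\eps'})$ rather than $O(\eps+\eps')$, which still vanishes as $\eps \to 0$, so your conclusion is unaffected.
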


The proof follows the steps outlined in \cref{sec:overview}.

\begin{proof}
Fix a $2\NumRep$-repeated $\MaxQueries$-non-signaling proof $\NSStrategyRep[2\NumRep]$ that satisfies the repeated ALMSS verifier from \cref{alg:repeated-ALMSS} with probability at least $1-\eps$.
In particular, $\NSStrategyRep[2\NumRep]$ passes the linearity test and the consistency test with probability at least $1 - \eps$.

By applying \cref{thm:self-correction-structure} we conclude that $\CorrectNSStrategyRep[\NumRep]$, the self-correction of $\NSStrategyRep[2\NumRep]$,
is a 4-no-signaling function that is $(1-4\eps)$-linear and $(1-8\eps)$-consistent.
Furthermore, $\CorrectNSStrategyRep[\NumRep]$ satisfies the linear $\NumRep$-repeated verifier from \cref{alg:repeated-linear-ALMSS} with probability at least $1-\eps$.

Define $\Flat\NSStrategy = \flatten{\CorrectNSStrategyRep[\NumRep]}$
to be the flattening of $\CorrectNSStrategyRep[\NumRep]$, as per \cref{def:flatten}.
Then, by \cref{claim:flatten-linear-proof}
the function $\Flat\NSStrategy \colon \Bits^{\NumWires^2} \to \Bits$ is $(1-(4+3\cdot8)\eps)$-linear $(8\eps, \NumRep)$-non-signaling.
Furthermore, since $\CorrectNSStrategyRep[\NumRep]$ is $(1-8\eps)$-consistent,
and satisfies the linear repeated verifier from \cref{alg:repeated-linear-ALMSS} with probability at least $1-\eps$,
it follows that $\Flat\NSStrategy$ satisfies the (non-repeated) linear verifier
from \cref{alg:linear-ALMSS} with probability at least $1-9\eps$.

Next, we use \cref{hyp:rounding} to round $\Flat \NSStrategy$ to an exactly non-signaling function $\NSStrategy$ close to it.
Specifically,
since $\Flat\NSStrategy$ is $(1-28\eps)$-linear $(8\eps, \NumRep)$-non-signaling,
by \cref{hyp:rounding} there exist $\NumRep'$-non-signaling function
$\NSStrategy \colon \Bits^{\NumWires^2} \to \Bits$ for $\NumRep' \geq \NumRep^{\expHypothesis} = K' \log^{2}(\NumWires)$,
such that $\Delta_{4}(\Flat \NSStrategy, \NSStrategy) \leq \eps'$, where $\eps' = \eps'_{\hyp}(28\eps)$.
In particular, since $\Flat \NSStrategy$ is $(1-28\eps)$-linear, it follows that $\NSStrategy$ is $(1-28\eps-\eps')$-linear,
and satisfies the PCP verifier from \cref{alg:linear-ALMSS} with probability at least $1-9\eps-\eps'$.

Next, we apply the following theorem on almost linear non-signaling functions from~\cite{ChiesaMS19}.
The theorem says that any almost linear function $\NSStrategy$ can be ``rounded'' into an exactly non-signaling function $\QuasiLin$,
such that the two are close to each other on predicates that depend on a small number of coordinates.

\begin{theorem}[Theorem~7 in \cite{ChiesaMS19}]\label{thm:almost-lin-function}
Let $\NumRep', \linlocalityhat\in \N$ and $\eps \in (0,1/400]$ be such that $\NumRep' = \Omega(\frac{\linlocalityhat}{\eps} \cdot ( \linlocalityhat + \log \frac{1}{\eps} ))$. Suppose that $\NSStrategy \colon \Bits^n \to \Bits$ is a $\NumRep'$-non-signaling function such that for all $x,y \in \Bits^n$ it holds that $\Pr[\NSStrategy(x) + \NSStrategy(y) = \NSStrategy(x+y)] \geq 1 - \eps$. Then there exists a linear $\linlocalityhat$-non-signaling function $\QuasiLin  \colon \Bits^n \to \Bits$ such that for all query sets $Q \seq \Bits^\linlength$ of size $\abs{Q} \leq \linlocalityhat$ and for all events $\Event \seq \Bits^{Q}$ it holds that
\begin{equation*}
    \abs{ \Pr[\NSStrategy(Q) \in \Event]  - \Pr[\QuasiLin(Q) \in \Event] }
    \leq
    (6 \abs{Q} + 3) \sqrt{\eps}
\enspace.
\end{equation*}
\end{theorem}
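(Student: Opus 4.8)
The statement is the non-signaling analogue of the Blum--Luby--Rubinfeld self-correction theorem: given a $\NumRep'$-non-signaling function that is pointwise $(1-\eps)$-linear, one must produce an \emph{exactly} linear $\linlocalityhat$-non-signaling function $\QuasiLin$ whose restriction to every query set of size $\leq \linlocalityhat$ is within $(6|Q|+3)\sqrt{\eps}$ of that of $\NSStrategy$. The plan is: (i) build a candidate $\QuasiLin$ by self-correcting $\NSStrategy$ with \emph{many} independent random shifts; (ii) verify $\QuasiLin$ is a valid $\linlocalityhat$-non-signaling function; (iii) verify it is exactly linear; (iv) bound its distance to $\NSStrategy$ on small query sets. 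It is worth noting at the outset that exact linearity of a non-signaling function means that on every query set its support consists of assignments of the form $\ell|_Q$ for a genuinely linear $\ell$, so we are really producing, locally, a consistent family of distributions over linear functions that match the marginals of $\NSStrategy$ — and one should not hope that the pointwise hypothesis forces $\NSStrategy$ itself to be near-deterministic, since a uniformly random linear function already has fair-coin marginals.

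For the construction I would fix $m = \Theta\!\big(\tfrac{1}{\eps}(\linlocalityhat + \log\tfrac{1}{\eps})\big)$ and sample independent uniform shifts $r_1,\dots,r_m$. To answer a query set $Q = \{x_1,\dots,x_q\}$ with $q \leq \linlocalityhat$, I would query $\NSStrategy$ on the set $\{\, r_j,\; x_i + r_j \, : i \in [q],\, j \in [m]\,\}$ together with the auxiliary points needed to relate different shifts (the pairwise sums $r_i + r_j$ and similar), set $W_j(x_i) = \NSStrategy(x_i + r_j) + \NSStrategy(r_j)$, take the plurality value over $j \in [m]$, and finally condition on the high-probability event that the resulting assignment to $Q$ is linear-consistent. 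This is exactly where the hypothesis $\NumRep' = \Omega\!\big(\tfrac{\linlocalityhat}{\eps}(\linlocalityhat + \log\tfrac{1}{\eps})\big)$ is used: the number of distinct points fed to $\NSStrategy$ is $O(\linlocalityhat \cdot m)$, which must fit inside $\NSStrategy$'s $\NumRep'$-query budget. Non-signaling of $\QuasiLin$ should then be the routine-but-delicate check that, conditioned on the shifts, the multiset of points actually queried in $\NSStrategy$ is an affine image of a fixed pattern whose law depends only on the affine structure of $Q$; combined with non-signaling of $\NSStrategy$ and with the independence and uniformity of the $r_j$'s, this forces the marginals on $Q \cap Q'$ to agree for any two query sets. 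Exact linearity is then supplied by the final conditioning step.

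The two quantitative facts to establish are, first, that any two shifts give the same self-corrected value with probability $\geq 1 - 2\eps$ — obtained by applying the pointwise-linearity hypothesis to the triples $\{x+r_i,\, x+r_j,\, r_i+r_j\}$ and $\{r_i,\, r_j,\, r_i+r_j\}$ and taking a union bound, which is why the sum points must be included in the query — and, second, by a Markov/second-moment estimate on the number of disagreeing pairs among the $m$ shifts, that the plurality value is stable and the induced assignment to $Q$ is linear-consistent with probability $1 - \mathrm{poly}(\eps)$, so that the conditioning changes the distribution by only $\mathrm{poly}(\eps)$ in total variation. For closeness I would run a hybrid argument over the $q$ coordinates of $Q$, swapping each raw value $\NSStrategy(x_i)$ for its self-corrected plurality value one at a time; each swap costs at most the probability that $\NSStrategy(x_i)$ disagrees with its self-corrected plurality value, and summing over the coordinates together with the cost of the conditioning step yields the bound $(6|Q|+3)\sqrt{\eps}$. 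The $\sqrt{\eps}$ is genuine here: pairwise agreement $\geq 1-O(\eps)$ only pins the plurality value down to an $O(\sqrt{\eps})$-sized minority fraction once one insists on joint, rather than merely pointwise, control over all coordinates of $Q$.

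I expect the main obstacle to be reconciling \emph{exact} non-signaling with \emph{exact} linearity while staying close to $\NSStrategy$. Defining $\QuasiLin_Q$ by completing $Q$ to the affine span of $Q$ would make linearity automatic but is impossible within a $\mathrm{poly}(\linlocalityhat)$ query budget once $\dim(\mathrm{span}(Q))$ is large; a plain single-shift self-correction is non-signaling and close but only $(1 - O(\eps))$-linear, not exactly linear; and any conditioning used to force exactness threatens the consistency of marginals across overlapping query sets. The resolution — and the reason both for the large-$\NumRep'$ requirement and for the $\sqrt{\eps}$ loss, as opposed to the $2^{O(\MaxQueries)}\eps$ bound of the single-shift analysis of \cite{ChiesaMS20} — is to average over $\Theta\!\big(\tfrac{1}{\eps}(\linlocalityhat + \log\tfrac{1}{\eps})\big)$ independent shifts, so that the plurality linear assignment on $Q$ is stable enough that the conditioning costs only $\mathrm{poly}(\eps)$ and commutes with the non-signaling requirement. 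Carrying this averaging argument through uniformly over all query sets of size $\leq \linlocalityhat$, with joint rather than merely pointwise error control, is the technical heart of the proof.
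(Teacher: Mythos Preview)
The paper does not prove this statement: it is quoted verbatim as Theorem~7 of \cite{ChiesaMS19} and used as a black box in the proof of \cref{thm:main-soundness}. The only commentary the paper adds is the remark immediately following the statement, which observes that while the original theorem in \cite{ChiesaMS19} is phrased for the self-correction of a function that merely passes the linearity test on average, the conclusion holds for $\NSStrategy$ itself under the stronger pointwise hypothesis assumed here. So there is no proof in the present paper to compare your proposal against.

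That said, your sketch is in the right spirit for how the cited result is established, and you have correctly identified the central tension (exact linearity versus exact non-signaling versus closeness) and the reason for the $\sqrt{\eps}$ loss and the $\NumRep' = \Omega(\tfrac{\linlocalityhat}{\eps}(\linlocalityhat+\log\tfrac{1}{\eps}))$ requirement. One point to watch in your outline: you propose including all pairwise sums $r_i + r_j$ in the query to $\NSStrategy$, but that is $\Theta(m^2)$ additional points, which for $m = \Theta(\tfrac{1}{\eps}(\linlocalityhat+\log\tfrac{1}{\eps}))$ exceeds the stated budget $\NumRep'$. In the actual argument the pairwise-agreement bound is established analytically---by adding the relevant sum points one pair at a time and invoking non-signaling---rather than by querying all of them simultaneously; the query set used to \emph{define} $\QuasiLin_Q$ contains only the $O(\linlocalityhat \cdot m)$ shifted points. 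You also correctly flag that conditioning on linear-consistency is the delicate step for preserving non-signaling across overlapping query sets; handling this carefully is indeed the technical heart of the proof in \cite{ChiesaMS19}, but it is not reproduced here.
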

\begin{remark}
    Actually, Theorem~7 in \cite{ChiesaMS19} assumes that linearity test accepts $\NSStrategy$ with high probability,
    and the conclusion of the theorem holds for its self-correction $\Correct\NSStrategy$.
    However, if we make the stronger assumption that $\Pr[\NSStrategy(x) + \NSStrategy(y) = \NSStrategy(x+y)] \geq 1 - \eps$
    holds for all $x,y \in \Bits^n$, then by following the proof, it is easy to see that the conclusion holds for $\NSStrategy$, without the self-correction.
\end{remark}

By applying \cref{thm:almost-lin-function} on $\NSStrategy$, and using it for all 4-ary predicates used by \cref{alg:linear-ALMSS},
it follows that there exists a linear $\linlocalityhat$-non-signaling function $\QuasiLin \colon \Bits^n \to \Bits$
that satisfies the PCP verifier from \cref{alg:linear-ALMSS} with probability at least $1-\Correct\eps$
for $\Correct\eps = 1-9\eps-\eps'- (6\cdot4+3)\sqrt{9\eps+\eps'} = 1-O(\sqrt{\eps+\eps'})$.
In particular, if $\NumRep' > K' \log^2(\NumWires)$ for a sufficiently large constant $K'$, then $\linlocalityhat \geq \bar{C}\log(\NumWires)$.
Therefore, if $\eps>0$ is a sufficiently small constant, it follows that
the PCP verifier from \cref{alg:linear-ALMSS} accepts $\QuasiLin$ with probability greater than $39/40$,
and by \cref{thm:nsLPCP-CMS19} we conclude that $\Circuit(x) = 1$.
This completes the proof of \cref{thm:main-soundness}.
\end{proof}

\doclearpage
\section{Conclusions and open problems}
\label{sec:open-problems}

In this paper we establish a conditional result on the existence of a PCP system that is sound against non-signaling proofs with constant locality.
There are several natural research directions left open for future work.

\parhead{Resolving the hypothesis}
The implications of \cref{hyp:rounding} motivates the study of geometry of non-signaling proofs.
In particular, as a natural intermediate step toward settling \cref{hyp:rounding},
one can study the validity of a weaker version of hypothesis, requiring that the rounded proof
is close to the given almost non-signaling proof on all subsets of size at most 2 (instead of 4) assuming that $\NSStrategy$ is linear (instead of almost linear),
i.e., requiring that $\Distance{2}(\NSStrategy, \NSStrategy') \leq \eps'$.
We remark that although \cref{hyp:rounding} requires that $\Distance{4}(\NSStrategy, \NSStrategy')$ is small,
in fact, it suffices to show that $\Distance{3}(\NSStrategy, \NSStrategy')$ is small,
i.e., prove the hypothesis for subsets of size at most 3.

\parhead{Reducing the alphabet}
While we answer \cref{qst:ns-pcp-thm-exp} affirmatively up to \cref{hyp:rounding}, we may require the proof to be of smaller alphabet.
In the classical PCPs literature, the standard technique for alphabet reduction is known as \emph{proof composition},
where the given ``outer'' proof over large alphabet is composed with a collection of ``inner proofs of proximity''
over small alphabet~\cite{BenSassonGHSV06,DinurR04}. Indeed, this component plays an important role in
the modular proof of the PCP theorem. It would be interesting to apply a similar approach to the non-signaling setting.

\parhead{Extending our approach to polynomial size nsPCPs}
Our PCP construction is based on \emph{exponential-length} PCP construction of \cite{AroraLMSS98},
which encodes proofs using the Hadamard code of exponential length.
The effective proof length or alternatively the number of random bits used by the verifier are very important parameters for downstream applications.
In order to reduce the proof length, it is natural to replace the linear encoding with low-degree polynomial encoding~\cite{BabaiFL91,BabaiFLS91}.
Indeed, \cite{KalaiRR13, KalaiRR14} proved that such an approach gives a PCP system that is sound against non-signaling proofs,
albeit with locality $\polylog(T)$.
It would be interesting to see if the parallel repetition of their verifier is sound against non-signaling proofs with constant locality.

\doclearpage
\bibliographystyle{alpha}
\bibliography{references}

\appendix

\doclearpage
\section{Discussion on \texorpdfstring{\cref{hyp:rounding}}{our hypothesis}}
\label{sec:hypothesis-discussion}

As mentioned in the introduction, we can think of $\MaxQueries$-non-signaling functions as points in the polytope
$P_\MaxQueries \subseteq \mathbb{R}^d$, for $d = \sum_{i=0}^{\InputSize} {\InputSize \choose i} 2^i$,
which corresponds the solutions of the $\MaxQueries$'th level relaxation of the Sherali-Adams hierarchy.
Analogously, we can think of $(\eps, \MaxQueries)$-non-signaling functions as points in the polytope
$P_\MaxQueries^{\eps} \subseteq \mathbb{R}^d$,
which corresponds the solutions of the noisy version of the $\MaxQueries$'th level relaxation of Sherali-Adams hierarchy,
where for any two sets $S,T \seq [\InputSize]$ the marginal distributions induced by $P_S$ on $S \cap T$
is $\eps$-close in total variation distance to the marginal distributions induced by $P_T$ on $S \cap T$.
Then, \cref{hyp:rounding} can be rephrased as follows: for any $p \in P_{\MaxQueries}^{\eps}$ there exists
$p' \in P_{\MaxQueries'}$ such that $\Distance{4}(p, p') \leq \eps'$.

We remark that sensitivity analysis of linear programs has been studied in the past (see, e.g., \cite{Schrijver86} Section 10).
However, the parameters obtained by these results seem to be too weak for our application.
Nonetheless, it is possible that this approach could still work for our setting, since we are looking for an approximate solution with respect to the $\Distance{4}$ distance, which is rather non-standard.

\medskip

In~\cite{ChiesaMS20}, the following lemma, in the same spirit as the hypothesis, was proved.

\begin{lemma}[(see {\cite[Lemma C.3]{ChiesaMS20}})]
\label{lemma:rounding-cms}
For every $(\eps,\MaxQueries)$-non-signalling function $\NSStrategy \colon \Domain \to \Bits$ there exists $\MaxQueries$-non-signalling function $\NSStrategy'$ such that $\Distance{\MaxQueries}(\NSStrategy, \NSStrategy') \leq O(4^\MaxQueries \cdot \eps)$
\end{lemma}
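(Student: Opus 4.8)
The plan is to pass to the \emph{moment representation} of boolean non-signaling functions, in which the consistency constraints become affine, define a natural candidate there, and then restore positivity by mixing in a tiny amount of the uniform non-signaling function. For $S \seq \Domain$ with $\abs S \le \MaxQueries$ and $U \seq S$, write $\mu_S(U) := \Pr_{\NSStrategy_S}[\,f(i)=1 \ \forall i \in U\,]$. Since this event depends only on the restriction to $U = S\cap U$, the $(\eps,\MaxQueries)$-non-signaling property gives $\abs{\mu_S(U)-\mu_T(U)} \le \eps$ whenever $U \seq S\cap T$. The structural fact I would use is that a \emph{genuine} $\MaxQueries$-non-signaling function over $\Bits$ is the same thing as a single map $\mu$ on subsets of $\Domain$ of size $\le \MaxQueries$ with $\mu(\emptyset)=1$ such that, for every $S$ with $\abs S \le \MaxQueries$, the Möbius inversion $p_S(V) := \sum_{U:\,V \seq U \seq S}(-1)^{\abs U-\abs V}\mu(U)$ (identifying $x \in \Bits^S$ with its support $V$) is a probability distribution on $\Bits^S$; indeed, a distribution is determined by its moments, so a single $\mu$ automatically forces all marginals to agree exactly.

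Next I would choose the candidate $\mu(U) := \mu_U(U)$ for all $\abs U \le \MaxQueries$ (so $\mu(\emptyset)=1$). For each fixed $S$, the vector $(\mu(U))_{U\seq S}$ is coordinate-wise $\eps$-close to the honest moment vector $(\mu_S(U))_{U\seq S}$ of $\NSStrategy_S$. Because each Möbius inversion has $\pm1$ coefficients and at most $2^{\MaxQueries}$ terms, the signed measure $p_S$ it produces satisfies $p_S(V) \ge \widetilde p_S(V) - 2^{\MaxQueries}\eps \ge -2^{\MaxQueries}\eps$, where $\widetilde p_S \ge 0$ is the honest distribution obtained from $\mu_S$, while $\sum_V p_S(V)=\mu(\emptyset)=1$ and $\tfrac12\sum_V\abs{p_S(V)-\widetilde p_S(V)} \le O(3^{\MaxQueries}\eps)$. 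So $p_S$ is a signed measure of total mass $1$, statistically $O(3^{\MaxQueries}\eps)$-close to $\NSStrategy_S$, with only $O(4^{\MaxQueries}\eps)$ total negative mass, and the $p_S$'s are mutually exactly consistent because they all come from the single $\mu$.

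To finish, let $\U$ be the uniform $\MaxQueries$-non-signaling function ($\U_S$ uniform on $\Bits^S$, moments $2^{-\abs U}$), set $\lambda := 4^{\MaxQueries}\eps$ (assuming $\eps$ small enough that $\lambda \le 1$, else the claimed bound is vacuous), and define $\NSStrategy'_S := (1-\lambda)p_S + \lambda\,\U_S$. Since moments are linear, $\NSStrategy'$ has the single global moment map $(1-\lambda)\mu + \lambda\mu^{\U}$, so $\{\NSStrategy'_S\}$ has exactly consistent marginals, i.e.\ is non-signaling; and $\NSStrategy'_S(V) \ge -(1-\lambda)2^{\MaxQueries}\eps + \lambda\,2^{-\MaxQueries} \ge 0$ by the choice of $\lambda$, so each $\NSStrategy'_S$ is a genuine probability distribution and $\NSStrategy'$ is a $\MaxQueries$-non-signaling function. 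Finally, for every $S$ with $\abs S \le \MaxQueries$ the triangle inequality gives $\Delta(\NSStrategy_S,\NSStrategy'_S) \le \Delta(\widetilde p_S,p_S) + \lambda\,\Delta(p_S,\U_S) \le O(3^{\MaxQueries}\eps) + \lambda\cdot O(1) = O(4^{\MaxQueries}\eps)$, hence $\Distance{\MaxQueries}(\NSStrategy,\NSStrategy') = O(4^{\MaxQueries}\eps)$, as claimed.

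The crux — and the only genuinely delicate point — is reconciling two competing goals: rounding each local distribution back into the probability simplex, while not destroying the cross-set consistency that makes the family non-signaling. Rounding each $\NSStrategy_S$ to the nearest distribution separately achieves the first but wrecks the second. The resolution is the observation that non-signaling consistency is an \emph{affine} condition, hence stable under convex combinations, together with the fact that $\U$ sits a $4^{-\MaxQueries}$-deep ball inside the non-signaling polytope in moment coordinates; mixing in a $\lambda = \Theta(4^{\MaxQueries}\eps)$ fraction of $\U$ therefore absorbs the negativity created by the Möbius inversion while displacing the function by only $O(4^{\MaxQueries}\eps)$ in statistical distance. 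Tracking the two independent factors of $2^{\MaxQueries}$ — the $\ell_\infty$ operator norm of the Möbius/zeta transform, and the fact that $\U_S$ puts mass only $2^{-\abs S}$ on each atom — is precisely what produces the $4^{\MaxQueries}$ in the statement.
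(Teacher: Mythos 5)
Your proof is correct. Note that this paper does not actually prove \cref{lemma:rounding-cms}; it is imported verbatim from \cite[Lemma~C.3]{ChiesaMS20}, so there is no in-paper argument to compare against, and your write-up stands as a self-contained derivation. The key steps all check out: the zeta/M\"obius correspondence between boolean distributions and their ``all-ones'' moments is a bijection, so a single global moment map $\mu$ with nonnegative M\"obius inversions on every $S$, $\abs{S}\leq\MaxQueries$, does define an exactly non-signaling family (the inversion sums to $\mu(\emptyset)=1$, and marginals agree because they have identical moments); the choice $\mu(U):=\mu_U(U)$ is coordinate-wise $\eps$-close to each honest moment vector by the $(\eps,\MaxQueries)$-non-signaling condition; the inversion has at most $2^{\abs{S\setminus V}}$ terms with $\pm 1$ coefficients, giving pointwise negativity at most $2^{\MaxQueries}\eps$ and total variation drift $O(3^{\MaxQueries}\eps)$; and mixing in a $\lambda=4^{\MaxQueries}\eps$ fraction of the uniform strategy $\U$ (whose atoms have mass at least $2^{-\MaxQueries}$) restores nonnegativity while preserving exact consistency, since both the moment map and the simplex constraints are affine. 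The final triangle inequality, together with the observation that the bound is vacuous when $4^{\MaxQueries}\eps>1$, gives $\Distance{\MaxQueries}(\NSStrategy,\NSStrategy')\leq O(4^{\MaxQueries}\eps)$ as claimed. Your closing remark also cleanly explains where the $4^{\MaxQueries}$ comes from (the $2^{\MaxQueries}$ operator-norm loss of the M\"obius transform times the $2^{-\MaxQueries}$ depth of $\U$ inside the moment polytope), which matches why the cited bound is exponential in $\MaxQueries$ and hence, as discussed after \cref{hyp:rounding}, too weak for the application in this paper.
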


While the guarantee of $O(4^\MaxQueries \cdot \eps)$ on the distance in the lemma is too large for our applications,
\cref{hyp:rounding} is somewhat more specific, and it is plausible that proving it is easier than improving \cref{lemma:rounding-cms}.
We discuss \cref{hyp:rounding} below.

\begin{enumerate}
  \item Note that unlike \cref{lemma:rounding-cms}, \cref{hyp:rounding} assumes that $\NSStrategy$ is almost linear.
    We do not know whether this is essential, however, it is reasonable to believe that being almost linear
    adds constraints on the structure of $\NSStrategy$, thus making it easier to prove \cref{hyp:rounding}.

  \item In \cref{hyp:rounding} the requirement on the distance between the given almost non-signaling function,
    and the rounded function is only on sets of size at most 4. In fact, it is not difficult to see that proving that
    $\Distance{3}(\NSStrategy, \NSStrategy') \leq \eps'$ also suffices for the applications.
    This seems to be a significant relaxation compared to $\Distance{\MaxQueries}$ proved in \cref{lemma:rounding-cms}.


  \item In fact, our proof of soundness would go through even with a weaker version of the hypothesis,
  where we replaced the ``worst-case'' notion of $\Distance{4}$ with the ``average-case''.
  Specifically, given an $(\eps, \MaxQueries)$-almost non-signaling proof $\NSStrategy$ that satisfies
  \emph{every constraint} of the linear ALMSS verifier with high probability,
  we want the rounded proof to satisfy the linear ALMSS verifier
  with high probability with respect to the distribution induced by the verifier
  on the 4-query sets.

  Furthermore, since our almost non-signaling proof $\NSStrategy$ is obtained
  by flattening the repeated proof $\CorrectNSStrategyRep[\NumRep]$,
  we may assume that $\NSStrategy$ satisfies
  every constraints of the $\Omega(\MaxQueries)$-sequential repetition of the linearity test,
  i.e., for some $\ell = \Omega(\MaxQueries)$ it holds that
  \begin{equation*}
  \forall x_1, y_1, \dots, x_{\ell}, y_{\ell} \in \Bits^\InputSize
  \quad
  \Pr\left[\NSStrategy(x_i) + \NSStrategy(y_i) = \NSStrategy(x_i+y_i)  \quad \forall i \in [\ell] \right] \geq 1 - \eps
  \enspace,
  \end{equation*}
  and, similarly, $\NSStrategy$ satisfies \emph{every constraint} of the $\Omega(\NumRep)$-sequential repetition
  of the linear ALMSS verifier with high probability,
  and the goal is to get a rounded proof to satisfy the linear ALMSS verifier
  with high probability with respect to the distribution induced by the verifier
  on the 4-query sets.

  \item An alternative way to prove our main theorem is to prove that \cref{thm:nsLPCP-CMS19}
  holds for almost non-signaling proofs. This question seems to be well motivated by the application to \emph{delegation of computation}.
  Indeed, Kalai et.~al~\cite{KalaiRR14} constructed PCP systems (of polynomial size)
  that are sound against $(\eps, \polylog(\NumWires))$-non-signaling proofs,
  for some negligible $\eps>0$. However, their proof seems to break for constant $\eps>0$.
  Our work motivates studying the power of almost non-signaling proofs for constant $\eps>0$.
\end{enumerate}

\end{document}